\keywords{Probabilistic Databases, Possible Worlds Semantics, Query Measurability, Relational Algebra, Aggregation, Infinite Probabilistic Databases}
\theoremstyle{plain}\newtheorem{cond}[thm]{Condition}
\theoremstyle{definition}
      \tcb@hack@currenvir\tabular{#1}},
\endtabular\arrayrulecolor{black}},
\newcolumntype{Y}{>{\raggedleft\arraybackslash}X}
\colorlet{lgray}{black!12!white}
\colorlet{llgray}{black!4!white}
\tikzset{ point/.style={inner sep=0pt,circle,fill,minimum width=4pt} }
\colorlet{shade0}{black!5!white}
\colorlet{shade1}{black!10!white}
\colorlet{shade2}{black!20!white}
\colorlet{shade3}{black!50!white}
\Crefname{defi}{Definition}{Definitions}
\Crefname{fact}{Fact}{Facts}
\Crefname{asm}{Assumption}{Assumptions}
\Crefname{lem}{Lemma}{Lemmas}
\Crefname{cor}{Corollary}{Corollaries}
\Crefname{prop}{Proposition}{Propositions}
\Crefname{thm}{Theorem}{Theorems}
\Crefname{itm}{Item}{Items}
\Crefname{exa}{Example}{Examples}
\Crefname{cond}{Condition}{Conditions}
\Crefname{ptys}{Properties}{Properties}
\Crefname{rem}{Remark}{Remarks}
\newcommand*{\centertext}[1]{\qquad\text{#1}\qquad}
\let\from\colon
\let\with\colon
\let\after\circ
\let\phi\varphi
\let\epsilon\varepsilon
\xdef \csname \x \endcsname			{ \noexpand\mathcal{\x} }%
\xdef \csname \x\x \endcsname			{ \noexpand\mathbb{\x} }%
\xdef \csname \x\x\x \endcsname		{ \noexpand\mathfrak{\x} }%
\xdef \csname \x\x\x\x \endcsname	{ \noexpand\bm{\x} }%
\let\oldFF\FF
\RenewDocumentCommand{\FF}{o}{%
	\IfNoValueTF{#1}{\oldFF}{\oldFF\sb{[#1]}}%
}
\let\oldTT\TT
\RenewDocumentCommand{\TT}{o}{%
	\IfNoValueTF{#1}{\oldTT}{\oldTT\sb{[#1]}}%
}
\NewDocumentCommand{\DB}{o}{%
	\IfNoValueTF{#1}{\mathord{\DD\BB}}{\mathord{\DD\BB}\sb{[#1]}}%
}
\NewDocumentCommand{\DDB}{o}{%
	\mathord{\DDD\mkern-2.5mu\BBB}\IfNoValueTF{#1}{}{\sb{[#1]}}%
}
\newcommand*{\Rel}{\mathalpha{\mathsf{Rel}}}
\newcommand*{\Att}{\mathalpha{\mathsf{Att}}}
\newcommand*{\REL}[1]{\text{\normalfont\scshape #1}}
\newcommand*{\ATT}[1]{\text{\normalfont\rmfamily #1}}
\newcommand*{\STR}[1]{\text{\small\ttfamily #1}}
\newcommand*{\Borel}{\mathfrak{B\mkern-1muo\mkern-.2mur}}
\newcommand*{\Count}{\mathfrak{C\mkern-.2muo\mkern-.5muu\mkern-.4mun\mkern-.5mut}}
\DeclareMathOperator{\ar}{ar}
\DeclareMathOperator{\dom}{dom}
\DeclareMathOperator{\proj}{proj}
\DeclareMathOperator{\sort}{sort}
\DeclareMathOperator{\sym}{sym}
\renewcommand*{\restriction}{\mathord{\upharpoonright}}
\newcommand*{\counting}[2]{ \#( #1, #2 ) }
\newcommand*{\PDBs}{\bm{\mathsf{PDB}}}
\DeclarePairedDelimiter{\set}{\{}{\}}
\DeclarePairedDelimiter{\card}{\lvert}{\rvert}
\DeclarePairedDelimiterX{\dbraces}[1]{\lbrace}{\rbrace}
{%
	\nbrace{\lbrace}{#1}\delimsize\lbrace\mathopen{}%
	#1%
	\mathclose{}\delimsize\rbrace\nbrace{\rbrace}{#1}%
}
\newcommand{\dummydelim}[2]{$\left#1\vphantom{#2}\right.$}
\newcommand{\nbrace}[2]
{\sbox0{\dummydelim{#1}{#2}}\hspace{\the\dimexpr -0.85\wd0 + 2pt\relax}}
	\def\bag{\@ifstar\@bag\@@bag}
	\def\@bag#1{\dbraces{\smash{#1}}}
	\def\@@bag#1{\dbraces*{#1}}
\newcommand*{\powerset}{\mathcal{P}}
\newcommand*{\powerbag}{\mathcal{B}}
\newcommand*{\compl}{c}
\newcommand*{\fin}{\mathsf{fin}}
\newcommand*{\quotient}[2]{#1\mkern-2mu/\mkern-1mu{#2}}
\let\epsilon\varepsilon
\DeclarePairedDelimiterXPP{\bagren}[2]{\rho_{#1}}{\lparen}{\rparen}{}{#2}
\newcommand*{\bagadd}	[2]	{\ensuremath{ #1 \uplus #2 }}
\newcommand*{\bagmax}	[2]	{\ensuremath{ #1 \cup #2 }}
\newcommand*{\bagdiff}	[2]	{\ensuremath{ #1 - #2 }}
\newcommand*{\bagmin}	[2]	{\ensuremath{ #1 \cap #2 }}
\DeclarePairedDelimiterXPP{\bagded}[1]{\delta}{\lparen}{\rparen}{}{#1}
\DeclarePairedDelimiterXPP{\bagsel}[2]{\sigma_{#1}}{\lparen}{\rparen}{}{#2}
\DeclarePairedDelimiterXPP{\bagproj}[2]{\pi_{#1}}{\lparen}{\rparen}{}{#2}
\newcommand*{\bagprod}	[2]	{\ensuremath{ #1 \times #2 }}
\DeclarePairedDelimiterXPP{\bagagg}[2]{\varpi_{#1}}{\lparen}{\rparen}{}{#2}
\newcommand*{\aggr}		[1]	{\ensuremath{ \bm{\mathsf{ #1 }} }}
\xdef\csname\x\endcsname{ \noexpand\aggr{\x} }%
\begin{document}

\title[Infinite Probabilistic Databases]{Infinite Probabilistic Databases\rsuper*}
\titlecomment{{\lsuper*}Article version of the paper \emph{Infinite Probabilistic Databases}~\cite{GroheLindner2020}
presented at ICDT 2020}

\author[M.~Grohe]{Martin Grohe}
\author[P.~Lindner]{Peter Lindner}
\address{RWTH Aachen University, Ahornstra\ss{}e 55, 52074 Aachen, Germany}
\email{\{\texttt{grohe},\texttt{lindner}\}\texttt{@informatik.rwth-aachen.de}}

\begin{abstract}
Probabilistic databases (PDBs) model uncertainty in data in a quantitative way. 
In the established formal framework, probabilistic (relational) databases are 
finite probability spaces over relational database instances. This finiteness
can clash with intuitive query behavior (Ceylan et al., KR~2016), and with 
application scenarios that are better modeled by continuous probability 
distributions (Dalvi et al., CACM~2009). 

We formally introduced infinite PDBs in (Grohe and Lindner, PODS~2019) with a
primary focus on countably infinite spaces. However, an extension beyond 
countable probability spaces raises nontrivial foundational issues concerned
with the measurability of events and queries and ultimately with the question
whether queries have a well-defined semantics.

We argue that finite point processes are an appropriate model from probability 
theory for dealing with general probabilistic databases. This allows us to
construct suitable (uncountable) probability spaces of database instances in a
systematic way. Our main technical results are measurability statements for 
relational algebra queries as well as aggregate queries and Datalog queries.
\end{abstract}

\maketitle

\section{Introduction}
Probabilistic databases (PDBs) are used to model uncertainty in data. Such
uncertainty can have various reasons like, for example, noisy sensor data, the
presence of incomplete or inconsistent information, or information gathered
from unreliable sources~\cite{Aggarwal2009,Suciu+2011}. In the standard formal
framework, probabilistic databases are finite probability spaces whose sample
spaces consist of database instances in the usual sense, referred to as
\enquote{possible worlds}.  However, this framework has various shortcomings
due to its inherent \emph{closed-world assumption}~\cite{Ceylan+2016,Ceylan+2021}, and the
restriction to finite domains. In particular, any event outside of the finite
scope of such probabilistic databases is treated as an impossible event. Yet,
statistical models of uncertain data, say, for example, for temperature 
measurements as in \cref{exa:running}, usually feature the use of continuous
probability distributions in appropriate error models. This (continuous
attribute-level uncertainty) is not expressible in the traditional PDB model.
Finite PDBs also only have limited support for tuple-level uncertainty: in
finite PDBs, all possible worlds have a fixed maximum number of tuples.
Instead, in particular with respect to an open-world assumption, we would like
to be able to model probabilistic databases without an a priori bound on the
number of tuples per instance. It is worth noting that there have been a number
of approaches to PDB systems that are supporting continuous probability
distributions, and hence going beyond finite probability spaces (see the
related works section). These models, however, lack a general (unifying) formal
basis in terms of a possible worlds semantics~\cite{Dalvi+2009}. While both
open-world PDBs and continuous probability distributions in PDBs have received
some attention in the literature, there is no systematic joint treatment of
these issues with a sound theoretical foundation. In~\cite{GroheLindner2019},
we introduced an extended model of PDBs as arbitrary (possibly infinite)
probability spaces over finite database instances. However, the focus there was
on countably infinite PDBs. An extension to continuous PDBs, which is necessary
to model probability distributions appearing in many applications that involve
real-valued measurement data, raises new fundamental questions concerning the
measurability of events and queries.

\medskip

In this paper, we lay the foundations of a systematic and sound treatment of
infinite, even uncountable, probabilistic databases. In particular, we prove
that queries expressed in standard query languages have a well-defined
semantics that is compatible with the existing theoretical point of view.
Our model is based on the mathematical theory of finite point
processes~\cite{Moyal1962,Macchi1975,DaleyVere-Jones2003}. Adopting this theory to the
context of relational databases, we give a suitable construction of measurable
spaces over which our probabilistic databases can then be defined. The only
(and mild) assumption we need is that the domains of all attributes are Polish
spaces. Intuitively, this requires them to have nice topological properties.
All typical domains one might encounter in database theory, for example 
integers, strings, and reals, satisfy this assumption. 

For queries and views to have a well-defined open-world semantics, we need them
to be measurable mappings between probabilistic databases. Our main technical
result states that indeed all queries and views that can be expressed in the
relational algebra, even equipped with arbitrary aggregate operators
(satisfying some mild measurability conditions) are measurable mappings. The
result holds for both a bag-based and set-based relational algebra and entails
the measurability of Datalog queries.

Measurability of queries may seem like an obvious minimum requirement,
but one needs to be very careful. We give an example of a simple, innocent
looking ``query'' that is not measurable (see Example~\ref{ex:nonmeas}).
The proofs of the measurability results are not trivial, and to our knowledge
not immediately covered by standard results from point process theory. At their 
core, the proofs are based on finding suitable ``countable approximations'' of 
the queries. That such approximations can be obtained is guaranteed by our
topological requirements.

In the last section of this paper, we briefly discuss queries for probabilistic
databases that go beyond those that are just set-based versions of traditional 
database queries. This also casts other natural PDB queries into our framework.
Examples of such a queries are probabilistic threshold queries and rank
queries. Note that these examples refer not only to the facts in a database,
but also to their probabilities, and hence are inherently probabilistic.

\bigskip

This article is an extended version of the paper with the same title,
\emph{Infinite Probabilistic Databases} that was presented at the 23rd 
International Conference on Database Theory (ICDT 2020)~\cite{GroheLindner2020}.
The presentation of various proofs and arguments has
been extensively reworked, and the paper contains much more background
information and new examples. The overall accessibility of the paper has been
additionally enhanced by many notational and structural improvements.  
\subsection*{Related Work}
Early models for probabilistic databases date back to the
1980s~\cite{Wong1982,GelenbeHebrail1986,CavalloPittarelli1987} and
1990s~\cite{Barbara+1992, Pittarelli1994,DeySarkar1996,FuhrRolleke1997,Zimanyi1997}.
Essentially, they are special cases or variations of the now-acclaimed formal
model of probabilistic databases as a finite set of database instances (the
\enquote{possible worlds}) together with a probability
distribution~\cite{Aggarwal2009,Suciu+2011}.

The work~\cite{Koch2008} presents a formal definition of the probabilistic
semantics of relational algebra queries which is used in the MayBMS
system~\cite{Koch2009}. A probabilistic semantics for Datalog has already been
proposed in the mid-90s~\cite{Fuhr1995}. More recently, a version of Datalog
was considered in which rules may fire probabilistically~\cite{Deutch+2010}.
Aggregate queries in probabilistic databases were first treated systematically
in~\cite{Ross+2005} and reappear in various works concerning particular PDB
systems~\cite{Murthy+2011,Fink+2012}.

The models of possible worlds semantics mentioned above are the mathematical
backbone of existing probabilistic database prototype systems such as
MayBMS~\cite{Koch2009}, Trio~\cite{Widom2009} and MystiQ~\cite{Boulos+2005}.
Various subsequent prototypes feature uncountable domains as well, such as
Orion~\cite{Singh+2008}, MCDB / SimSQL~\cite{Jampani+2011,Cai+2013},
Trio~\cite{Widom2009,AgrawalWidom2009} and PIP~\cite{KennedyKoch2010}. The MCDB
system in particular allows programmers to specify probabilistic databases with 
infinitely many possible worlds, and with database instances that can grow 
arbitrarily large~\cite{Jampani+2011}. Its description does not feature a 
general formal, measure theoretic account of its semantics though. In a spirit
similar to our work, the work~\cite{Tran+2012} introduces a measure theoretic
semantics for probabilistic data stream systems with probability measures
composed from Gaussian mixture models but (to our knowledge) on a per tuple
basis and without the possibility of inter-tuple correlations. Continuous
probabilistic databases have already been considered earlier in the context of
sensor networks~\cite{Faradjian+2002,Cheng+2003,Deshpande+2004}.  The first
work to formally introduce continuous possible worlds semantics (including
aggregation) is~\cite{Abiteboul+2011} for probabilistic XML. However, the
framework has an implicit restriction bounding the number of tuples in a PDB.

Models similar in expressivity to the one we present have also been suggested
in the context of probabilistic modeling languages and probabilistic
programming~\cite{Milch+2005,Milch2006,RichardsonDomingos2006,DeRaedt+2016,Belle2020,Gordon+2014}.
In particular notable are the measure theoretic treatments of Bayesian Logic
(BLOG)~\cite{Milch+2005} in~\cite{Wu+2018} and Markov Logic Networks (MLNs)~\cite{RichardsonDomingos2006} in~\cite{SinglaDomingos2007}. While these data
models are relational, their suitability for general database applications 
remains unclear. In particular, the investigation of typical database queries
is beyond the scope of the corresponding works. Recently, point processes in
particular have been abstractly investigated for probabilistic
programming~\cite{DashStaton2020}.

Problems caused by the closed-world assumption~\cite{Reiter1981} in
probabilistic databases~\cite{ZimanyiPirotte1997} were initially discussed by
Ceylan et al. in~\cite{Ceylan+2016} where they suggest the model of OpenPDBs.
In~\cite{Borgwardt+2018}, the authors make a more fine-grained distinction
between an \emph{open-world} and \emph{open-domain assumption}, the latter of
which does not assume the attribute values of the database schema to come from
a known finite domain. The semantics of OpenPDBs can be strengthened towards an
open-domain assumption by the means of
ontologies~\cite{Borgwardt+2017,Borgwardt+2018,Borgwardt+2019}.

There is a tight connection between the problem of computing the probability of
Boolean queries in (finite) probabilistic databases and \emph{weighted
first-order model counting (WFOMC)}~\cite{VandenBroeckSuciu2017}. For this 
connection, the \emph{lineage} of the query in question is considered. Then
probabilistic query evaluation in finite PDBs boils down to weighted counting
of the models of the lineage. While weighted model counting it itself is
restricted to discrete weight functions, \emph{weighted model integration
  (WMI)} is an extension that also supports uncountable
domains~\cite{Belle+2015,Morettin+2019}. However, a direct connection to our infinite 
PDBs seems hard to obtain, as queries in infinite PDBs in general do not have
finite lineage expressions. In particular, our infinite PDBs may have an
unbounded number of random tuples, contrasting the setup of a fixed number of
variables in a WMI problem. Also, in this paper we introduce a rather general
notion of PDB queries, that covers, for example, parity tests and various
fixpoint queries.

Our classification of views towards the end of this paper is similar to 
previous classifications of queries such
as~\cite{Cheng+2003,WandersVanKeulen2015} in the sense that it distinguishes the
level on which information is aggregated. The work~\cite{WandersVanKeulen2015}
suggests a distinction between \enquote{traditional} and \enquote{out-of-world
aggregation} similar to the one we present.

\section{Preliminaries}

Throughout this paper, $\NN$, $\QQ$, and $\RR$ denote the sets of non-negative
integers, rationals, and real numbers, respectively. With $\NN_+$, $\QQ_+$, and
$\RR_+$ we denote their restrictions to positive numbers. 

\subsection*{Sets and Bags}
If $S$ is a set and $k \in \NN$, then $\powerset_k( S )$ denotes the set of all
subsets of $S$ that have cardinality exactly $k$. The set of all \emph{finite} 
subsets of $S$ is then given as $\powerset_{ \fin }( S ) \coloneqq \bigcup_{ i
= 0 }^{ \infty } \powerset_{ k }( S )$. The set of \emph{all} subsets of $S$
(that is, the \emph{powerset} of $S$) is denoted by $\powerset( S )$.

A (finite) \emph{bag} (or \emph{multiset}) $B$ over a set $S$ is a function 
from $S$ to $\NN$, assigning a \emph{multiplicity} to every element from $S$. 
We interpret bags as collections of elements that may contain duplicates. 
For $s \in S$ we let $\card{ B }_s$ denote the multiplicity of $s$ in $B$. If
$\SSSS \subseteq S$, we let $\card{ B }_{ \SSSS } \coloneqq \sum_{ s \in \SSSS
} \card{ B }_s$. The \emph{cardinality} $\card{ B }$ of the bag $B$ is the sum
of all multiplicities, i.\,e.\ $\card{ B } \coloneqq \card{ B }_S$.  

Similar to the set notation, we use $\powerbag_{ k }( S )$ to denote the set of
all bags over $S$ that have cardinality exactly $k$. The set of all
\emph{finite} bags over $S$ is given as $\powerbag_{ \fin }( S ) \coloneqq
\bigcup_{ i = 0 }^{ \infty } \powerbag_{ k }( S )$. Occasionally, we explicitly
denote bags by the elements they contain. Then $\bag{ a_1, \dots, a_k }$
denotes a bag of cardinality $k$ with elements $a_1, \dots, a_k$ (possibly
including repetitions).

\subsection*{Relational Databases}

For the remainder of this paper, we fix two countably infinite sets $\Rel$ and
$\Att$ with $\Rel \cap \Att = \emptyset$. The elements of $\Rel$ are called
\emph{relation symbols}, and the elements of $\Att$ are called \emph{attribute 
names}.

A \emph{database schema} $\tau$ is a tuple $(\A, \R, \sort)$ such that $\A$ is 
a finite subset of $\Att$, $\R$ is a finite subset of $\Rel$, and $\sort \from
\R \to \bigcup_{ k = 0 }^{ \infty } {\A}^k$ is a function that maps every 
relation symbol $R \in \R$ to a tuple $(A_1,\dots,A_k)$ of pairwise distinct
attribute names $A_1,\dots,A_k \in \A$ for some $k \in \NN$. For $R \in \R$
we call $\sort( R )$ the \emph{sort} of $R$.  If $\sort(R) = (A_1, \dots,
A_k)$, then $k$ is called the \emph{arity} of $R$, and denoted by $\ar(R)$. We
abuse notation and write $A \in \sort( R )$ if $A$ is an attribute name
appearing in $\sort(R)$. We also write $R \in \tau$ instead of $R \in \R$ for
relation symbols.

Let $\A$ be a subset of $\Att$. A \emph{(sorted) universe} with sorts $\A$ is a
pair $(\UU, \dom)$ where $\UU$ is a non-empty set and $\dom \from \A \to
\powerset( \UU )$. We abuse notation and refer a sorted universe $( \UU, 
\dom )$ just by $\UU$, and implicitly assume $\dom$ given. For $A \in \A$, we
call $\dom( A ) \eqqcolon \AA$ the \emph{(attribute) domain} of $A$. If $\tau =
( \A, \R, \sort )$ is a database schema and $\UU$ a universe with sorts $\A$
and $R \in \R$ with $\sort( R ) = ( A_1, \dots, A_k )$, then
\begin{equation}\label{eq:domR}
	\TT[R,\UU] \coloneqq \AA_1 \times \dots \times \AA_k
\end{equation}
is called the \emph{domain} of $R$. The elements of $\TT[R,\UU]$ are called
\emph{$R$-tuples} (over $\UU$). If $R$ is a relation of sort $\sort(R) =
(A_1,\dots,A_k)$, and $t = (a_1,\dots,a_k)$ is an $R$-tuple, then 
\[
	t[ A_{i_1}, \dots, A_{i_\ell} ] \coloneqq ( a_{i_1}, \dots, a_{i_\ell} )
\]
denotes the restriction of $t$ to the attributes $A_{i_1},\dots,A_{i_\ell}$ for
all $1 \leq i_1 < \dots < i_{\ell} \leq k$ and all $\ell = 1, \dots, k$.

For $R \in \R$, an \emph{$R$-fact} (over $\UU$) is an expression of the shape
$R( t )$ where $t$ is an $R$-tuple over $\UU$. If $\TTTT \subseteq
\TT[\tau,\UU]$ is a set of $R$-tuples over $\UU$, then 
\[
	R( \TTTT ) \coloneqq \set[\big]{ R(t) \with t \in \TTTT }
	\text.
\]
In particular, $R\big( \TT[R,\UU] \big)$ is the set of all $R$-facts over
$\UU$, which is formally given as 
\begin{equation}\label{eq:domFR}
	\FF[R,\UU] \coloneqq \set{ R } \times \TT[\tau,\UU]
	\text.
\end{equation}
The elements of 
\[
	\FF[\tau,\UU] \coloneqq \bigcup_{R \in \tau} \FF[R,\UU]
\]
are called \emph{$\tau$-facts} (or just \emph{facts}) over $\UU$. We use
(variants of) the letter $f$ to denote facts.

A \emph{database instance} $D$ over $\tau$ and $\UU$, or
\emph{$\tau$-instance} over $\UU$ is a finite bag of $\tau$-facts over $\UU$. 
Thus, 
\[
	\DB[\tau,\UU] \coloneqq \powerbag_{ \fin }\big( \FF[\tau,\UU] \big)
\]
is the set of all $\tau$-instances over $\UU$. For $R \in \tau$ and $D \in
\DB[\tau,\UU]$, we let $R( D )$ denote the restriction of $D$ to $R$-facts.
That is, the instance $R(D)$ is given by
\[
	\card{ R(D) }_f \coloneqq
	\begin{cases}
		\card{ D }_f 	& \text{if } f \in \FF[R,\UU]\text{ and}\\
		0					& \text{if }f \notin \FF[R,\UU]
	\end{cases}
\]
for all $f \in \FF[\tau,\UU]$.

From all of the notation we defined above, we omit the explicit mention of
$\tau$ and $\UU$ if they are clear from the context, and just write $\TT$,
$\TT_R$, $\FF$, $\FF_R$ and $\DB$ instead of $\TT[\tau,\UU]$, $\TT[R,\UU]$,
$\FF[\tau,\UU]$, $\FF[R,\UU]$ and $\DB[\tau,\UU]$. \Cref{tab:dbnota} recaps the
notation we just defined.

\begin{table}[H]
	\caption{Notation used for relational databases.}\label{tab:dbnota}
	\begin{tabular}{ll}\toprule
		\textbf{Concept} & \textbf{Notation} \\\midrule
		Database schema 
			& $\tau = (\A, \R, \sort)$ with $\A \subseteq \Att$, $\R \subseteq \Rel$ \\
			\rowcolor{llgray}
		Attribute names
			& $A,A_1, A_2, \dots \in \A$\\
		Relation names
			& $R,S,R_1,R_2,\dots \in \R$\\
			\rowcolor{llgray}
		Sort of a relation name $R \in \R$ 
			& $\sort(R) \in \bigcup_{ k \in \NN } \A^k$\\
		Arity of a relation name $R \in \R$
			& $\ar(R) \in \NN$\\
			\rowcolor{llgray}
		Underlying universe 
			& $\UU$\\
		Space of all tuples over $(\tau, \UU)$	
			& $\TT[\tau,\UU]$, $\TT$\\
			\rowcolor{llgray}
		Space of all $R$-tuples over $(\tau,\UU)$, $R \in \R$
			& $\TT[R,\UU]$, $\TT_R$\\
		Tuples and restricted tuples 
			& $t, t_1, t_2, \dots$ and $t[A_{i_1},\dots,A_{i_k}]$\\
			\rowcolor{llgray}
		Space of all facts over $(\tau, \UU)$
			& $\FF[\tau,\UU]$, $\FF$\\
		Space of all $R$-facts over $(\tau, \UU)$, $R \in \R$
			& $\FF[R,\UU]$, $\FF_R$\\
			\rowcolor{llgray}
		Facts
			& $f,f_1,f_2,\dots$ \\
		Space of database instances over $(\tau,\UU)$
			& $\DB[\tau,\UU]$, $\DB$\\
		\bottomrule
	\end{tabular}
\end{table}

In~\cite{GroheLindner2019}, we used an example of temperature recordings to
motivate infinite probabilistic databases, to illustrate the setup, and to
highlight some observations. With some customization, we adopt this example to
also serve as a running example for this paper. 

\begin{exa}\label[exa]{exa:running}
	We consider a database that stores information about the rooms in an office
	building. An example database instance is shown in \cref{fig:temprec}. 

	\begin{figure}[H]
		\begin{tcbraster}[raster columns=2, raster force size=false, raster column skip=1cm, raster halign=center, nobeforeafter]
				\begin{tcolorbox}[hbox,fancy table, tabular={c c},title={\strut$\REL{Office}$},before upper app={\rowcolor{llgray}}]
			\ATT{RoomNo}	& \ATT{Person}		\\\hline\hline
			\STR{4108}		& \STR{Alice}		\\\hline
			\STR{4108a}		& \STR{Bob}			\\\hline
			\STR{4109}		& \STR{Charlie}	\\
		\end{tcolorbox}
		\begin{tcolorbox}[hbox,fancy table, tabular={c c c},title={\strut$\REL{TempRec}$},before upper app={\rowcolor{llgray}}]
			\ATT{RoomNo}	& \ATT{Date}					& \ATT{Temp [\textdegree{}C]}\\\hline\hline
			\STR{4108}		& \STR{2021-07-12}		& \STR{20.5}\\\hline
			\STR{4108a}		& \STR{2021-07-12}		& \STR{21.0}\\\hline
			\STR{4109}		& \STR{2021-07-12}		& \STR{21.1}\\\hline
			\STR{4109}		& \STR{2021-07-13}		& \STR{20.8}\\
		\end{tcolorbox}
		\end{tcbraster}
		\caption{A database instance $D$, storing temperature recordings in an
			office building.}\label{fig:temprec}
	\end{figure}

	The database schema $\tau$ consists of 
	\begin{itemize}
		\item relation symbols $\R = \set{ \REL{Office}, \REL{TempRec} }$, 
		\item attribute names $\A = \set{ \ATT{RoomNo}, \ATT{Person}, \ATT{Date},
			\ATT{Temp} }$, and
		\item sorts $\sort( \REL{Office} ) = ( \ATT{RoomNo}, \ATT{Person} )$ and
			$\sort( \REL{TempRec} ) = ( \ATT{RoomNo}, \ATT{Date}, \ATT{Temp} )$.
	\end{itemize}
	The sorted universe is given by $(\UU,\dom)$ with $\UU = \Sigma^* \cup \RR$
	where $\Sigma$ is some alphabet, say, the set of ASCII symbols, and (for
	simplicity) $\dom(\ATT{RoomNo}), \dom(\ATT{Person}), \dom(\ATT{Date})
	\subseteq \Sigma^*$ and $\dom( \ATT{TempRec} ) = \RR$.

	For example facts $f_1 = \REL{Office}( \STR{4108}, \STR{Bob} )$ and $f_2 =
	\REL{TempRec}( \STR{4108}, \STR{2021-07-12}, \STR{20.5} )$ are both facts
	from $\FF[\tau,\UU]$, and it holds that $f_1 \notin D$ and $f_2 \in D$ for
	the database instance $D$ from \cref{fig:temprec}.
\end{exa}

\subsection*{Measurable Spaces and Functions}

In the following we recap the relevant background and foundations from
probability and measure theory that are needed throughout this paper. For a 
general introduction to measure theory we point the reader to~\cite[Chapter 
1]{Klenke2014} or~\cite[Chapter 1]{Kallenberg2002}. 

Let $\XX \neq \emptyset$. A \emph{$\sigma$-algebra} on $\XX$ is a family $\XXX$
of subsets of $\XX$ such that $\XX\in \XXX$ and $\XXX$ is closed under
complements and countable unions (an equivalent definition is obtained by
replacing \enquote{countable unions} with \enquote{countable intersections}).

\begin{nota}
	Throughout the paper, we use double struck letters ($\XX, \YY, \ZZ, \AA, \BB,
	\dots$) to denote underlying spaces of interest, and fraktur letters ($\XXX,
	\YYY, \ZZZ, \AAA, \BBB, \dots$) to denote set families and $\sigma$-algebras
	over such spaces, in particular. We usually denote subsets of the underlying
	spaces with bold italics letters $(\XXXX, \YYYY, \ZZZZ, \AAAA, \BBBB,
	\dots)$.
\end{nota}

Let $\GGG$ (fraktur \enquote{$G$}) be a set of subsets of $\XX$ (that is, $\GGG
\subseteq \powerset( \XX )$). The \emph{$\sigma$-algebra generated by
$\GGG$} is the coarsest (i.\,e.\ smallest with respect to set inclusion)
$\sigma$-algebra $\XX$ containing $\GGG$. For any $\GGG \subseteq \powerset(
\XX )$, the $\sigma$-algebra generated by $\GGG$ is unique. A \emph{measurable space} is a pair $(\XX,\XXX)$ where $\XX$ is an arbitrary set and $\XXX$ is a $\sigma$-algebra on $\XX$. The sets in $\XXX$ are called \emph{$\XXX$-measurable} (or \emph{measurable}, if $\XXX$ is clear from the context). If $( \YY, \YYY )$ is another measurable space, then a function $\phi \from \XX \to \YY$ is called \emph{$( \XXX, \YYY )$-measurable} (or simply \emph{measurable}, if $\XXX$ and $\YYY$ are clear from the context), if for all $\YYYY \in \YYY$ it holds that 
\[
	\phi^{-1}( \YYYY ) = 
	\set[\big]{ X \in \XX \with \phi( X ) \in \YYYY } \in \XXX
	\text.
\]
The following simple properties are needed throughout the paper.

\begin{fact}[{cf.~\cite[Lemmas 1.4 and 1.7]{Kallenberg2002}}]
	\label[fact]{fac:measurablebasics}
	Let $( \XX, \XXX )$, $(\YY,\YYY)$, $(\ZZ,\ZZZ)$ be measurable spaces.
	\begin{enumerate}
		\item\label[itm]{itm:basics1}
			Suppose $\YYY$ is generated by some $\GGG \subseteq \powerset( \YY )$.
			If $\phi \from \XX \to \YY$ satisfies $\phi^{-1}( \GGGG ) \in 
			\XXX$ for all $\GGGG \in \GGG$, then $\phi$ is measurable.
		\item\label[itm]{itm:basics2}
			Compositions of measurable functions are measurable. That is, if $\phi
			\from \XX \to \YY$ and $\psi \from \YY \to \ZZ$ are measurable, then
			$\psi \after \phi \from \XX \to \ZZ$ is $(\XXX, \ZZZ)$-measurable.
		\qed
	\end{enumerate}
\end{fact}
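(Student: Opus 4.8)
The plan is to derive both items from the single structural observation that taking preimages commutes with all the Boolean operations defining a $\sigma$-algebra, so that measurability can be reduced to the behavior on a generating family. For \cref{itm:basics1} I would invoke the \emph{good sets principle}. Concretely, I would introduce the family
\[
	\CCC \coloneqq \set[\big]{ \YYYY \subseteq \YY \with \phi^{-1}( \YYYY ) \in \XXX }
	\text,
\]
of all subsets of $\YY$ whose preimage under $\phi$ is $\XXX$-measurable, and aim to show $\YYY \subseteq \CCC$, since this is precisely the assertion that $\phi$ is $( \XXX, \YYY )$-measurable.

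The first step is to verify that $\CCC$ is itself a $\sigma$-algebra on $\YY$. This follows from the elementary preimage identities $\phi^{-1}( \YY ) = \XX$, $\phi^{-1}( \YY \setminus \YYYY ) = \XX \setminus \phi^{-1}( \YYYY )$, and $\phi^{-1}\big( \bigcup_{ n } \YYYY_n \big) = \bigcup_{ n } \phi^{-1}( \YYYY_n )$, combined with the fact that $\XXX$, being a $\sigma$-algebra, contains $\XX$ and is closed under complements and countable unions. The hypothesis of the item states exactly that $\GGG \subseteq \CCC$. Since $\CCC$ is a $\sigma$-algebra containing $\GGG$ and $\YYY$ is by definition the coarsest $\sigma$-algebra containing $\GGG$, minimality forces $\YYY \subseteq \CCC$, completing the argument.

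\Cref{itm:basics2} I would then settle directly, with no appeal to generators: for any $\ZZZZ \in \ZZZ$ the identity $( \psi \after \phi )^{-1}( \ZZZZ ) = \phi^{-1}\big( \psi^{-1}( \ZZZZ ) \big)$ reduces the claim to two successive applications of measurability, as $\psi^{-1}( \ZZZZ ) \in \YYY$ by measurability of $\psi$ and hence $\phi^{-1}\big( \psi^{-1}( \ZZZZ ) \big) \in \XXX$ by measurability of $\phi$. Neither part presents a genuine obstacle; the only point demanding a moment's care is the closure check for $\CCC$ in \cref{itm:basics1}, where one must confirm that the countable-union (and complement) closure of $\XXX$ transports to $\CCC$ through the preimage identities. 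This is the heart of the good sets principle and is exactly what licenses checking preimages on the generator $\GGG$ alone rather than on all of $\YYY$.
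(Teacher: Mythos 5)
Your argument is correct and is precisely the standard proof of the cited result: the paper states this as a \cref{fac:measurablebasics} imported from \cite[Lemmas 1.4 and 1.7]{Kallenberg2002} without reproving it, and the good-sets argument you give for \cref{itm:basics1} together with the preimage-of-composition identity for \cref{itm:basics2} is exactly the reasoning behind that reference. Nothing is missing; your use of the good sets principle even matches the terminology the paper itself employs later in the proof of \cref{lem:FFFeps}.
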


\subsection*{Probability and Image Measures}
A \emph{probability measure} on a measurable space $(\XX, \XXX)$ is a 
countably additive function $P \from \XXX \to [0,1]$ with $P( \XX ) = 1$.
Then $( \XX, \XXX, P )$ is called a \emph{probability space}. Measurable 
sets in probability spaces are also called \emph{events}. A measurable 
function $\phi$ from a probability space $( \XX, \XXX, P )$ into a
measurable space $( \YY, \YYY )$ introduces a new probability measure $P'$
on $( \YY, \YYY )$ by
\[
	P'( \YYYY ) 
	= P\big( \phi^{-1}( \YYYY ) \big) 
	= P\big( \set{ X \in \XX \with \phi( X ) \in \YYYY } \big)
\]
for all $\YYYY \in \YYY$. Then $P' = P \after \phi^{-1}$ is called the
\emph{image} or \emph{push-forward probability measure} of $P$ under (or
along) $\phi$. The measurability of $\phi$ ensures that $P'$ is
well-defined.

\subsection*{Standard Constructions}

Throughout this paper, we will encounter a variety of measurable spaces. The
most basic of these are either well-known measurable spaces like countable
spaces with the powerset $\sigma$-algebra or uncountable spaces like $\RR$ with
its Borel $\sigma$-algebra. From these spaces, we then construct more 
complicated measurable spaces with the use of the following standard
constructions.

\begin{enumerate}
	\item \textbf{Product spaces.}
For $i = 1, \dots, n$, let $( \XX_i, \XXX_i )$ be a measurable space. The
\emph{product $\sigma$-algebra}
\[
	\bigotimes_{ i = 1 }^{ n } \XXX_i 
	= \XXX_1 \otimes \dotsc \otimes \XXX_n
\]
is the $\sigma$-algebra on $\prod_{ i = 1 }^{ n } \XX_i$ that is generated by
the sets $\set{ \proj_j^{-1}( \XXXX ) \with \XXXX \in \XXX_j }$ with $j = 1,
\dots, n$ where $\proj_j$ is the canonical projection $\proj_j \from \prod_{ i
= 1 }^{ n } \XX_i \to \XX_j$. If $( \XX_i, \XXX_i ) = ( \XX, \XXX )$ for all
$i = 1, \dots, n$, we also write $\XXX^{ \otimes n }$ instead of $\bigotimes_{
i = 1 }^{ n } \XXX$.

\item \textbf{Disjoint unions.}
If the spaces $\XX_i$ are pairwise disjoint, then the \emph{disjoint union
$\sigma$-algebra} on $\bigcup_{ i = 1 }^{ n } \XX_i$ is given by
\[
	\bigoplus_{ i = 1 }^{ n } \XXX_i \coloneqq
	\set[\big]{ 
		\XXXX \subseteq { \textstyle \bigcup_{ i = 1 }^{ n } } \XX_i \with 
		\XXXX \cap \XX_i \in \XXX_i \text{ for all } i = 1, \dots, n
	}
	\text.
\]
It may be easily verified that this is a $\sigma$-algebra on $\bigcup_{ i = 1
}^{ n } \XX_i$.

\item \textbf{Subspaces.}
Let $( \XX, \XXX )$ be a measurable space and $\XXXX \in \XXX$. Then
\[
	\XXX\rvert_{ \XXXX } 
	\coloneqq \set{ \XXXX' \cap \XXXX \with \XXXX' \in \XXX }
\]
is a $\sigma$-algebra on $\XXXX$.

\end{enumerate}

\subsection*{Polish Topological Spaces}

There are deep connections between measure theory and general topology. In 
fact, virtually everything we discuss and show in this paper relies on the
presence of certain topological properties. The central notion from topology we
need is that of \emph{Polish spaces}~\cite[Chapter~3]{Kechris1995}. In the
following, we assume familiarity with the basic topological concepts. A small 
introduction to the basic terms can be found in \cref{app:topo}. For a thorough
introduction, we refer to~\cite{Willard2004}.

A \emph{Polish space} is a topological space that is separable (i.\,e.\ there
exists a countable dense set), and completely metrizable (i.\,e.\ there is a
complete metric on the space generating its topology). Polish and (the later
introduced) standard Borel spaces are introduced and treated in detail
in~\cite[Chapter 1 \&{} 2]{Kechris1995} and~\cite{Srivastava1998}. In this paper,
we heavily exploit the properties of Polish spaces. The existence of a
countable dense set, together with the existence of a complete metric
generating the topology allows us to approximate any point in the space by
countable collections of open sets. More specifically, for every point in the
space, there exists a sequence over a fixed \emph{countable} set that converges
to the point. 

\medskip

The following fact lists a few natural classes of spaces that are
Polish.

\begin{fact} 
	All finite and countably infinite spaces (with the discrete topology) are
	Polish~\cite[p.~13]{Kechris1995}. The Euclidean space $\RR^n$ is Polish for
	all $n \in \NN_+$~\cite[ibid.]{Kechris1995}.
	
	Countable sums and products of sequences of Polish spaces are Polish
	(subject to the sum and product topologies)
	\cite[Proposition~3.3]{Kechris1995}. Moreover, subspaces of a Polish space
	that are countable intersections of open sets are Polish. This also implies
	that closed subspaces of Polish spaces are Polish, cf.~\cite[Proposition~3.7
	and Theorem~3.11]{Kechris1995}.
\end{fact}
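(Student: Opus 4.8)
This fact collects classical results from the theory of Polish spaces; my plan is to verify each clause by exhibiting a complete metric that generates the topology together with a countable dense set. For a finite or countably infinite set with the discrete topology, the discrete metric ($d(x,y)=1$ whenever $x\neq y$, and $0$ otherwise) induces exactly the discrete topology, every Cauchy sequence is eventually constant and hence converges, and the (countable) space is its own dense subset; so such spaces are Polish. For $\RR^n$, the Euclidean metric is complete by completeness of $\RR$, and $\QQ^n$ is a countable dense set.

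For a sequence $(\XX_i)_{i\in\NN}$ of Polish spaces with complete metrics $d_i$, I would first pass to the topologically equivalent bounded metrics $\min(d_i,1)$, which remain complete. On the product $\prod_i \XX_i$ I would use $d(x,y)=\sum_i 2^{-i} d_i(x_i,y_i)$ (with the $d_i$ now bounded by $1$): this induces the product topology and is complete, because $d$-Cauchy sequences are coordinatewise Cauchy and each factor is complete. Fixing countable dense sets $D_i\subseteq \XX_i$ and base points, the sequences that lie in $\prod_i D_i$ and equal the base point in all but finitely many coordinates form a countable dense set. For the disjoint sum $\bigsqcup_i \XX_i$ I would set $d(x,y)=\min(d_i(x,y),1)$ when $x,y$ lie in a common summand $\XX_i$ and $d(x,y)=2$ otherwise; each summand is then clopen, a Cauchy sequence is eventually trapped in a single summand where completeness is inherited, and the union of the dense sets of the summands is countable and dense.

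The substantial clause is that a $G_\delta$ subspace $\YY=\bigcap_n \UU_n$ (each $\UU_n$ open) of a Polish space $(\XX,d)$ is Polish. Separability is automatic, since subspaces of separable metric spaces are separable, so the issue is complete metrizability. I would first treat a single open set $\UU\subsetneq\XX$ (the case $\UU=\XX$ being trivial): the map $\delta(x)=d(x,\XX\setminus\UU)$ is strictly positive and continuous on $\UU$, and the metric $d'(x,y)=d(x,y)+\lvert 1/\delta(x)-1/\delta(y)\rvert$ generates the subspace topology while preventing $d'$-Cauchy sequences from approaching the boundary $\XX\setminus\UU$; a $d'$-Cauchy sequence is $d$-Cauchy, converges in $\XX$ to a point at which $\delta$ is bounded away from $0$, hence to a point of $\UU$, so $d'$ is complete and $\UU$ is Polish. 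For the general case I would embed $\YY$ diagonally into the (Polish) product $\prod_n \UU_n$ by $y\mapsto(y,y,\dots)$; this is a homeomorphism onto its image, and the image is closed in the product by uniqueness of limits. Because closed subsets of complete metric spaces are complete and remain separable, closed subspaces of Polish spaces are Polish, and the product clause then makes $\YY$ Polish. The final implication is immediate: in a metric space every closed set $F$ is a $G_\delta$, as $F=\bigcap_n\{\,x : d(x,F)<1/n\,\}$, so closed subspaces are a special case of $G_\delta$ subspaces.

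The only genuinely delicate point, and the one I expect to be the main obstacle, is the complete metrizability of open (and thus of $G_\delta$) subspaces: the inherited metric need not be complete, because a sequence in $\UU$ can be $d$-Cauchy yet converge to a boundary point lying outside $\UU$. The re-metrization $d'$ resolves this by adding a term that blows up near the boundary, and checking that $d'$ is a metric inducing the same topology and that it is complete is the one step requiring care; everything else reduces to routine manipulations of metrics and dense sets.
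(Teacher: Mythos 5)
The paper states this as a Fact and offers no proof of its own, deferring entirely to the cited passages of Kechris; so there is no internal argument to compare against. Your proof is correct and is precisely the standard one from that reference --- complete bounded metrics $\sum_i 2^{-i}\min(d_i,1)$ for products and the clopen-summand metric for sums, the Alexandrov re-metrization $d(x,y)+\lvert 1/\delta(x)-1/\delta(y)\rvert$ for an open subspace, the closed diagonal embedding of a $G_\delta$ set into the product of its open supersets, and the direct observation that closed subsets of complete separable metric spaces are complete and separable --- and you correctly isolate the re-metrization of open subspaces as the only non-routine step.
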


The above properties arguably capture all typical spaces of interest for
database theory. When we later work with Polish spaces, we always assume that
we have a fixed Polish metric at hand, that is, a complete metric generating 
the Polish topology. With respect to said \emph{compatible} metric, we write
$B_{\epsilon}( X )$ to denote the open ball of radius $\epsilon > 0$ around the
point $X \in \XX$.

\subsection*{Standard Borel Spaces}\label{sec:standardborel}

If $( \XX , \OOO )$ is a topological space, the \emph{Borel $\sigma$-algebra}
$\Borel( \XX )$ on $\XX$ is the $\sigma$-algebra generated by the open sets
$\OOO$ (the topology in use is usually clear from context; in our case,
provided there is one, we always use a Polish topology on $\XX$). Sets in the
Borel $\sigma$-algebra are also called \emph{Borel sets}. In the following, we
state some nice basic properties of standard Borel spaces.

\begin{fact}[{see~\cite[Lemmas 1.5 \&{} 1.10]{Kallenberg2002}}]\label[fact]{fac:topomeas}
	Let $( \XX , \OOO_{\XX} )$ and $( \YY, \OOO_{\YY} )$ be topological spaces.
	\begin{enumerate}
		\item If $\phi \from \XX \to \YY$ is continuous, then $\phi$ is $\big(
			\Borel(\XX), \Borel(\YY) \big)$-measurable.
		\item\label[itm]{itm:topomeas2} If $(\YY, \OOO_{\YY})$ is a metric
			topological space, $(\phi_n)_{n\geq 0}$ is a sequence of $\big(
			\Borel(\XX), \Borel(\YY) \big)$-measurable functions $\phi_n \from \XX
			\to \YY$ with $\lim_{ n \to \infty } \phi_n = \phi$, then $\phi$ is
			measurable as well.
			\qed
	\end{enumerate}
\end{fact}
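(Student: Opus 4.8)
The plan is to reduce both statements to a condition on preimages of open sets and then to apply \cref{itm:basics1} of \cref{fac:measurablebasics}. Since $\Borel(\YY)$ is by definition the $\sigma$-algebra generated by the open sets $\OOO_{\YY}$, it suffices in each case to verify that $\phi^{-1}(V) \in \Borel(\XX)$ for every open $V \in \OOO_{\YY}$, because these sets generate $\Borel(\YY)$.

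The first part is then immediate. If $\phi$ is continuous, then $\phi^{-1}(V)$ is open in $\XX$ for every open $V \subseteq \YY$, and open sets belong to $\Borel(\XX)$ by definition. Hence the generating condition of \cref{itm:basics1} is met and $\phi$ is measurable; there is no real difficulty here.

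For the second part I fix an open set $V \subseteq \YY$ and aim to express $\phi^{-1}(V)$ as a countable Boolean combination of sets $\phi_n^{-1}(W)$, each of which is $\Borel(\XX)$-measurable because the $\phi_n$ are. Writing $d$ for a metric compatible with $\OOO_{\YY}$, I would approximate $V$ from within by the open sets
\[
	V^{(k)} \coloneqq \set[\big]{ y \in \YY \with d( y, \YY \setminus V ) > 1/k }
	\text,
\]
which increase to $V$ and satisfy $\overline{ V^{(k)} } \subseteq V$. The key step is to verify the identity
\[
	\phi^{-1}( V ) = \bigcup_{ k \geq 1 } \bigcup_{ N \geq 1 } \bigcap_{ n \geq N } \phi_n^{-1}\big( V^{(k)} \big)
	\text.
\]
For the forward inclusion, if $\phi(X) \in V$ then $d(\phi(X), \YY \setminus V)$ is strictly positive, so $\phi(X) \in V^{(k)}$ with a margin for a suitable $k$, and pointwise convergence $\phi_n(X) \to \phi(X)$ forces $\phi_n(X) \in V^{(k)}$ for all large $n$. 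For the reverse inclusion, if $\phi_n(X) \in V^{(k)}$ for all $n \geq N$, then continuity of $y \mapsto d(y, \YY \setminus V)$ passes to the limit and yields $d(\phi(X), \YY \setminus V) \geq 1/k > 0$, i.e.\ $\phi(X) \in V$. Since each $V^{(k)}$ is open and thus Borel, every $\phi_n^{-1}(V^{(k)})$ lies in $\Borel(\XX)$, and the displayed countable unions and intersections stay within the $\sigma$-algebra; \cref{itm:basics1} then finishes the argument.

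The entire difficulty sits in this identity, and specifically in the choice of the shrunk sets $V^{(k)}$. This is exactly where metrizability of $\YY$ is indispensable: I need sets that exhaust $V$ yet have closures contained in $V$, so that a sequence eventually trapped in $V^{(k)}$ has its limit back in $V$. Without a metric to measure distance to the complement, no such canonical inner approximation is available, and indeed pointwise limits of measurable maps into a general topological target need not be measurable. I would therefore check the two inclusions above with care, watching the quantifier over the margin $1/k$; the remaining manipulations are just the routine closure properties of $\sigma$-algebras.
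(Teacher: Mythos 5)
The paper does not prove this statement at all: it is recorded as a \emph{Fact} with a direct citation to Kallenberg's Lemmas 1.5 and 1.10, so there is no in-paper argument to compare against. Your proof is correct, and it is in fact precisely the classical argument behind the cited lemmas: part (1) via preimages of open sets together with the generating-set criterion, and part (2) via the inner approximations $V^{(k)}=\set{y \with d(y,\YY\setminus\Vert)\,\cdots}$ --- more precisely the sets $\set{y\in\YY \with d(y,\YY\setminus V)>1/k}$ --- and the identity $\phi^{-1}(V)=\bigcup_k\bigcup_N\bigcap_{n\geq N}\phi_n^{-1}(V^{(k)})$. Both inclusions check out (the forward one does need the ``margin'' $d(\phi(X),\YY\setminus V)>2/k$ you allude to, so that convergence keeps $\phi_n(X)$ inside $V^{(k)}$), and your closing remark correctly identifies metrizability of $\YY$ as the essential hypothesis.
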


A measurable space $( \XX, \XXX )$ is called a \emph{standard Borel space} if
there exists a Polish topology $\OOO$ on $\XX$ such that $\XXX$ is the (Borel)
$\sigma$-algebra generated by $\OOO$.

\begin{fact}[Lusin-Souslin, see {\cite[Theorem 15.1]{Kechris1995}}]
	\label[fact]{fac:SBSinjmeasimg}
	Let $( \XX, \XXX )$ and $( \YY, \YYY )$ be standard Borel spaces, let $\XXXX 
	\in \XXX$, and let $\phi \from \XX \to \YY$ be a continuous function such
	that $\phi \restriction_{\XXXX}$ (the restriction of $\phi$ to the set $\XXXX$)
	is injective. Then $\phi( \XXXX ) \in \YYY$.\qed
\end{fact}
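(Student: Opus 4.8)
The plan is to prove the classical Lusin--Souslin theorem by the standard descriptive--set--theoretic route: first reduce to a convenient normal form for the domain, then express the image $\phi(\XXXX)$ explicitly as a Borel set built from a tree of separating Borel sets. Throughout I fix Polish topologies $\OOO_{\XX}$ and $\OOO_{\YY}$ witnessing that $(\XX,\XXX)$ and $(\YY,\YYY)$ are standard Borel, chosen so that $\phi$ is continuous with respect to them, and I fix a complete compatible metric on $\YY$.

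First I would reduce to the case where the Borel set $\XXXX$ is the whole domain and the domain is a closed subspace of Baire space $\NN^{\NN}$. Since $\XXXX \in \XXX$, one can refine $\OOO_{\XX}$ to a finer Polish topology $\OOO'$ with the same Borel $\sigma$-algebra in which $\XXXX$ becomes clopen (a standard fact about standard Borel spaces). A finer topology on the domain cannot destroy continuity, so $\phi$ is still continuous; as a clopen subspace of a Polish space, $(\XXXX,\OOO'\restriction_{\XXXX})$ is itself Polish, and $\phi\restriction_{\XXXX}$ is continuous and injective. Replacing $\XX$ by this space, I may assume $\XXXX = \XX$ is Polish. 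If $\XX = \emptyset$ the claim is trivial; otherwise I invoke the representation of a nonempty Polish space as a continuous bijective image $g\from C \to \XX$ of a closed set $C \subseteq \NN^{\NN}$. Since $\phi \after g$ is continuous, injective, and has the same image as $\phi$, I may finally assume that the domain is a closed set $C \subseteq \NN^{\NN}$ and that $\phi \from C \to \YY$ is continuous and injective; the goal is to show $\phi(C) \in \YYY$.

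Next I would set up a Suslin scheme indexed by finite sequences. For $s \in \NN^{<\NN}$ let $N_s \subseteq \NN^{\NN}$ be the basic clopen set of all extensions of $s$, and put $C_s \coloneqq C \cap N_s$. These sets satisfy $C_{\langle\rangle} = C$, each $C_s$ is the disjoint union of its children $C_{s^\frown n}$ ($n \in \NN$), the diameters of the $N_s$ shrink to $0$ along every branch, and---crucially---closedness of $C$ guarantees that a branch $\alpha$ with $C_{\alpha\restriction n} \neq \emptyset$ for all $n$ actually lies in $C$. Because $\phi$ is injective, for each fixed $s$ the images $\phi(C_{s^\frown n})$ are pairwise disjoint, and each $\phi(C_s)$ is analytic (a continuous image of a Borel set). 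Applying the generalized Lusin separation theorem level by level, I obtain pairwise disjoint Borel sets separating the sibling images; intersecting these recursively with the parent set and with the closure $\overline{\phi(C_s)}$ yields Borel sets $\tilde D_s$ that are nested ($\tilde D_{s^\frown n}\subseteq \tilde D_s$), sibling-disjoint, satisfy $\phi(C_s)\subseteq \tilde D_s \subseteq \overline{\phi(C_s)}$, and whose diameters tend to $0$ along every branch of $C$ by continuity of $\phi$.

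Finally I would prove the key identity
\[
	\phi(C) \;=\; \bigcap_{k\in\NN} \ \bigcup_{s \in \NN^{k}} \tilde D_s \text,
\]
whose right-hand side is manifestly a countable combination of Borel sets and hence in $\YYY$. The inclusion from left to right is immediate, since $\phi(x)$ lies in $\tilde D_{\alpha\restriction k}$ for the branch $\alpha$ through $x$. For the reverse inclusion, sibling-disjointness together with nesting forces full level-disjointness, so any point of the right-hand side determines a single branch $\alpha$ with $y \in \tilde D_{\alpha\restriction k}$ for all $k$; since these sets are nonempty, the $C_{\alpha\restriction k}$ are nonempty, whence $\alpha \in C$ by closedness, and the vanishing diameters of $\overline{\phi(C_{\alpha\restriction k})}$ together with completeness of $\YY$ pin down $y = \phi(\alpha)$. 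I expect the main obstacle to be exactly this last step: marshalling the generalized separation theorem to build the scheme $(\tilde D_s)$ and then verifying that the Suslin operation applied to a nested, level-disjoint scheme of vanishing diameter collapses to precisely the image---in particular the reverse inclusion, where injectivity, closedness of $C$, and completeness of $\YY$ must all be combined to recover a genuine preimage point.
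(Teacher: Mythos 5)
The paper does not prove this statement at all: it is quoted as a known \emph{Fact} (Lusin--Souslin) with a citation to Kechris, Theorem~15.1, and closed immediately with \qed. Your sketch is a correct reconstruction of the standard proof given in that reference --- refining the Polish topology to make $\XXXX$ clopen, passing to a continuous bijective preimage on a closed subset of Baire space, building the Suslin scheme $C_s = C \cap N_s$, separating sibling images via the generalized Lusin separation theorem into a nested, level-disjoint scheme $(\tilde D_s)$ with $\phi(C_s) \subseteq \tilde D_s \subseteq \overline{\phi(C_s)}$, and verifying $\phi(C) = \bigcap_k \bigcup_{s \in \NN^k} \tilde D_s$ --- so it matches the source the paper relies on. The only cosmetic remark is that completeness of the metric on $\YY$ is not actually needed in the final step, since $\phi(\alpha)$ is already an explicit point lying in every $\overline{\phi(C_{\alpha\restriction k})}$ and the vanishing diameters force $y = \phi(\alpha)$ directly.
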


In other words, the above fact states that between standard Borel spaces, the
image of a measurable set $\XXXX$ under a function is measurable itself,
provided that the function is continous, and injective on $\XXXX$.

The following fact intuitively states that the property of measurable spaces
being standard Borel is closed under using the constructions introduced before.

\begin{fact}[{see~\cite[p.~75]{Kechris1995}}]
	\label[fact]{fac:SBSconstructions}
	Let $( \XX_i, \XXX_i )$ be a sequence of standard Borel spaces. 
	\begin{enumerate} 
		\item The product space $\big( \prod_{ i } \XX_i, \bigotimes_{ i } \XXX_i
			\big)$ is standard Borel.
		\item If the spaces $\XX_i$ are pairwise disjoint, then the disjoint sum
			$\big( \sum_{ i } \XX_i, \bigoplus_{ i } \XXX_i \big)$ is standard
			Borel.
	\end{enumerate}
	Moreover, if $( \XX, \XXX )$ is standard Borel, and $\XXXX \in \XXX$, then
	$( \XXXX, \XXX\rvert_{\XXXX} )$ is standard Borel.
	\qed
\end{fact}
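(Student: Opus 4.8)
The plan is to unfold the definition of a standard Borel space in each of the three cases. For every standard Borel space $(\XX_i, \XXX_i)$ I would first fix a Polish topology $\OOO_i$ on $\XX_i$ whose Borel $\sigma$-algebra is $\XXX_i$. I then equip the product (respectively, the disjoint sum, respectively, the subset) with the natural topology induced by the $\OOO_i$, observe that this topology is again Polish by the closure properties of Polish spaces recalled above, and finally verify that its Borel $\sigma$-algebra coincides with the $\sigma$-algebra produced by the corresponding standard construction. Thus each part reduces to an identity of the form \enquote{Borel $\sigma$-algebra of the constructed topology $=$ constructed $\sigma$-algebra.}

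For the product I would take the product topology on $\prod_i \XX_i$, which is Polish because countable products of Polish spaces are Polish. Since each $\OOO_i$ is separable and metrizable it is second countable, and a countable product of second countable spaces is second countable: a countable base consists of the rectangles $\prod_i U_i$ where each $U_i$ ranges over a countable base of $\OOO_i$ and $U_i = \XX_i$ for all but finitely many $i$. Each such rectangle is a finite intersection of sets $\proj_i^{-1}(U_i)$ and hence lies in $\bigotimes_i \XXX_i$; as the space is second countable, every open set is a countable union of base elements, giving $\Borel(\prod_i \XX_i) \subseteq \bigotimes_i \XXX_i$. The reverse inclusion is immediate, since every projection $\proj_j$ is continuous, hence measurable, so the generators $\proj_j^{-1}(\XXXX)$ of the product $\sigma$-algebra are Borel.

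For the disjoint sum I would use the sum topology on $\sum_i \XX_i$, which is Polish because countable sums of Polish spaces are Polish. Here each $\XX_i$ is clopen in $\sum_i \XX_i$ and the inclusion $\XX_i \hookrightarrow \sum_i \XX_i$ is a topological embedding. Continuity of the inclusion shows that if $\XXXX$ is Borel in the sum then each $\XXXX \cap \XX_i$ is Borel in $\XX_i$; conversely, if $\XXXX \cap \XX_i \in \XXX_i$ for all $i$, then, as $\XX_i$ is clopen (hence Borel) in the sum, each $\XXXX \cap \XX_i$ is Borel in the sum and so is their countable union $\XXXX$. This is precisely the identity $\Borel(\sum_i \XX_i) = \bigoplus_i \XXX_i$.

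The subspace statement is the delicate one, and I expect it to be the main obstacle. The naive attempt of giving $\XXXX$ the subspace topology inherited from a Polish topology generating $\XXX$ fails in general, because an arbitrary Borel subset of a Polish space need not be Polish in the subspace topology (for instance $\QQ \subseteq \RR$). The closure facts recalled above only guarantee Polishness of the subspace topology when $\XXXX$ is a $G_\delta$ set, and a general member of $\XXX$ is far from being $G_\delta$. The standard remedy is a topology-refinement argument: for any $\XXXX \in \XXX$ there is a finer Polish topology $\OOO'$ on $\XX$ generating the \emph{same} Borel $\sigma$-algebra $\XXX$ in which $\XXXX$ becomes clopen. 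Granting this, $\XXXX$ is closed in $(\XX, \OOO')$, so the subspace topology it inherits from $\OOO'$ is Polish, and that subspace topology generates exactly $\XXX\rvert_{\XXXX}$. The crux is therefore the refinement lemma, whose proof proceeds by transfinite induction over the Borel hierarchy, at each stage adjoining countably many sets to the topology while preserving both separability and complete metrizability — which is exactly where the topological hypotheses of the setting are essential.
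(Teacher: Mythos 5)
Your proof is correct. The paper does not prove this statement at all---it is stated as a \emph{Fact} and delegated entirely to the cited reference (Kechris, p.~75)---so there is no in-paper argument to compare against; what you have written is essentially the standard proof from that source. You correctly identify the one genuinely nontrivial point, namely that the subspace case cannot be handled by the naive subspace topology (your $\QQ \subseteq \RR$ example is apt) and requires the topology-refinement theorem making an arbitrary Borel set clopen in a finer Polish topology with the same Borel sets; the product and sum cases reduce, as you say, to the identities $\Borel(\prod_i \XX_i) = \bigotimes_i \XXX_i$ (where second countability is exactly what makes the nontrivial inclusion go through) and $\Borel(\sum_i \XX_i) = \bigoplus_i \XXX_i$.
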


\subsection*{(Finite) Point Processes}\label{sec:fpp}

A \emph{point process}~\cite{DaleyVere-Jones2003,DaleyVere-Jones2008} is a
probability space over countable sets of points in some abstract \enquote{state
space} such as the Euclidean space $\RR^n$. Point processes are a well-studied
subject in probability theory and they appear in a variety of applications such
as particle physics, ecology, geostatistics
(cf.~\cite{DaleyVere-Jones2003,Baddeley2007} and target tracking~\cite{Degen2017}.
In computer science, they have applications in queuing
theory~\cite{Franken1982} and machine learning~\cite{KuleszaTaskar2012}. The
individual outcomes (that is,
point configurations) drawn from a point process are called a \emph{realization
of the process}. A point process is called \emph{finite} if all of its
realizations are finite. In the following, we recall the construction of a
finite point process over a Polish state space, following the classic
constructions of~\cite{Moyal1962,Macchi1975}.  

Let $( \XX, \XXX )$ be a standard Borel space. Recall that $( \XX^k, \XXX^{ 
\otimes k } )$ is also standard Borel for all $k \in \NN_+$. For all 
$( X_1, \dots, X_k ), ( X_1', \dots, X_k' ) \in \XX^k$, we define
\[
	( X_1, \dots, X_k ) \sim_k ( X_1', \dots, X_k' )
\]
if and only if there exists a permutation $\pi$ of $\set{ 1, \dots, k }$ such
that 
\[
	( X_1, \dots, X_k ) 
	= \big( X_{ \pi(1) }', \dots, X_{ \pi(k) }' \big)\text.
\]
Then $\sim_k$ is an equivalence relation on $\XX^k$. Tuples in $\XX^k$ are
equivalent under $\sim_k$ if and only if they contain every individual element
the exact same number of times. Thus, the quotient space $\quotient{ \XX^k }{ 
\sim_k }$ can be identified with the set $\powerbag_k( \XX )$ of bags of
cardinality exactly $k$ over $\XX$. Note that in the case $k = 0$, $\quotient{
\XX^0 }{ \sim_0 }$ (by convention) consists of a single distinguished point
representing the empty realization. The space
\[
	\bigcup_{ k = 0 }^{ \infty } \quotient{ \XX^k }{ \sim_k }
	= \bigcup_{ k = 0 }^{ \infty } \powerbag_k( \XX )
	= \powerbag_{ \fin }( \XX )
\]
is one of the canonical and equivalent choices for the sample space for a 
finite point process on $\XX$~\cite{Macchi1975}. The original construction
of~\cite{Moyal1962} considers the \emph{symmetrization} $\sym \from \bigcup_{ k =
0 }^{ \infty } \XX^k \to \powerbag_{ \fin }( \XX )$ where for all $( X_1,
\dots, X_k ) \in \XX^k$ it holds that
\[
	\sym( X_1, \dots, X_k )
	= [ (X_1, \dots, X_k) ]_{ \sim_k }
	= \bag{ X_1, \dots, X_k }
	\text.
\]
Then $\bigoplus_{ k = 1 }^{ \infty } \XXX^{ \otimes k }$ is the natural
$\sigma$-algebra on $\bigcup_{ k \in \NN } \XX^k$. The $\sigma$-algebra on
$\powerbag_{ \fin }( \XX )$ is then defined as
\begin{equation}\label{eq:symsigmaalg}
	\set[\big]{
		\BBBB \subseteq \powerbag_{\fin}( \XX ) \with
		\sym^{-1}( \BBBB ) \in {\textstyle\bigoplus_{ k = 0 }^{ \infty }}
		\XXX^{ \otimes k }
	}
	\text,
\end{equation}
that is, as the $\sigma$-algebra on $\powerbag_{ \fin }( \XX )$ induced by $\sym
\from \bigcup_{ k = 0 }^{ \infty } \XX^k \to \powerbag_{ \fin }( \XX )$. Note
that this is indeed a $\sigma$-algebra on $\powerbag_{ \fin }( \XX )$,
cf.~\cite[Lemma 1.3]{Kallenberg2002}.

\begin{rem}[cf.~{\cite[Section 2]{Moyal1962}}]\label[rem]{rem:symborel}
	A set $\XXXX \in \XXX^{ \otimes k }$ is called \emph{symmetric}, if for
	all $(X_1, \dots, X_k) \in \XX^k$ it holds that 
	$(X_1, \dots, X_k) \in \XXXX$ implies $\big( X_{\pi(1)}, \dots, X_{\pi(k)}
	\big) \in \XXXX$ for all permutations $\pi$ of $\set{1, \dots, k}$. The
	function $\XXXX \mapsto \sym( \XXXX )$ is a bijection between the
	symmetric sets $\XXXX$ in $\bigoplus_{ k = 0 }^{ \infty } \XXX^{ \otimes k
	}$ and the sets of \labelcref{eq:symsigmaalg}.
\end{rem}

An equivalent, but technically more convenient construction is motivated by
interpreting point processes as
\emph{random counting measures}~\cite{DaleyVere-Jones2008}.
For $\XXXX \in \XXX$ and $n \in \NN$, let 
\[
	\counting{ \XXXX }{n} \coloneqq
	\set[\big]{ 
		B \in \powerbag_{ \fin }( \XX ) \with 
		\card{ B }_{ \XXXX } 
		= {\textstyle \sum_{ X \in \XXXX } } \card{ B }_{ X }
		= n
	}
	\text,
\]
that is, $\counting{ \XXXX }{ n }$ is the set of finite bags over $\XX$
containing exactly $n$ elements from $\XXXX$, counting multiplicities. The sets 
$\counting{ \XXXX }{ n }$ with $\XXXX \in \XXX$ and $n \in \NN$ are called
\emph{counting events}. The \emph{counting $\sigma$-algebra} $\Count( \XX )$ 
on $\XX$ is the $\sigma$-algebra generated by all counting events.

\begin{fact}[{\cite[Theorem 3.2]{Moyal1962}}]\label[fact]{fac:symcounting}
	The $\sigma$-algebra from \labelcref{eq:symsigmaalg} coincides with 
	$\Count( \XX )$.
	\qed
\end{fact}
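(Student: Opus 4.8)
The plan is to prove $\eqref{eq:symsigmaalg} = \Count(\XX)$ by a double inclusion, where the inclusion $\Count(\XX)\subseteq\eqref{eq:symsigmaalg}$ is a routine preimage computation and the reverse inclusion carries the real work.

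\textbf{The easy inclusion.} First I would check that every counting event lies in the symmetrization $\sigma$-algebra, \ie\ that $\sym^{-1}(\counting{\XXXX}{n})\in\bigoplus_{k}\XXX^{\otimes k}$ for all $\XXXX\in\XXX$ and $n\in\NN$. Restricting to a fixed stratum $\XX^k$, the preimage is the set of tuples with exactly $n$ coordinates in $\XXXX$, which can be written explicitly as
\[
	\sym^{-1}\big(\counting{\XXXX}{n}\big)\cap\XX^k
	=\bigcup_{\substack{I\subseteq\set{1,\dots,k}\\\card{I}=n}}
	\Big(\bigcap_{i\in I}\proj_i^{-1}(\XXXX)\cap\bigcap_{i\notin I}\proj_i^{-1}(\XX\setminus\XXXX)\Big).
\]
This is a finite Boolean combination of measurable cylinders and hence lies in $\XXX^{\otimes k}$; since this holds for every $k$, the preimage lies in the disjoint sum $\sigma$-algebra. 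As $\eqref{eq:symsigmaalg}$ is a $\sigma$-algebra containing all counting events, \cref{fac:measurablebasics}\cref{itm:basics1} (or simply closure under the $\sigma$-algebra operations) yields $\Count(\XX)\subseteq\eqref{eq:symsigmaalg}$.

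\textbf{Reduction for the hard inclusion.} For the reverse direction I would invoke \cref{rem:symborel}: the sets in $\eqref{eq:symsigmaalg}$ are exactly the images $\sym(\AAAA)$ of symmetric sets $\AAAA\in\bigoplus_k\XXX^{\otimes k}$. Writing $\AAAA=\bigcup_k\AAAA_k$ with $\AAAA_k\coloneqq\AAAA\cap\XX^k$ a symmetric member of $\XXX^{\otimes k}$, and using $\sym(\AAAA)=\bigcup_k\sym(\AAAA_k)$ together with closure of $\Count(\XX)$ under countable unions, it suffices to prove $\sym(\AAAA_k)\in\Count(\XX)$ for each fixed $k$. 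To exploit this, I would introduce the family $\mathfrak{D}_k$ of symmetric sets in $\XXX^{\otimes k}$ whose symmetrization lies in $\Count(\XX)$, and verify that $\mathfrak{D}_k$ is a sub-$\sigma$-algebra of the symmetric sets on $\XX^k$: it is closed under countable unions because $\sym$ commutes with them, and under complementation because $\powerbag_k(\XX)=\counting{\XX}{k}\in\Count(\XX)$, so that the symmetrization of the (symmetric) complement $\XX^k\setminus\AAAA_k$ is precisely the complement $\counting{\XX}{k}\setminus\sym(\AAAA_k)$.

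\textbf{The core generation step, and the main obstacle.} It remains to show that $\mathfrak{D}_k$ already contains a generating family of the full symmetric $\sigma$-algebra. The natural candidates are the symmetrizations of products of pairwise disjoint sets: for disjoint $\XXXX_1,\dots,\XXXX_k\in\XXX$ the symmetrization of $\XXXX_1\times\dots\times\XXXX_k$ equals $\bigcap_{i=1}^k\counting{\XXXX_i}{1}\cap\counting{\XX}{k}$, and more generally the partition-count events $\bigcap_j\counting{P_j}{n_j}$ attached to a finite measurable partition $\set{P_j}$ of $\XX$ with $\sum_j n_j=k$; each such set is a finite intersection of counting events and therefore lies in $\mathfrak{D}_k$. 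The main obstacle is to argue that these events generate the entire symmetric $\sigma$-algebra, and this is exactly where the standing standard Borel (Polish) assumption enters. Since $(\XX,\XXX)$ is standard Borel, $\XXX$ is generated by a countable field whose finite partitions separate the points of $\XX$; hence the corresponding count functions $(X_1,\dots,X_k)\mapsto\card{\set{i:X_i\in P}}$ separate the $\sim_k$-equivalence classes while being measurable with respect to the $\sigma$-algebra generated by the partition-count events. Because $\XX^k$ is standard Borel by \cref{fac:SBSconstructions}, a countable family of measurable functions that separates the fibers of $\sym$ generates the full symmetric $\sigma$-algebra on $\XX^k$ (the same point-separation principle underlying \cref{fac:SBSinjmeasimg}); transporting along the bijection of \cref{rem:symborel} then gives $\mathfrak{D}_k$ equal to all symmetric sets. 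I expect this reliance on a countable point-separating generator — the ``countable approximation'' made available by Polishness — to be the delicate part of the argument, whereas the remaining steps are bookkeeping.
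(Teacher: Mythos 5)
The paper does not actually prove \cref{fac:symcounting}: it is imported verbatim from Moyal's Theorem~3.2 and stated without proof, so there is no internal argument to compare against. Your proposal is a correct, self-contained derivation of the same statement. The easy inclusion is exactly the preimage computation you give. For the reverse inclusion, the stratification by cardinality, the good-sets family $\mathfrak{D}_k$ (whose closure under complementation rests on $\sym$ carrying symmetric complements to complements inside $\counting{\XX}{k}$), and the membership of the partition-count events in $\mathfrak{D}_k$ are all sound. The one place the sketch leans on something stronger than what you cite is the final generation step: the assertion that a countable, fiber-separating family of count functions generates the entire symmetric $\sigma$-algebra on $\XX^k$ is the Blackwell--Mackey theorem (in a standard Borel space, a countably generated sub-$\sigma$-algebra contains every Borel set that is a union of its atoms), not literally \cref{fac:SBSinjmeasimg}; it is, however, a standard consequence of the Lusin--Souslin theorem and is available in \cite{Srivastava1998}, which the paper already cites. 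With that reference made explicit (and the fiber-separation argument spelled out by shrinking members of a countable generating field onto the finitely many points carried by a bag), your argument closes, and it buys something the paper forgoes: a proof that stays entirely within the descriptive-set-theoretic toolkit the paper already sets up, rather than deferring to the point-process literature.
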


\begin{defi}\label[defi]{def:fpp}
	A \emph{finite point process} with standard Borel state space $( \XX, \XXX 
	)$ is a probability space $\big( \powerbag_{\fin}( \XX ), \Count( \XX ), P 
	\big)$. A finite point process is \emph{simple} if $P\big( \powerset_{ 
	\fin }( \XX )\big) = 1$, that is, if its realizations are almost surely
	sets.
\end{defi}

\section{Probabilistic Databases}

In this section, we introduce our framework for infinite probabilistic
databases and their query semantics. To begin with, we recall the conventional 
formal definition of probabilistic databases as it is found in textbooks on
the subject~\cite{Suciu+2011,VandenBroeckSuciu2017}:

\begin{defi}[Finite Probabilistic Databases; adapted from {\cite[Section 2.2]{Suciu+2011}}]\label[defi]{def:finpdb}
	Let $\tau$ be a database schema and let $\UU$ be a universe. A 
	\emph{probabilistic database (PDB)} $\D$ over $\tau$ and $\UU$ is a 
	probability space $\D = \big(\DB, \powerset( \DB ), P\big)$ where $\DB$ is a
	finite set of database instances over $\tau$ and $\UU$.
\end{defi}

From a probability theoretic point of view, \cref{def:finpdb} is a severe 
limitation to the vast expressive power of stochastic models, only allowing
probability distributions over finitely many alternative database instances. As
an example, in a setting such as that of \cref{exa:running} (the database of
temperature measurements), we would typically model noise or uncertainty in the
sensor measurements by continuous distributions. This directly leads to (even
uncountably) infinite probability spaces that are not covered by the typical
textbook definition of probabilistic databases (\cref{def:finpdb}).
In~\cite{GroheLindner2019}, we introduced the following general notion of
probabilistic databases as probability spaces of database instances:

\begin{defi}[{see~\cite[Definition 3.1]{GroheLindner2019}}]\label[defi]{def:pdb}
	Let $\tau$ be database schema and let $\UU$ be a universe. Suppose that 
	$\FFF$ is a $\sigma$-algebra on the space $\FF[\tau,\UU]$ of all
	$(\tau, \UU)$-facts. A \emph{probabilistic database (PDB)} $\D$ over $\tau$
	and $\UU$ is a probability space $\D = ( \DB, \DDB, P )$ where $\DB
	\subseteq \DB[\tau, \UU]$ and $\DDB$ is a $\sigma$-algebra on $\DB$
	satisfying
	\[
		\set[\big]{ D \in \DB \with \card{ D }_{\FFFF } > 0 } \in \DDB
	\] 
	for all $\FFFF \in \FFF$.
\end{defi}

Therein, the sample space $\DB$ may be infinite, even uncountable. While we 
only discussed set PDBs in~\cite{GroheLindner2019}, we broaden the definition
to support bag instances here. In any case, it was left open
in~\cite{GroheLindner2019}, how to construct such probability spaces, let alone 
how to obtain suitable measurable spaces $( \DB, \DDB )$. This is
no longer a trivial task once $\DB$ is uncountable. In this section, we 
provide a general construction for such measurable spaces.

\begin{rem}
	At this point, we want to stress again the meaning of our terminology. In
	general probabilistic databases there are lots of components, on different
	levels of abstraction that could, in principle, be infinite spaces. The term
	\emph{infinite} probabilistic databases is derived from the notion of an
	\emph{infinite} probability space, meaning that the sample space is of
	infinite size. Still, in the PDBs we consider in this paper, database
	instances themselves (the concrete outcomes, or realizations of a PDB) are
	always finite collections of facts. The framework we discribe here is not
	suitable for discussing probability spaces over infinite database instances 
	in the sense of~\cite[Section 5.6]{Abiteboul+1995}, such as \emph{constraint 
	databases}~\cite{Kuper+2000}.
\end{rem}

\subsection{Probabilistic Databases are Finite Point
Processes}\label{ssec:pdbfpp}

Our counstruction of measurable spaces of database instances stems from
interpreting a probabilistic database as a finite point process over the space
of possible facts.

\begin{conv}
	\emph{From now on, we only consider database schemas $\tau$ in combination 
	with sorted universes $\UU$ where all attribute domains are Polish.}
\end{conv}

We first construct the measurable space of facts over $\tau$ and $\UU$. For
all $A \in \A$, by assumption, the domain $\AA \coloneqq \dom( A )$ is Polish.
Equipping it with its Borel $\sigma$-algebra $\AAA$ thus yields a standard 
Borel space $( \AA, \AAA )$. Now let $R \in \tau$ be a relation symbol with
$\sort( R ) = ( A_1, \dots, A_k )$, so that the standard Borel spaces belonging
to the attribute names $A_1, \dots, A_k$ are $(\AA_1,\AAA_1), \dots, (\AA_k,
\AAA_k)$. Recall from \labelcref{eq:domR}, that the set $\TT_R$ of $R$-tuples
is the product of the attribute domains in $R$. Naturally, $\TT_R$ is equipped
with its product $\sigma$-algebra 
\begin{equation}\label{eq:TTT_R}
	\TTT_R \coloneqq \bigotimes_{ i = 1 }^{ k } \AAA_i\text.
\end{equation}
Likewise, the set $\FF_R$ of $R$-facts is equipped with the
$\sigma$-algebra 
\[
	\FFF_R \coloneqq \set[\big]{ \emptyset,\set{R} } \otimes \TTT_R
\]
(cf.~\labelcref{eq:domFR}). As the set $\FF$ of all facts (over $\tau$ and
$\UU$) is the disjoint union of all the $\FF_R$, it is naturally equipped with
the disjoint union $\sigma$-algebra 
\[
	\FFF \coloneqq \bigoplus_{ R \in \tau } \FFF_R\text.
\]

\begin{exa}\label[exa]{exa:running2}
	We reenact these definitions for our running example (\cref{exa:running}).
	Recall that the attribute names $A \in \set{ \ATT{RoomNo}, \ATT{Person},
	\ATT{Date} }$, have domain $\Sigma^*$ for some (finite) alphabet $\Sigma$.
	As $\Sigma^*$ is countably infinite, we equip this space with its powerset
	$\sigma$-algebra, $\powerset( \Sigma^* )$. For the attribute 
	$\ATT{TempRec}$, we assume the measurable space to be $( \RR, \Borel(\RR)
	)$, where $\Borel(\RR)$ is the Borel $\sigma$-algebra on $\RR$. Given the
	sorts of the relation names $\REL{Office}$ and $\REL{TempRec}$, we have the
	tuple spaces $\TT_{\REL{Office}} = \Sigma^* \times \Sigma^*$ and 
	$\TT_{\REL{TempRec}} = \Sigma^* \times \Sigma^* \times \RR$. Following 
	\eqref{eq:TTT_R}, the corresponding $\sigma$-algebras are
	\[
		\TTT_{\REL{Office}}
		= \powerset( \Sigma^* ) \otimes \powerset( \Sigma^* )
	\]
	and
	\[
		\TTT_{\REL{TempRec}}
		= \powerset( \Sigma^* ) \otimes \powerset( \Sigma^* ) \otimes \Borel(
		\RR )\text.
	\]
	That is, $\TTT_{\REL{Office}}$ is the $\sigma$-algebra generated by all 
	events of the shape $L_1 \times L_2$ where $L_1,L_2 \subseteq \Sigma^*$, and
	$\TTT_{\REL{TempRec}}$ is the $\sigma$-algebra generated by all events of
	the shape $L_1 \times L_2 \times B$ where $L_1,L_2 \subseteq \Sigma^*$ and
	$B$ is a Borel set in $\RR$. The spaces of $\REL{Office}$- and
	$\REL{TempRec}$-facts are given by 
	\[
		\FF_{\REL{Office}} =
		\REL{Office}\big( \TT_{\REL{Office}} \big) = 
		\set[\big]{ \REL{Office}( r, s ) \with r,s \in \Sigma^* }
	\]
	and
	\[
		\FF_{\REL{TempRec}} =
		\REL{TempRec}\big( \TT_{\REL{TempRec}} \big) =
		\set[\big]{ \REL{TempRec}( r, d, \theta ) \with r,d \in \Sigma^*
		\text{ and } \theta \in \RR }
		\text.
	\]
	For example, (assuming that the alphabet $\Sigma$ contains the respective
	symbols), it holds that $\REL{Office}(\STR{4108},\STR{Bob}) \in
	\FF_{\REL{Office}}$, and
	$\REL{TempRec}(\STR{4108},\STR{2021-07-12},\STR{20.5}) \in
	\FF_{\REL{TempRec}}$. The $\sigma$-algebras on these spaces of facts are
	directly obtained from the $\sigma$-algebras on the tuple spaces: 
	\(
		\FFF_{\REL{Office}} 
		= \powerset\big( \set{ \REL{Office} } \big) \otimes \TTT_{\REL{Office}}
	\) 
	and 
	\(
		\FFF_{\REL{TempRec}}
		= \powerset\big( \set{ \REL{TempRec} } \big) \otimes \TTT_{\REL{TempRec}}
	\).
	Finally, 
	\[
		\FF = \FF_{\REL{Office}} \cup \FF_{\REL{TempRec}}
	\]
	is the space of all facts, with $\sigma$-algebra 
	\[
		\FFF = \FFF_{\REL{Office}} \oplus \FFF_{\REL{TempRec}}\text.
	\]
	For example, consider the sets
	\begin{align*}
		\FFFF_1 &= 
		\set[\big]{ f \in \FF_{\REL{Office}} \with f = 
		\REL{Office}( \STR{4108}, s ) \text{ for some }s \in \Sigma^* }
		\shortintertext{and}
		\FFFF_2 &=
		\set[\big]{ f \in \FF_{\REL{TempRec}} \with f = 
			\REL{TempRec}( \STR{4108}, d, \theta ) \text{ for some } d \in
			\Sigma^* \text{ and } \theta \in [21,23] }
		\text.
	\end{align*}
	Then $\FFFF_1 \in \FFF_{\REL{Office}}$ and $\FFFF_2 \in
	\FFF_{\REL{TempRec}}$. Note that due to the real interval, $\FFFF_2$
	contains uncountably many facts. As $\FFFF_1 \in \FFF_{\REL{Office}}$ and
	$\FFFF_2 \in \FFF_{\REL{TempRec}}$, it holds that $\FFFF_1 \cup \FFFF_2 \in
	\FFF$. 
\end{exa}

The constructions of the various measurable spaces above all started from
standard Borel measurable spaces for the attribute domains, and then used the
basic constructions of product and disjoint union measurable spaces. Therefore,
\cref{fac:SBSconstructions} immediately yields the following statement.

\begin{lem}
	The spaces $\big( \TT_R, \TTT_R \big)_{ R \in \tau }$, $\big( \FF_R,
	\FFF_R \big)_{ R \in \tau }$, and $( \FF, \FFF )$ are standard Borel spaces.
\end{lem}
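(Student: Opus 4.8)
The plan is to build each of the three families of spaces up from the attribute domains using only the three closure properties recorded in \cref{fac:SBSconstructions}: that products and disjoint sums of standard Borel spaces are again standard Borel, and that measurable subspaces of standard Borel spaces are standard Borel. Since each of $\TT_R$, $\FF_R$, and $\FF$ is assembled from the attribute domains by exactly these operations, the statement reduces to three bookkeeping applications of that fact.

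First I would fix a relation symbol $R \in \tau$ with $\sort(R) = (A_1, \dots, A_k)$ and record the base case: by the standing convention, each attribute domain $\AA_i = \dom(A_i)$ is Polish, so equipping it with its Borel $\sigma$-algebra $\AAA_i$ makes $(\AA_i, \AAA_i)$ standard Borel directly by the definition of a standard Borel space. Since $\TTT_R = \bigotimes_{i=1}^k \AAA_i$ is precisely the product $\sigma$-algebra of the finitely many standard Borel spaces $(\AA_1, \AAA_1), \dots, (\AA_k, \AAA_k)$, the product part of \cref{fac:SBSconstructions} yields that $(\TT_R, \TTT_R)$ is standard Borel.

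Next I would handle $(\FF_R, \FFF_R)$. The one-point space $(\set{R}, \set{\emptyset, \set{R}})$ carries the discrete topology, which is Polish, so this space is standard Borel; and $\FFF_R = \set{\emptyset, \set{R}} \otimes \TTT_R$ is exactly the product $\sigma$-algebra of this singleton space with $(\TT_R, \TTT_R)$. Hence $(\FF_R, \FFF_R)$ is standard Borel by the product part of the fact once more. (Equivalently, $t \mapsto R(t)$ is a Borel isomorphism of $(\TT_R, \TTT_R)$ onto $(\FF_R, \FFF_R)$, so one could instead transport standard Borel-ness along it.)

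Finally, for $(\FF, \FFF)$ I would invoke the disjoint-sum part of \cref{fac:SBSconstructions}. Because $\R$ is finite, $\FF = \bigcup_{R \in \tau} \FF_R$ is a finite union, and the pieces $\FF_R$ are pairwise disjoint since every fact records its relation symbol in its first coordinate; thus $\FFF = \bigoplus_{R \in \tau} \FFF_R$ really is a disjoint-sum $\sigma$-algebra of standard Borel spaces, and the fact gives that $(\FF, \FFF)$ is standard Borel. The argument is essentially routine; the only points deserving a moment's care are that the cited fact, though phrased for countable sequences, covers our finite families as a special case, and that the pairwise disjointness it requires genuinely holds for the $\FF_R$.
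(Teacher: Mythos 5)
Your proof is correct and follows the same route the paper takes: the paper justifies this lemma in one sentence by observing that all three spaces are built from the standard Borel attribute domains via products and disjoint unions, and then cites \cref{fac:SBSconstructions}. Your version just spells out the bookkeeping (the one-point factor for $\FF_R$, finiteness of $\R$, and pairwise disjointness of the $\FF_R$), all of which is accurate.
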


Recall that $\DB = \DB[\tau,\UU] = \powerbag_{ \fin }( \FF )$. We equip 
$\DB$ with the $\sigma$-algebra $\DDB \coloneqq \Count( \FF )$ from
\cref{sec:fpp}.

\begin{defi}
	A \emph{standard probabilistic database (standard PDB)} over $\tau$ and
	$\UU$ is a probability space $( \DB, \DDB, P )$ with $\DB = \powerbag_{ \fin
	} ( \FF )$ and $\DDB = \Count( \FF )$.
\end{defi}

That is, a standard PDB is a finite point process over the state space $( \FF,
\FFF )$. Note that every standard PDB is a PDB in the sense of \cref{def:pdb}.
Standard PDBs that are simple are suitable for modeling PDBs with set 
semantics.

\begin{exa}
	We continue from \cref{exa:running2}. The space $\DB$ of database instances
	is exactly the space of finite bags over $\FF = \FF_{\REL{Office}} \cup
	\FF_{\REL{TempRec}}$ with the counting $\sigma$-algebra $\DDB =
	\Count(\FF)$. This means, for example that the set
	\[
		\counting{ \FFFF_2 }{ 3 } 
		= \set[\big]{ D \in \DB \with \card{ D }_{ \FFFF_2 } = 3 }
	\]
	is measurable (i.\,e., an event) in $(\DB,\DDB)$. This is the set of
	database instances over $\FF$ that contain exactly three facts of the shape
	$\REL{TempRec}( \STR{4108}, d, \theta )$ where $d$ is an arbitrary string
	and $21 \leq \theta \leq 23$. Note that these three facts need not be 
	distinct, as facts are allowed to be present with multiplicities. A standard
	probabilistic database of the example schema is just a probability space
	with underlying measurable space $(\DB,\DDB)$. In particular, in these
	PDBs, events such as $\counting{ \FFFF_2 }{ 3 }$ carry a probability.
\end{exa}

\begin{lem}
	A standard PDB $( \DB, \DDB, P )$ is a standard Borel space itself.
\end{lem}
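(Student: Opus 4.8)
The plan is to decompose $\DB=\powerbag_{\fin}(\FF)$ by cardinality, reduce the claim to a statement about each cardinality layer, and then identify each layer with a Borel transversal inside a finite power of $\FF$. First I would observe that $\powerbag_k(\FF)=\counting{\FF}{k}$ is a counting event, so $\powerbag_k(\FF)\in\DDB$ for every $k\in\NN$, and that the layers $\bigl(\powerbag_k(\FF)\bigr)_{k\in\NN}$ form a countable measurable partition of $\DB$. For such a partition one always has $\DDB=\bigoplus_{k\in\NN}\DDB\rvert_{\powerbag_k(\FF)}$: indeed $\powerbag_k(\FF)\in\DDB$ forces $\DDB\rvert_{\powerbag_k(\FF)}=\set{\BBBB\in\DDB\with\BBBB\subseteq\powerbag_k(\FF)}$, and every $\BBBB\in\DDB$ is the countable union $\bigcup_k(\BBBB\cap\powerbag_k(\FF))$. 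By the disjoint-union part of \cref{fac:SBSconstructions}, it then suffices to show that each layer $\bigl(\powerbag_k(\FF),\DDB\rvert_{\powerbag_k(\FF)}\bigr)$ is standard Borel.

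To handle a single layer, I would build a Borel transversal for $\sim_k$. Since $\FF$ is standard Borel, $(\FF^k,\FFF^{\otimes k})$ is standard Borel by \cref{fac:SBSconstructions}. By the Borel isomorphism theorem (cf.~\cite{Kechris1995}), $\FF$ is Borel isomorphic to a Borel subset of $\RR$, so I may fix a Borel linear order $\preceq$ on $\FF$, i.e.\ one with $\set{(f,f')\with f\preceq f'}\in\FFF\otimes\FFF$. Let $\FF^k_{\preceq}\coloneqq\set{(X_1,\dots,X_k)\in\FF^k\with X_1\preceq\dots\preceq X_k}$ be the set of $\preceq$-nondecreasing tuples; it is a Borel subset of $\FF^k$, hence standard Borel by the subspace part of \cref{fac:SBSconstructions}. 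Every $\sim_k$-class contains exactly one nondecreasing tuple, so the restriction $g\coloneqq\sym\restriction_{\FF^k_{\preceq}}$ is a bijection from $\FF^k_{\preceq}$ onto $\powerbag_k(\FF)$.

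The heart of the argument is to verify that $g$ is a Borel isomorphism onto $\bigl(\powerbag_k(\FF),\DDB\rvert_{\powerbag_k(\FF)}\bigr)$. By \cref{fac:symcounting} and the definition of the induced $\sigma$-algebra in \labelcref{eq:symsigmaalg}, a set $\BBBB\subseteq\powerbag_k(\FF)$ lies in $\DDB\rvert_{\powerbag_k(\FF)}$ if and only if $(\sym\restriction_{\FF^k})^{-1}(\BBBB)\in\FFF^{\otimes k}$. For measurability of $g$, note that $g^{-1}(\BBBB)=\FF^k_{\preceq}\cap(\sym\restriction_{\FF^k})^{-1}(\BBBB)$, which is Borel. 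For measurability of $g^{-1}$, take a Borel set $\AAAA\subseteq\FF^k_{\preceq}$; writing $\theta_\pi\from\FF^k\to\FF^k$ for the coordinate permutation induced by a permutation $\pi$ of $\set{1,\dots,k}$, each $\theta_\pi$ is a Borel automorphism of $(\FF^k,\FFF^{\otimes k})$, and
\[
	(\sym\restriction_{\FF^k})^{-1}\bigl(g(\AAAA)\bigr)
	=\bigcup_{\pi}\theta_\pi(\AAAA)\in\FFF^{\otimes k}
\]
is a finite union of Borel sets, so $g(\AAAA)\in\DDB\rvert_{\powerbag_k(\FF)}$. Thus $g$ is a bimeasurable bijection; since being standard Borel is invariant under Borel isomorphisms (transport a compatible Polish topology along $g$), each layer is standard Borel, and reassembling the layers via \cref{fac:SBSconstructions} finishes the proof.

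I expect the main obstacle to be the bimeasurability of $g$, and in particular its backward direction: one must recognize that the $\sim_k$-saturation of a Borel subset of the transversal is the finite union of its images under the $k!$ coordinate permutations, each Borel because permutations act as Borel automorphisms of $\FF^k$. A secondary subtlety worth flagging is that $\FF^k_{\preceq}$, though genuinely standard Borel, need not be Polish in the subspace topology inherited from $\FF^k$; this causes no problem, since the definition of standard Borel only requires the existence of \emph{some} compatible Polish topology, obtained by refining the topology so that the chosen Borel set becomes clopen. (Alternatively, the injectivity of $\sym$ on the transversal makes \cref{fac:SBSinjmeasimg} applicable once a Polish topology on the layer has been fixed.)
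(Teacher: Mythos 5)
Your proof is correct, but it takes a genuinely different route from the paper. The paper's proof is a two-line citation: it invokes \cite[Proposition 9.1.IV]{DaleyVere-Jones2008}, which shows that the space of boundedly finite $\NN\cup\set{\infty}$-valued counting measures on a standard Borel space is Polish with the appropriate $\sigma$-algebra, restricts to the integer-valued (finite) counting measures, and identifies that subspace with $(\DB,\DDB)$ via the multiplicity mapping. You instead give an essentially self-contained argument: partition $\DB$ into the cardinality layers $\powerbag_k(\FF)$ (each a counting event), exhibit each layer as Borel-isomorphic to the standard Borel transversal $\FF^k_\preceq$ of $\sim_k$ obtained from a Borel linear order, and reassemble with \cref{fac:SBSconstructions}. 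Your argument is longer but more transparent about where the $\sigma$-algebra structure comes from, and it correctly handles the one genuinely delicate point (backward measurability of the transversal map, via the finite union of permutation images); it also sidesteps a detail the paper's proof glosses over, namely that the finite counting measures form a Borel subset of the Daley--Vere-Jones space. The price is that you need the Borel isomorphism theorem to get the Borel linear order, whereas the paper needs only the cited external result; the paper's route also yields a concrete compatible Polish topology (the weak topology on counting measures), which yours produces only abstractly by transport. One small caveat: your final parenthetical suggesting \cref{fac:SBSinjmeasimg} as an alternative for the backward direction is circular as stated, since it presupposes a Polish topology on the layer compatible with $\DDB\rvert_{\powerbag_k(\FF)}$, which is what is being established --- but this is flagged as an aside and your main argument does not rely on it.
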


\begin{proof}
	In~\cite[Proposition 9.1.IV]{DaleyVere-Jones2008}, it is shown that the 
	space of $\NN \cup \set{ \infty }$-valued counting measures on a standard
	Borel space (with the property of being finite on bounded sets) is a
	Polish space, whose $\sigma$-algebra is generated by the functions that map
	a counting measure $\mu$ to $\mu( \FFFF ) \in \NN \cup \set{ \infty }$ for
	all $\FFFF \in \FFF$. Restricting this space to integer-valued counting 
	measures, and equipping it with the corresponding subspace $\sigma$-algebra
	yields a standard Borel space. This space is isomorphic to $(\DB, \DDB)$
	via the function that maps a counting measure $\mu$ to the database instance
	whose multiplicity mapping is given by $\mu$.\footnote{Isomorphic here means
	that this function is bijective and measurable both ways; bijectivity is
	clear, and being measurable both ways stems from the fact that the
	generating events of the $\sigma$-algebra of the space of integer-valued
	counting measures on $(\FF, \FFF)$ are identified with the counting events in 
	$(\DB,\DDB)$, i.\,e.\ the generating events of $\DDB$.}
\end{proof}

\begin{conv}
	\emph{From now on, all PDBs we consider are standard PDBs.} When we speak of 
	PDBs, it is understood to refer to standard PDBs, unless explicitly stated
	otherwise.
\end{conv}

\subsection{The Possible Worlds Semantics of Queries and Views}\label{ssec:pws}

\emph{Views} are mappings between database instances. That is, a view $V$ is a
function $V \from \DB[\tau,\UU] \to \DB[\tau',\UU']$ for some database
schemas $\tau$ and $\tau'$ and universes $\UU$ and $\UU'$. We call $\tau$ the
\emph{input}, and $\tau'$ the \emph{output schema} of the view $V$. If $\tau'$
consists of a single relation symbol only, we call $V$ a \emph{query}. Queries 
are typically denoted by $Q$. Usually, queries and views are given as syntactic
expressions in some query language. As usual, we blur the distinction between a
query or view, and its syntactic representation. In the following, we let $(
\DB, \DDB ) = \big( \DB[\tau,\UU], \DDB[\tau,\UU]\big)$, and $( \DB_V,
\DDB_V ) = \big( \DB[\tau',\UU'], \DDB[\tau',\UU'] \big)$.

Let $\D = ( \DB, \DDB, P )$ be a PDB and let $V \from \DB \to \DB_V$ be a view.
For $\DDDD \subseteq \DB$, the image of $\DDDD$ under $V$ is
\[
	V( \DDDD )
	\coloneqq
	\set{ V( D ) \colon D \in \DDDD }
	\subseteq
	\DB_V\text.
\]
If the function $V$ is $( \DDB, \DDB_V )$-measurable, then $V$ introduces a
push-forward measure $P_V$ on $( \DB_V, \DDB_V )$ via
\begin{equation}\label{eq:pdb-push-forward}
	P_V( \DDDD ) 
	\coloneqq
	P\big( V^{-1}( \DDDD ) \big)
	=
	P\big( \set{ D \in \DB \with V( D ) \in \DDDD } \big)
\end{equation}
for all $\DDDD \in \DDB_V$. In this case
\[
	V( \D ) \coloneqq \big( \DB_V, \DDB_V, P_V \big)
\]
is a PDB. 

\begin{rem}\label[rem]{rem:pws}
	The kind of semantics we introduce here is the natural generalization of a
	standard choice for semantics on probabilistic databases. Conceptually, we
	consider queries and views that have well-defined semantics on traditional
	database instances. That is, they get as input a database instance, and as
	output, produce a new database instance. Such a semantics is lifted to
	probabilistic databases by applying the query or view on every possible
	world, and weighting it according to the probability measure of the input
	PDB. This notion of semantics is commonly called the \emph{possible answer
	sets semantics} of probabilistic databases~\cite[Section 2.3.1]{Suciu+2011}.
	It (or, to be more precise, its discrete version) has previously been also
	called \emph{possible worlds semantics (of queries)}~\cite{DalviSuciu2007},
	which is the term we prefer to use in this work, as we deem it the natural
	choice on how to define query (or view) semantics on PDBs that are modelled
	as a collection of possible worlds with a probability distribution, matching
	the standard definition of output probabilities of a (measurable) function
	on a probability space (cf.~\cite{Green2009}). Note that strictly speaking,
	this overloads the term \enquote{possible worlds semantics}: in reference to
	PDBs, \enquote{possible worlds semantics} means the definition of PDBs as 
	probability spaces over database instances, whereas in reference to 
	queries or views, it means the definition of the output of a query with
	respect to an application per possible world, as in
	\eqref{eq:pdb-push-forward}.
	
	For queries, another semantics has been discussed in literature, wich was
	later dubbed the \emph{possible answers semantics}~\cite[Section
	2.3.2]{Suciu+2011}. Under this semantics, the output of a query is the
	collection of tuples that may appear as an answer to the query (i.\,e.\ the
	tuples that appear in the output possible worlds under the possible worlds
	semantics), together with their marginal probability. For finite PDBs, this
	notion makes sense, because the result will be much smaller than a
	description of the whole output probability under possible worlds semantics.
	For uncountable infinite PDBs, however, this is not of much use. As soon as
	continuous probability distributions are involved, we naturally encounter
	PDBs where the marginal probability of every particular fact (or tuple in
	the output) may be zero.

	We note that to the best of our knowledge, there has been no formal 
	description of these semantics when duplicates are allowed. Note that for
	\emph{Boolean queries} with set semantics, i.\,e., queries whose output is
	either $\set{()}$ (true) or $\emptyset$ (false), both of the above semantics
	are essentially equivalent: the only possible answer tuple is the empty
	tuple $()$, and the only possible worlds of the answer are $\emptyset$ and
	$\set{ () }$.
\end{rem}

Note that if $V$ fails to be measurable, then
\labelcref{eq:pdb-push-forward} is not well-defined. In this case, $V$ has no
meaningful semantics on probabilistic databases! Thus, discussing the 
measurability of views and queries is an issue that requires attention. The 
following example shows that there are inconspicuous, seemingly simple queries
that are not measurable.

\begin{exa}\label[exa]{ex:nonmeas}
	Consider $\UU = \UU' = \RR$, together with the database schemas $\tau =
	\tau'$ consisting of the single, unary relation symbol $R$ with domain
	$\TT_R = \RR$ (equipped with the Borel $\sigma$-algebra). Let $\BBBB \in
	\Borel( \RR^2 )$. We define a function $Q_{ \BBBB } \from \DB \to \DB'$
	(where $\DB' = \DB[\tau',\UU']$) as follows.
	\[
		Q_{ \BBBB }( D ) \coloneqq
		\begin{cases}
			D & \text{if } D \text{ is a singleton } \bag{ R(x) } 
			\text{ and there exists a } y \in \RR \text{ s.\,t. } 
			(x,y) \in \BBBB\text{ and}\\
			\emptyset & \text{otherwise.}
		\end{cases}
	\]

	Observe that $Q_B^{-1}\big( \DB' \big) = \set[\big]{ \bag{ R(x) } \with x 
	\in \proj_1( \BBBB ) }$, where
	\[
		\proj_1( \BBBB ) = \set{ x \in \RR \with \text{ there is } y \in \RR 
		\text{ s.\,t. } (x,y) \in B }
		\text.
	\] 
	It is well known that there are Borel sets $\BBBB \subseteq \RR^2$ with the
	property that $\proj_1( \BBBB )$ is \emph{not} Borel in $\RR$~\cite[Theorem
	14.2 (Souslin) and Exercise 14.3]{Kechris1995}. In this case, $Q_{ \BBBB }$
	is not measurable even though $\BBBB$ is Borel.
\end{exa}

\subsection{Assembling Views from Queries}

A view $V$ with output relations $R_1, \dots, R_k$ can be identified with a set
of queries $Q_1, \dots, Q_k$, one per output relation symbol, such that for all
$D \in \DB$ it holds that
\[
	V( D ) = \bigcup_{ i = 1 }^{ k } Q_i( D )
	\text.
\]
Let  $(\FF_V, \FFF_V)$ be the measurable space of facts belonging to $(\DB_V,
\DDB_V)$ and let $(\FF_{Q_i}, \FFF_{Q_i})$ be the measurable space of facts
belonging to $(\DB_{Q_i}, \DDB_{Q_i})$. From the above equality, we get that
for all $D \in \DB$, all $\FFFF \in \FFF_V$, and all $n \in \NN$ it holds that
$\card{ V(D) }_{ \FFFF } = n$ if and only if there exist non-negative integers
$n_1,\dots,n_k$ with $n_1 + \dots + n_k = n$ and the property that
$\card{Q_i(D)}_{ \FFFF \cap \FF_{Q_i} } = n_i$ for all $i = 1, \dots, k$.
Note that this condition corresponds to an event given by a countable union of
counting events. Thus, we obtain the following.

\begin{lem}\label[lem]{lem:assembled-view}
	The view\/ $V$ is measurable if and only if\/ $Q_i$ is measurable for all $i
	= 1, \dots, k$.
\end{lem}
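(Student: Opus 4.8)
The plan is to prove both implications with the same tool, namely \cref{itm:basics1} of \cref{fac:measurablebasics}: in each direction I reduce the measurability claim to checking preimages of the generators of the relevant counting $\sigma$-algebra. Recall that $\DDB_V = \Count( \FF_V )$ is generated by the counting events $\counting{ \FFFF }{ n }$ with $\FFFF \in \FFF_V$ and $n \in \NN$, and that $\DDB_{Q_i} = \Count( \FF_{Q_i} )$ is generated by the $\counting{ \GGGG }{ m }$ with $\GGGG \in \FFF_{Q_i}$ and $m \in \NN$. Two structural facts are kept in mind throughout. First, $\FFF_V = \bigoplus_{ i = 1 }^{ k } \FFF_{Q_i}$ is a disjoint union $\sigma$-algebra, so by its very definition $\FFFF \cap \FF_{Q_i} \in \FFF_{Q_i}$ for every $\FFFF \in \FFF_V$, and conversely every $\GGGG \in \FFF_{Q_i}$ (viewed inside $\FF_V$) lies in $\FFF_V$. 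Second, since the output relation symbols $R_1, \dots, R_k$ are pairwise distinct, the fact spaces $\FF_{Q_i}$ are pairwise disjoint, and therefore $\card{ V(D) }_{ \GGGG } = \card{ Q_i(D) }_{ \GGGG }$ whenever $\GGGG \subseteq \FF_{Q_i}$.

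For the direction ``all $Q_i$ measurable $\Rightarrow$ $V$ measurable'', I would fix a generator $\counting{ \FFFF }{ n }$ of $\DDB_V$ and rewrite its $V$-preimage using the counting decomposition recorded just before the statement:
\[
	V^{-1}\big( \counting{ \FFFF }{ n } \big)
	= \bigcup_{ \substack{ n_1, \dots, n_k \geq 0 \\ n_1 + \dots + n_k = n } }
	\bigcap_{ i = 1 }^{ k }
	Q_i^{-1}\big( \counting{ \FFFF \cap \FF_{Q_i} }{ n_i } \big)
	\text.
\]
Because $\FFFF \cap \FF_{Q_i} \in \FFF_{Q_i}$, each $\counting{ \FFFF \cap \FF_{Q_i} }{ n_i }$ is a counting event in $\DDB_{Q_i}$, so its $Q_i$-preimage lies in $\DDB$ by measurability of $Q_i$. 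The inner intersection over $i$ is a finite intersection of $\DDB$-sets, and the outer union ranges over the finitely many compositions of $n$ into $k$ nonnegative parts; hence the whole right-hand side is a finite union of $\DDB$-sets and lies in $\DDB$. By \cref{itm:basics1}, $V$ is measurable.

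For the converse I would assume $V$ measurable and fix a generator $\counting{ \GGGG }{ m }$ of $\DDB_{Q_i}$, with $\GGGG \in \FFF_{Q_i}$. Since $\FFF_{Q_i} \subseteq \FFF_V$ we also have $\GGGG \in \FFF_V$, so $\counting{ \GGGG }{ m }$ is a counting event in $\DDB_V$. Disjointness of the $\FF_{Q_j}$ gives $\card{ V(D) }_{ \GGGG } = \card{ Q_i(D) }_{ \GGGG }$ for every $D$, and therefore $Q_i^{-1}\big( \counting{ \GGGG }{ m } \big) = V^{-1}\big( \counting{ \GGGG }{ m } \big) \in \DDB$ by measurability of $V$. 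Applying \cref{itm:basics1} once more yields measurability of $Q_i$.

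The only genuine subtlety — and the step I would be most careful about — is the bookkeeping of which $\sigma$-algebra each set inhabits: that $\FFFF \cap \FF_{Q_i}$ is $\FFF_{Q_i}$-measurable, and that $\FFF_{Q_i} \subseteq \FFF_V$. Both are exactly the defining properties of the disjoint union $\sigma$-algebra $\FFF_V = \bigoplus_i \FFF_{Q_i}$, so they need only to be invoked, not reproved. Everything else reduces to closure of $\sigma$-algebras under finite unions and intersections together with the counting decomposition already established, making the argument essentially routine once the generators are identified correctly.
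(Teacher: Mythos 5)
Your proposal is correct and follows essentially the same route as the paper: the paper's entire argument is the counting-event decomposition $\card{V(D)}_{\FFFF}=n \iff \exists n_1+\dots+n_k=n$ with $\card{Q_i(D)}_{\FFFF\cap\FF_{Q_i}}=n_i$, which is exactly your forward direction, and your converse (using disjointness of the $\FF_{Q_i}$ so that $Q_i$-preimages of counting events are $V$-preimages) is the natural filling-in of the "only if" half that the paper leaves implicit. The only cosmetic difference is that the paper calls the union of counting events countable where you correctly observe it is in fact finite for each fixed $n$; either suffices.
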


By the merit of \cref{lem:assembled-view}, we only need to discuss the 
measurability of queries.

\section{General Measurability Criteria}

In the remainder of the paper we establish measurability results for various
types of queries as they typically appear in database applications. In this
section, we set out the general setup of said investigation and introduce some
general measurability results that are not yet tailored to specific query 
languages.

\subsection{Setup}

Henceforth, we adhere to the following notational conventions when discussing
the measurability of a query $Q$. 

\begin{conv}[Inputs]
	The input schema of $Q$ is $\tau = ( \A, \R, \sort )$. We consider input 
	instances over $\tau$, and the sorted universe $\UU$ (with all attribute 
	domains Polish). 

	The associated fact space is denoted as $( \FF, \FFF )$, with subspaces 
	$( \FF_R, \FFF_R )$ for all $R \in \tau$. The space of $R$-tuples is given
	as $( \TT_R, \TTT_R )$. For all $R \in \tau$, we fix a compatible Polish 
	metric $d_R$ on $\TT_R$ and let $\TT_R^*$ be a countable, dense set in
	$\TT_R$. With abuse of notation, we denote the corresponding metric on
	$\FF_R$ by $d_R$ as well. Note that $\FF^*_R \coloneqq \TT_R^*$ is a
	countable dense set in $\FF_R$.

	We denote the input (standard) PDB under consideration by $\D = ( \DB, \DDB,
	P )$, where $\DB = \powerbag_{\fin}( \FF )$ and $\DDB = \Count( \FF )$.
\end{conv}

\begin{conv}[Outputs]
	The output schema of $Q$ is $\tau_Q = ( \A_Q, \R_Q, \sort_Q )$ where $R_Q$
	is the only relation symbol in $\R_Q$. We consider output instances over
	$\tau_Q$, and sorted universe $\UU_Q$ (with all attribute domains Polish). 
	The associated fact space is denoted as $( \FF_Q, \FFF_Q )$. The space of
	$R_Q$-tuples is given as $( \TT_Q, \TTT_Q )$. We fix a compatible Polish 
	metric $d_Q$ on $\TT_Q$, and a countable dense set $\TT_Q^*$ in $\TT_Q$.
	Again $d_Q$ will also denote the corresponding metric on $\FF_Q$, and the
	set $\FF_Q^*$ is countable and dense in $\FF_Q$.

	The output measurable space is denoted by $( \DB_Q, \DDB_Q )$, were $\DB_Q =
	\powerbag_{\fin}( \FF_Q )$ and $\DDB_Q = \Count( \FF_Q )$.
\end{conv}

Thus, our goal is to show that a given function $Q \from \DB \to \DB_Q$ is
$(\DDB, \DDB_Q)$-measurable.

\begin{rem}
	We have some straightforward measurability criteria from the general 
	properties of measurable functions and the used $\sigma$-algebras.
	\begin{enumerate}
		\item By \cref{fac:measurablebasics}\labelcref{itm:basics1}, to show the
			measurability of a query $Q$, it suffices to show that 
			\[
				\set{ D \in \DB \with \card{ Q(D) }_{ \FFFF } = n } \in \DDB
			\]
			for all $\FFFF \in \FFF_Q$ and all $n \in \NN$, as the counting
			events generate the $\sigma$-algebra $\DDB_Q$. This remains true, if
			we replace \enquote{$=n$} with \enquote{$\geq n$} or only require $n
			\in \NN_+$.
		\item By \cref{fac:measurablebasics}\labelcref{itm:basics2}, compositions
			of measurable functions are measurable. That is, for query languages
			whose semantics are defined inductively over the structure of their
			syntactic expression, it suffices to show measurability for the basic
			building blocks.
		\item By \cref{fac:topomeas}\labelcref{itm:topomeas2}, limits of 
			measurable queries are measurable.
	\end{enumerate}
\end{rem}

\subsection{The Mapping Theorem}\label{sec:mapping}
The following is a partial restatement of what is known as the \emph{mapping 
theorem} of point processes. The original theorem from point process theory 
also involves the transfer of certain properties to the image
space~\cite[Theorem 5.1]{LastPenrose2017} which is, however, of less importance for
the remainder of the paper. Moreover, we allow partial transformations as long
as their domain is measurable (and this does not invalidate the measurability
statement from the mapping theorem).

\begin{thm}[{cf.~\cite[Theorem 5.1]{LastPenrose2017}}]\label[thm]{thm:mapping}
	Let $\FFFF_q \in \FFF$. If $q \from \FFFF_q \to \FF_Q$ is
	$(\FFF,\FFF_Q)$-measurable, then the function $Q \from \DB \to \DB_Q$ with
	\[
		\card{ Q( D ) }_{ f } \coloneqq
		\begin{cases}
			\card{ D }_{ q^{-1}( f ) }	
				& \text{if }q^{-1}( f ) \neq \emptyset\text{ and}\\
			0									
				& \text{otherwise,}
		\end{cases}
	\]
	for all $D \in \DB$ is a measurable query.
	\qed
\end{thm}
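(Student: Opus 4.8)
The plan is to verify measurability of $Q$ directly via the generating counting events, using the criterion from the Remark: it suffices to show that for every $\FFFF \in \FFF_Q$ and every $n \in \NN$, the set $\set{ D \in \DB \with \card{ Q(D) }_{\FFFF} = n }$ lies in $\DDB = \Count(\FF)$. The key observation is that the mapping $q \from \FFFF_q \to \FF_Q$ lifts to the bag level by "pulling back" fact-multiplicities: the multiplicity of $f$ in $Q(D)$ is the total $D$-multiplicity of all facts in the $q$-preimage of $f$. So counting facts of $Q(D)$ inside $\FFFF$ amounts to counting facts of $D$ inside the preimage $q^{-1}(\FFFF)$, at least as far as the total count is concerned.

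\emph{First} I would compute $\card{ Q(D) }_{\FFFF}$ explicitly. By definition,
\[
	\card{ Q(D) }_{\FFFF}
	= \sum_{ f \in \FFFF } \card{ Q(D) }_f
	= \sum_{ f \in \FFFF,\, q^{-1}(f) \neq \emptyset } \card{ D }_{ q^{-1}(f) }
	\text.
\]
Because the sets $q^{-1}(f)$ for distinct $f \in \FF_Q$ are pairwise disjoint and their union over $f \in \FFFF$ is exactly $q^{-1}(\FFFF) \cap \FFFF_q$, this telescopes to $\card{ D }_{ q^{-1}(\FFFF) \cap \FFFF_q }$, counting each $D$-fact in $\FFFF_q$ according to whether its $q$-image lands in $\FFFF$. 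Since $q$ is $(\FFF, \FFF_Q)$-measurable and $\FFFF \in \FFF_Q$, we have $q^{-1}(\FFFF) \in \FFF\restriction_{\FFFF_q}$; as $\FFFF_q \in \FFF$ by hypothesis, the subspace $\sigma$-algebra $\FFF\restriction_{\FFFF_q}$ is contained in $\FFF$, so the set $\GGGG \coloneqq q^{-1}(\FFFF) \cap \FFFF_q$ is itself a member of $\FFF$.

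\emph{Second}, with $\GGGG \in \FFF$ in hand, I rewrite the target event as
\[
	\set{ D \in \DB \with \card{ Q(D) }_{\FFFF} = n }
	= \set{ D \in \DB \with \card{ D }_{ \GGGG } = n }
	= \counting{ \GGGG }{ n }
	\text,
\]
which is precisely a counting event and hence lies in $\Count(\FF) = \DDB$ by definition. Ranging over all $\FFFF \in \FFF_Q$ and all $n \in \NN$ and invoking \cref{fac:measurablebasics}\labelcref{itm:basics1} (the counting events generate $\DDB_Q$), this establishes that $Q$ is $(\DDB, \DDB_Q)$-measurable.

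\textbf{The main obstacle} I expect is the bookkeeping around the partial domain $\FFFF_q$ and the disjointness of the fibers $q^{-1}(f)$: one must check carefully that the facts of $D$ \emph{not} in $\FFFF_q$ (which contribute nothing to $Q(D)$) are correctly excluded, and that no double-counting occurs when several source facts map to the same target fact $f$ — this is exactly why $\card{ Q(D) }_f$ is defined as the \emph{sum} of source multiplicities over the whole fiber rather than a single value. The cleanest way to handle this rigorously is to note that $q^{-1}$ partitions $\FFFF_q$ into fibers indexed by $q(\FFFF_q) \subseteq \FF_Q$, so the passage from the sum over $f \in \FFFF$ to the single count $\card{D}_{\GGGG}$ is just reindexing a disjoint sum; the measurability of $\GGGG$ then follows from measurability of $q$ together with \cref{fac:SBSconstructions} (subspaces of standard Borel spaces) guaranteeing that $\FFF\restriction_{\FFFF_q} \subseteq \FFF$.
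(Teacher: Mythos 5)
Your proof is correct and follows essentially the same route as the paper's: both reduce $\card{Q(D)}_{\FFFF}=n$ to the counting event $\counting{q^{-1}(\FFFF)}{n}$ and conclude via measurability of $q$ and the fact that counting events generate $\DDB_Q$. Your extra bookkeeping about the fibers and the subspace $\sigma$-algebra on $\FFFF_q$ is sound but not needed beyond what the paper's two-line argument already contains.
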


In this (restricted) form, the theorem is straightforward to verify.

\begin{proof}
	Let $\FFFF_q \in \FFF$ and $q \from \FFFF_q \to \FF_Q$ be measurable.
	Now let $\FFFF \in \FFF_Q$ and $n \in \NN$. Then
	\[
		\card{ Q(D) }_{ \FFFF } = n \iff \card{ D }_{ q^{-1}( \FFFF ) } = n\text.
	\]
	Since $q^{-1}( \FFFF ) \in \FFF$, the claim follows.
\end{proof}

Intuitively, the theorem states that whenever we have a measurable
transformation of the fact space of a a PDB, then we obtain a measurable query
when we just apply this transformation to all facts in the database
instances.\footnote{Functions of the shape of $Q$ in \cref{thm:mapping} are a
special case of \enquote{mapping constructs} (applying a function to every
element of a bag) that can be found in previously considered bag query
languages~\cite{GrumbachMilo1996,LibkinWong1997}.}

\begin{exa}\label[exa]{exa:running3}
	We continue our running example. Recall that $\FF_{\REL{TempRec}}$ is the
	space of facts $\REL{TempRec}( r,d,\theta )$ where $r$ and $d$ are strings
	and $\theta$ is a real number. Consider the function $q \from
	\FF_{\REL{TempRec}} \to \FF_{\REL{TempRec}}$ that increases the temperature
	by $2$ \textdegree{}C, i.\,e.\ with
	\[
		q\big( \REL{TempRec}( r,d, \theta ) \big) =
		\REL{TempRec}( r,d, \theta + 2 )
		\text.
	\]
	Then $q$ is $( \FFF_{\REL{TempRec}}, \FFF_{\REL{TempRec}} )$-measurable. 
	This follows, since the addition of $2$ is a continuous function on $\RR$ 
	and by the construction of the measurable spaces.
	
	Thus, by \cref{thm:mapping}, the query that, given an instance $D$, applies
	$q$ to every fact in $D$ (i.\,e.\ increasing temperatures by $2$) is
	measurable.
\end{exa}

While \cref{thm:mapping} is a nice statement, it fails to cover most queries of
interest, as database queries often consider and manipulate multiple tuples at 
once. Such transformations are not captured by \cref{thm:mapping}. We therefore
need measurability statements beyond \cref{thm:mapping}.

\subsection{Continuous One-to-One Decompositions}
In this subsection we introduce a new criterion for query measurability that 
overcomes the aforementioned limitation of \cref{thm:mapping}.

\begin{lem}\label[lem]{lem:central}
	Let $Q \from \DB \to \DB_Q$. Then $Q$ is measurable if there exists some 
	$k \in \NN_{+}$, pairwise distinct $R_1, \dots, R_k \in \tau$ and functions
	$q_i \from \FF_Q \to \FF_{R_i}$ for $i = 1, \dots, k$ with the following
	properties:
	\begin{enumerate}
		\item \label[itm]{itm:central1} 
			For all $n \in \NN_{+}$ there is a set $N_Q( n ) \in \NN^k \setminus 
			\set{ (0,\dots,0) }$ with
			\[
				\card{ Q(D) }_{ f } = n
				\iff
				\big(
					\card{ D }_{ q_1(f) },
					\dotsc,
					\card{ D }_{ q_k(f) }
				\big) \in N_Q( n )
			\]
			for all $D \in \DB$ and all $f \in \FF_Q$.
		\item \label[itm]{itm:central2}
			For all $i = 1, \dots, k$, the function $q_i$ is injective and
			continuous.
	\end{enumerate}
\end{lem}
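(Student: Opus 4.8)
The goal is to establish measurability of $Q \from \DB \to \DB_Q$, and the natural strategy is to invoke the reduction stated in the first measurability remark: it suffices to show that the counting event
\[
	\set{ D \in \DB \with \card{ Q(D) }_{ \FFFF } = n } \in \DDB
\]
for every $\FFFF \in \FFF_Q$ and every $n \in \NN_+$. The plan is to rewrite this global counting condition on $Q(D)$ in terms of counting conditions on $D$ itself, using the two hypotheses. The first step is to reformulate \labelcref{itm:central1}: for a single fact $f \in \FF_Q$, membership $\card{ Q(D) }_f = n$ is controlled by the vector of multiplicities $\big( \card{ D }_{ q_1(f) }, \dots, \card{ D }_{ q_k(f) } \big)$ lying in the prescribed set $N_Q(n) \subseteq \NN^k$. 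The subtlety is that we must aggregate this across all $f$ in a given measurable set $\FFFF$, so I first reduce to understanding, for each fixed multiplicity profile, which facts $f$ produce that profile, and then sum the resulting multiplicities of $Q(D)$ over $\FFFF$.

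\textbf{Using injectivity and continuity.} The heart of the argument is \labelcref{itm:central2}. Since each $q_i$ is continuous and injective on the standard Borel space $\FF_Q$, \cref{fac:SBSinjmeasimg} (Lusin--Souslin) guarantees that $q_i$ maps measurable sets to measurable sets: for any $\FFFF \in \FFF_Q$, the image $q_i(\FFFF) \in \FFF_{R_i} \subseteq \FFF$. Moreover, injectivity lets me pull the multiplicity of a fact $q_i(f)$ in $D$ back to a multiplicity statement indexed by $f$. The idea is to partition $\FFFF$ according to the joint multiplicity profile $\big( \card{ D }_{ q_1(f) }, \dots, \card{ D }_{ q_k(f) } \big)$ and observe that each piece, being a finite intersection of preimages of counting-type conditions under the measurable maps $q_i$, is itself measurable. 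Because the relation symbols $R_1, \dots, R_k$ are pairwise distinct, the images $q_i(\FF_Q)$ land in disjoint fact subspaces $\FF_{R_i}$, so the multiplicities $\card{ D }_{ q_i(f) }$ are independent coordinates and no collision between the $q_i$ can occur; this disjointness is what makes the profile well-defined and the $N_Q(n)$ bookkeeping consistent.

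\textbf{Assembling the counting event.} With the images measurable, I can express $\set{ D \with \card{ Q(D) }_{ \FFFF } = n }$ as a countable combination of events of the form $\set{ D \with \card{ D }_{ q_i(\FFFF') } = m }$ for measurable subsets $\FFFF' \subseteq \FFFF$ and integers $m$. Since $D$ is a \emph{finite} bag, only finitely many facts in $\FFFF$ receive nonzero multiplicity in $Q(D)$, so the total count $\card{Q(D)}_\FFFF$ decomposes into a sum over finitely many profiles, each profile contributing according to $N_Q$. Each such constituent event is a counting event or a finite/countable Boolean combination of counting events, hence lies in $\DDB = \Count(\FF)$. Taking the appropriate countable union over admissible profiles summing to $n$ then yields the desired membership in $\DDB$, and the first measurability remark finishes the proof.

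\textbf{Main obstacle.} The delicate point I expect to fight with is the \emph{aggregation across uncountably many facts} $f \in \FFFF$: condition \labelcref{itm:central1} is stated pointwise in $f$, but the counting $\sigma$-algebra only gives access to set-level counting events $\counting{\XXXX}{n}$, not to per-fact information. Bridging this gap is exactly where injectivity and continuity are indispensable — they convert the pointwise profile description into genuinely measurable set images via Lusin--Souslin, so that the per-fact condition can be ``integrated'' into a measurable counting statement on $D$. Getting the combinatorial bookkeeping of the profiles in $N_Q(n)$ to line up as a legitimate countable union of counting events, while respecting bag multiplicities, is the step requiring the most care.
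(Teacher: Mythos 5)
Your plan correctly identifies the right reduction (to counting events $\set{D \with \card{Q(D)}_{\FFFF} = n}$), the right role of Lusin--Souslin (the images $q_i(\FFFF)$ are measurable because the $q_i$ are injective and continuous), and even correctly names the main obstacle: condition \labelcref{itm:central1} is pointwise in $f$, while the counting $\sigma$-algebra only sees set-level events. But the proposal does not actually supply the mechanism that bridges this gap, and Lusin--Souslin alone does not do it. The concrete failure point is your ``partition $\FFFF$ according to the joint multiplicity profile'': that partition depends on $D$, so it cannot be used to write the target event as a countable combination of counting events. Knowing the aggregate counts $\card{D}_{q_i(\FFFF')}$ for measurable $\FFFF' \subseteq \FFFF$ chosen from some fixed countable family does not determine $\card{Q(D)}_{\FFFF}$ unless that family is fine enough to isolate the individual facts $f$ with $\card{Q(D)}_f > 0$; for instance, for the difference query, $\card{D}_{R(\TTTT)} = 5$ and $\card{D}_{S(\TTTT)} = 3$ is consistent with any output count between $2$ and $5$ depending on how the multiplicities align on individual tuples. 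You never say which countable, $D$-independent family of subsets $\FFFF'$ you intend to use, and that choice is the entire content of the lemma.

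The paper's proof supplies exactly this missing piece: it fixes a countable dense set $\FF_Q^* \subseteq \FF_Q$ and a compatible Polish metric, and characterizes $\card{Q(D)}_{\FFFF} \geq n$ by the existence of finitely many approximating centers $f_{1,\epsilon}^*, \dots, f_{\ell,\epsilon}^* \in \FF_Q^*$, pairwise separated by some rational $\epsilon_0$, such that for all rational $\epsilon < \epsilon_0/4$ the counts $\card{D}_{q_i(\FFFF) \cap q_i(B_\epsilon(f_{j,\epsilon}^*))}$ realize a profile in $N_Q(n_j)$ with $\sum_j n_j \geq n$. Injectivity of the $q_i$ is used twice in a way your sketch does not anticipate: once (with continuity, via Lusin--Souslin) to make $q_i(\FFFF) \cap q_i(B_\epsilon(f^*))$ measurable, and once to guarantee that for small enough $\epsilon$ each ball image captures the count of a \emph{single} input fact, so that the per-fact profile is recovered in the limit $\epsilon \to 0$. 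Quantifying over rational $\epsilon_0, \epsilon$ and over the countable set $\FF_Q^*$ then exhibits the event as a countable union/intersection of counting events. Without this shrinking-ball approximation (or an equivalent countable localization scheme), your argument does not go through, so as it stands the proposal has a genuine gap rather than being a different but complete route.
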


Think of the functions $q_i$ as providing a decomposition of $R_Q$-facts into
the $R_i$-facts of the input instance they originated from under the query.
The set $N_Q( n )$ provides the recipe, how the number of occurrences of a fact in
the output is determined by the counts of its decompositions in the input. That
is, requirement \labelcref{itm:central1} stipulates the query semantics. The
topological requirement \labelcref{itm:central2} ensures measurability. 

A simple example application that we ask the reader to have in mind is that
of a difference operator on bag instances (this is in fact, one of our later
applications, where it is made precise). Intuitively, if $Q$ is the difference
of relations $R$ and $S$, then the number of times a tuple $t$ occurs in the
output is given by the number of times it appears in $R$, minus the number of
times it appears in $S$. This fits the pattern of \cref{lem:central} with
functions $q_1,q_2$ being the identity $t \mapsto t$, and $N_Q(n)$ being the
set of pairs $(n_1,n_2)$ with $\max(0,n_1 - n_2) = n$. \Cref{lem:central}
provides a generalization of this setup, allowing for much more general
functions $q_i$, and sets $N_Q(n)$.

\begin{rem}
	This is vaguely related to the notion of \emph{(why- and how-)provenance}
	of a tuple in the output of a view~\cite{Green+2007}, with the functions
	$q_i$ providing the \enquote{why-information}, and the set $N_Q( n )$ 
	providing the \enquote{how}. \Cref{lem:central} now only applies to
	queries inducing a very particular provenance structure (as governed by
	the $q_i$ and $N_Q(n)$).
\end{rem}

\begin{proof}[Proof of \cref{lem:central}]
	Let $Q \from \DB \to \DB_Q$, let $k \in \NN_{+}$. Let $R_1, \dots, R_k$ be
	distinct relation symbols in $\tau$. In the following, we write $( \FF_i,
	\FFF_i )$ instead of $( \FF_{ R_i }, \FFF_{ R_i } )$ for all $i = 1, \dots,
	k$. As required, let $q_i \from \FF_Q \to \FF_i$ for all $i = 1, \dots, k$
	and assume that \labelcref{itm:central1,itm:central2} hold. We have to
	show that $Q$ is measurable, which is settled, in particular, by showing that
	$\set{ D \in \DB \with \card{ Q(D) }_{ \FFFF } \geq n } \in \DDB$ for all 
	$\FFFF \in \FFF_Q$ and all $n \in \NN_+$.

	Let $d$ be a fixed Polish metric on $\FF_Q$ generating $\FFF_Q$ and let
	$\FF_Q^* \subseteq \FF_Q$ be countable and dense in $\FF_Q$. Let $\FFFF \in
	\FFF_Q$ and $n \in \NN_{+}$. We show that for all $D \in \DB$, it holds that
	$\card{ Q(D) }_{ \FFFF } \geq n$ is equivalent to the following condition.

	\begin{cond}\label[cond]{cond:central}
		For some $\ell \in \NN_{+}$ there exist $n_1, \dots, n_{\ell} \in \NN$
		with $n_1 + n_2 + \dots + n_{\ell} \geq n$, and $(n_{ij})_{ i = 1, \dots, 
		k\text;~j = 1, \dots, \ell }$ with 
		\begin{equation}\label{eq:centraldecomp} 
			\big( n_{1j}, \dots, n_{kj} \big) \in N_Q( n_j )
		\end{equation}
		for all $j = 1, \dots, \ell$; and there exists $\epsilon_0 > 0$ such that
		for all positive $\epsilon < \frac{\epsilon_0}4$ there are
		$f_{1,\epsilon}^*, \dots, f_{\ell,\epsilon}^* \in \FF_Q^*$ with the
		following properties:
		\begin{enumerate}[(a)]
			\item\label{itm:centralconda} For all $j,j' = 1,\dots, \ell$ with $j
				\neq j'$ it holds that $d_Q( f_{j,\epsilon}^*, f_{j',\epsilon}^* ) >
				\frac{\epsilon_0}2$.
			\item\label{itm:centralcondb} For all $i = 1, \dots, k$ and all $j =
				1, \dots, \ell$ it holds that $\card{ D }_{ q_i( \FFFF ) \cap q_i(
				B_\epsilon( f_{j,\epsilon}^* ) ) } = n_{ij}$\label{eq:centralapprox}
		\end{enumerate}
	\end{cond}

	An illustration of the situation of \cref{cond:central} can be found in
	\cref{fig:central}. We now prove both directions of the claimed equivalence.
		\begin{description}
			\item[$\Rightarrow$]
			First assume that $\card{ Q(D) }_{ \FFFF } \geq n > 0$ and suppose
			that $f_1, \dots, f_{\ell}$ are (pairwise distinct) facts from
			$\FFFF$ with the property $\card{ D }_{ q_i( f_j ) } > 0$ for at least
			one $i = 1, \dots, k$ for all $j = 1, \dots, \ell$. Such $f_1, \dots,
			f_{\ell}$, with $\ell \in \NN_{+}$, exist by requirement
			\labelcref{itm:central1}. Let $n_j \coloneqq \card{ Q(D) }_{ f_i }$
			for all $j = 1, \dots, \ell$.  Then $\sum_{ j = 1 }^{ \ell } n_j \geq
			n$, and, according to \labelcref{itm:central1}, $\big( n_{1j}, \dots,
			n_{kj} \big) \in N_Q( n_j )$, where $n_{ij} \coloneqq \card{ D }_{
			q_i( f_j ) }$ for all $i =1,\dots,k$ and $j = 1,\dots,\ell$. We choose
			$\epsilon_0 > 0$ such that
			\[
				\epsilon_0 < 
				\min\set[\big]{ d_Q( f_j, f ) \with 
					j=1, \dots, \ell, f \neq f_j, f \in Q(D) }
			\]
			and let $0 < \epsilon < \frac{\epsilon_0}4$. Because $\FF_Q^*$ is
			dense in $\FF_Q$, we can choose $f_{1,\epsilon}^*, \dots,
			f_{\ell,\epsilon}^* \in \FF_Q$ such that $d_Q( f_j, f_{j,\epsilon}^* )
			< \epsilon < \frac{\epsilon_0}4$ for all $j = 1, \dots, \ell$. Then
			from the triangle inequality it follows that
			\[
				d_Q( f_{j,\epsilon}^*, f_{j',\epsilon}^* ) 
				\geq 
				d_Q\big( f_j, f_{j'} \big) - 
				d_Q\big( f_{j,\epsilon}^*, f_j \big) -
				d_Q\big( f_{j'}, f_{j',\epsilon}^* \big)
				> 
				\epsilon_0 - 2 \tfrac{\epsilon_0}4 = \tfrac{\epsilon_0}2
				\text.
			\]
			In particular, since $\epsilon < \tfrac{\epsilon}4$, the balls
			$B_{\epsilon}( f_{j,\epsilon}^* )$ are pairwise disjoint and 
			$f_j \in B_{\epsilon}( f_{j,\epsilon}^* )$ for all $j = 1,\dots,\ell$.
			Moreover, by the choice of $\epsilon_0$, no fact $f$ from $Q(D)$ other
			than $f_1,\dots,f_\ell$ appears in the balls $B_{\epsilon}(
			f_{1,\epsilon}^* ), \dots B_{\epsilon}( f_{j,\epsilon}^* )$.
			Because $q_i$ is injective, it follows that $q_i( f ) \notin q_i(
			B_{\epsilon}( f_{j,\epsilon}^* ) )$ for all $f \neq f_j$ appearing in
			$Q(D)$, and all $i = 1, \dots, k$. This means that for all $i =
			1,\dots,k$ and $j =1, \dots, \ell$ it holds that
			\[
				\card{ D }_{ q_i( \FFFF ) \cap q_i( B_\epsilon( f_{j,\epsilon}^* )
				) } 
				= \card{ D }_{ q_i( f_j ) } 
				= n_{ij}\text.
			\]
			Thus, \cref{cond:central} holds.
		\item[$\Leftarrow$] Suppose that \cref{cond:central} holds and let $\ell 
			\in \NN_+$, $n_1 + \dots + n_{\ell} \geq n$ and $n_{ij}$ for
			$i=1,\dots,k$ and $j = 1,\dots,\ell$, and $\epsilon_0$ be given
			accordingly. Note that for all $\epsilon < \frac{\epsilon_0}4$, the
			balls $B_{\epsilon}( f_{j,\epsilon}^* )$ are pairwise disjoint by
			condition \labelcref{itm:centralconda}. Since $q_i$ is injective, the
			sets $q_i\big( B_{\epsilon}( f_{j,\epsilon}^* )) \big)$ are disjoint
			as well. By \labelcref{itm:centralcondb}, for all $i = 1, \dots, k$,
			and all $j = 1, \dots, \ell$ it holds that
			\[
				\card{ D }_{ q_i( \FFFF ) \cap 
				q_i( B_{\epsilon}( f_{j,\epsilon}^* ) ) } 
				= n_{ij}\text.
			\]
			If $\epsilon$ is small enough, then for all $n_{ij}$ with $n_{ij} > 0$
			it holds that there exists a single fact $f_{i,j,\epsilon} \in q_i(
			\FFFF ) \cap q_i(B_{\epsilon}( f_{j,\epsilon}^* ))$ with the property
			that $\card{ D }_{ f_{i,j,\epsilon} } = n_{ij}$. To see this, note 
			that if the fact count $n_{ij}$ were distributed over multiple facts,
			then by the injectivity of the functions $q_i$, these facts have
			pairwise different preimages under $q_i$. For small enough $\epsilon$,
			however, $B_{\epsilon}( f_{j,\epsilon}^* )$ can only contain one of
			these facts, yielding a contradiction with
			\labelcref{itm:centralcondb}. We now fix $\epsilon$ small enough such
			that the above holds, as well as an index $j = 1, \dots, \ell$.
			Moreover, we pick some $i(j) \in \set{ 1, \dots, k }$ with $n_{i(j),j}
			> 0$ and define $f_j \coloneqq q_{i(j)}^{-1}( f_{i(j),j,\epsilon} )$.
			As $q_{i(j)}$ is injective, it follows that $f_j \in \FFFF \cap
			B_{\epsilon}( f_{j,\epsilon}^* )$. Without loss of generality, we
			assume that $f_j \in B_{\epsilon'}( f_{j,\epsilon'}^* )$ for all $0 <
			\epsilon' < \epsilon$.\footnote{Generally, it could happen that $j$
			switches roles with some $j'$ where $(n_{1j'},\dots,n_{kj'}) = (
			n_{1j},\dots,n_{kj})$. If there are exactly $m$ indices $j$ for
			which the sequences $(n_{1j},\dots,n_{kj})$ coincide, the
			corresponding balls $B_{\epsilon}( f_{j,\epsilon}^* )$ for these
			indices have (at least) $m$ distinct accumulation points.} This means
			that $\bigcap_{ 0 < \epsilon' < \epsilon } B_{\epsilon'}(
			f_{j,\epsilon'}^* ) = \set{ f_j }$.

			We claim that for all $i = 1, \dots, k$ it then holds that
			\[
				\card{ D }_{ q_i( f_j ) } = n_{ij}\text.
			\]
			By choice, this already holds for $i = i(j)$. Suppose $i$ is an index
			with $n_{ij} = 0$. As $\card{ D }_{ q_i( \FFFF ) \cap
			q_i(B_{\epsilon}( f_{j,\epsilon}^* )) } = n_{ij} = 0$ and $f_j \in
			B_{\epsilon}(f_{j,\epsilon}^*)$, it follows that $\card{D}_{q_i(f_j)}
			= 0$. Now let $i$ be an index different from $i(j)$ such that
			$n_{ij}>0$. We have argued before that then there exists a single fact
			$f_{i,j,\epsilon} \in q_i( \FFFF ) \cap q_i( B_{\epsilon}(
			f_{j,\epsilon}^* )) = n_{ij}$. Assume that $f_{i,j,\epsilon} 
			\neq q_i(f_j)$. Then $q_i^{-1}( f_{i,j,\epsilon} ) \neq f_j$ and
			therefore, there is $\epsilon'< \epsilon$ such that $q_i^{-1}(
			f_{i,j,\epsilon'} ) \notin B_{\epsilon}( f_{j,\epsilon'}^* )$, leading
			to a contradiction with \labelcref{itm:centralcondb}. Thus,
			$f_{i,j,\epsilon} = q_i(f_j)$ and $\card{D}_{ q_i(f_j) } = 
			\card{D}_{ f_{i,j,\epsilon} } = n_{ij}$. From property
			\labelcref{itm:central1}, it follows that $\card{Q(D)}_{ f_j } = n_j$.
			Thus, since $n_1+\dots+n_\ell \geq n$ and since the facts $f_j \in
			\FFFF$ are pairwise distinct, it follows that $\card{Q(D)}_{\FFFF} \geq
			n$.
	\end{description}

	To conclude the proof, we argue that \cref{cond:central} can be used to
	express $\card{Q(D)}_{ \FFFF } \geq n$ as a countable combination of 
	counting events. Note that because $\QQ_+$ is dense in $\RR_+$, the 
	equivalence of $\card{ Q(D) }_{ \FFFF } \geq n$ and \cref{cond:central}
	still holds, if the numbers $\epsilon_0$ and $\epsilon$ are additionally 
	required to be rational. Also, the fact sets in \labelcref{eq:centralapprox}
	are measurable in $\FFF_i$: the open balls are certainly measurable, and as
	$q_i$ is injective and continuous, $q_i$ maps measurable sets to measurable
	sets by \cref{fac:SBSinjmeasimg}. Then the set of database instances $D \in
	\DB$ with \cref{cond:central} is of the shape
	\[
		\bigcup_{\ell, (n_i), (n_{ij})}
		\bigcup_{\epsilon_0}\bigcap_{\epsilon}\bigcup_{ (f_{j,\epsilon}^*) }
		\bigcap_{i,j}
		\set[\big]{ D \in \DB \with \card{ D }_{ q_i( \FFFF ) \cap 
		q_i(B_{\epsilon}(f_{j,\epsilon}^*)) } = n_{ij} }\text,
	\]
	with the indices ranging as in \cref{cond:central} and $\epsilon$, 
	$\epsilon_0$ additionally restricted to $\QQ$.
\end{proof}

\begin{figure}[H]
		\centering
		\begin{tikzpicture}
			\node[draw,minimum width=3cm,minimum height=4.85cm,rectangle,rounded
				corners=.5cm, label={below:$\FF_Q$}] (Q) at (0,0) {};
			\node[draw,minimum width=3cm,minimum height=4.85cm,rectangle,rounded
				corners=.5cm, right=3.25cm of Q,label={below:$\FF_{R_1}$}] (R1) {};
			\node[draw,minimum width=3cm,minimum height=4.85cm,rectangle,rounded
				corners=.5cm, right=2cm of R1,label={below:$\FF_{R_2}$}] (R2) {};

			\def\origF{plot coordinates {(-1.25,1.2) (-.2,1.8) (1.3,2) (1.1,-1.0) 
				(-0.8,-1.3)}}

			\fill[shade1,smooth cycle] \origF;
			\coordinate (ff) at (-.4,.9);	
			\coordinate (ff') at (-.3,-1.7);
			\begin{scope}
				\path[clip,smooth cycle] \origF;
				\fill[shade2] (ff) circle (.6cm);
				\fill[shade2] (ff') circle (.6cm);
			\end{scope}
			\draw[smooth cycle] \origF;
			\node at (0.7,-1.7) {$\FFFF$};

			\node[point,label={[yshift=.4ex]below right:$f_j$}] (f) at (-0.2,0.5) {};
			\node[shade3,point,label={left:$f_{j,\epsilon}^*$}] at (ff) {};
			\draw[shade3] (ff) circle (.6cm);
			\node[above right=1.3ex and 1.4ex of ff] {$B_{\epsilon}(f_{j,\epsilon}^*)$};
			\draw[shade3,-stealth',dashed] (ff) to (f);

			\node[point,label={[yshift=.4ex]above:$f_{j'}$}] (f') at (-0.5,-1.3) {};
			\node[shade3,point,label={[xshift=-.8ex,yshift=1ex]below right:$f_{j',\epsilon}^*$}] at (ff') {};
			\draw[shade3] (ff') circle (.6cm);
			\draw[shade3,-stealth',dashed] (ff') to (f');

			\begin{scope}[xshift=6.25cm]
				\def\qFi{plot coordinates {(-1,0) (.2,.6) (.9,0) (.9,-.9) 
					(-.8,-1.4)}}
				\fill[shade1,smooth cycle] \qFi;
				\coordinate (gg) at (-.6,-.4);
				\begin{scope}
					\path[clip,smooth cycle] \qFi;
					\fill[shade2] (gg) circle (.8cm);
				\end{scope}
				\draw[smooth cycle] \qFi;
				\node at (0.1,-1.7) {$q_1(\FFFF)$};
				\node[point] (g) at (-0.6,-1) {};
				\node[shade3,point] at (gg) {};
				\draw[shade3] (gg) circle (.8cm);

				\draw[shade3,-stealth',dashed] (gg) to (g);
			\end{scope}

			\begin{scope}[xshift=11.25cm]
				\def\qFii{plot coordinates {(-1.2,2) (0,1.8) (1,1.6) (1.1,-.2) (-.8,.6)}}
				\fill[shade1,smooth cycle] \qFii;
				\coordinate (hh) at (.1,1.2);
				\begin{scope}
					\path[clip,smooth cycle] \qFii;
					\fill[shade2] (hh) circle (.8cm);
				\end{scope}
				\draw[smooth cycle] \qFii;
				\node at (0.5,-.6) {$q_2(\FFFF)$};
				\node[point] (h) at (-.2,.7) {};
				\node[shade3,point] at (hh) {};
				\draw[shade3] (hh) circle (.8cm);
				\draw[shade3,-stealth',dashed] (hh) to (h);
			\end{scope}

			\draw[shade3,-stealth'] (ff) to[bend left=10] (gg);
			\draw[shade3,-stealth'] (ff) to[bend left=10] (hh);

			\draw[-stealth'] (f) to[bend left=10] node[above,pos=.65] {$q_1$} (g);
			\draw[-stealth'] (f) to[bend left=10] node[above,pos=.825] {$q_2$} (h);
		\end{tikzpicture}
		\caption{Illustration of \cref{cond:central} in the case $k = 2$. Note 
			that $f_{j,\epsilon}^*$ itself need not necessarily lie in $\FFFF$.
			The dashed arrows indicate the intended convergence of
			$f_{j,\epsilon}^*$ to the fact $f_j$ as $\epsilon \to 0$.
		}\label{fig:central}
	\end{figure}

Incidentally, \cref{thm:mapping} is a special case of \cref{lem:central} for
$k = 1$ and with $N_Q( n ) \coloneqq \set{ n }$ for all $n \in \NN_+$.

\subsection{Coarse Preimages}\label{ssec:coarse}

For $D \in \DB$, and every relation symbol $R \in \tau$, we let
\[
	\TTTT_R( D ) 
	\coloneqq \set[\big]{ t \in \TT_R \with \card{ D }_{ R(t) } > 0 }
\]
denote the set of $R$-tuples in $D$. Moreover, we let 
\[
	d_R( D ) \coloneqq 
	\begin{cases}
		\min\set{ d_R( t, t' ) \with t, t' \in \TTTT_R(D) \text{ with } t \neq t' }
		& \text{if } \card[\big]{ \TTTT_R(D) } \geq 2 \text{ and}\\
		\infty
		& \text{otherwise.}
	\end{cases}
\]
That is, $d_R( D )$ is the smallest distance between any two $R$-tuples of $D$
(or $\infty$, if $D$ contains at most one $R$-fact).

\begin{defi}
	Let $\epsilon > 0$. An instance $D \in \DB$ is called 
	\emph{$\epsilon$-coarse} if for all $R \in \tau$ it holds that $d_R( D ) > 
	\epsilon$. We denote the set of $\epsilon$-coarse instances by
	$\DB\rvert_\epsilon$. 
\end{defi}

Unfolding the definition, all instances in $\DB\rvert_\epsilon$ have the 
property that their tuples (per relation) are sufficiently far apart with
respect to the metric on the respective space of tuples. 

\begin{exa}\label[exa]{exa:runningcoarse}
	Consider our example of temperature recordings, but for simplicity (in order
	for not having to discuss the metric on the product space), assume that
	there is only one relation, with a single attribute for (real-valued)
	temperature recordings. Then a database instance is $\epsilon$-coarse
	precisely if the temperatures occuring in its instances differ by more than
	$\epsilon$ between distinct facts.
\end{exa}

The following lemma states that the set of $\epsilon$-coarse instances is
measurable for any $\epsilon > 0$.

\begin{lem}\label[lem]{lem:coarse}
	It holds that $\DB\rvert_\epsilon \in \DDB$ for all $\epsilon > 0$.
\end{lem}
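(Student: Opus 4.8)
The plan is to exhibit $\DB\rvert_\epsilon$ as a countable Boolean combination of counting events. Since the schema $\tau$ contains only finitely many relation symbols and $\DB\rvert_\epsilon = \bigcap_{R \in \tau} \{ D \in \DB : d_R(D) > \epsilon \}$, it suffices to show that each set $C_R \coloneqq \{ D \in \DB : d_R(D) > \epsilon \}$ lies in $\DDB$; equivalently, that its complement, the set of instances having two distinct $R$-tuples at distance at most $\epsilon$, is measurable. The first key observation is that, because every $D$ is finite, the minimum $d_R(D)$ is actually attained whenever $\card{\TTTT_R(D)} \ge 2$. Hence $d_R(D) \le \epsilon$ holds if and only if for every $\eta \in \QQ_+$ there are distinct $R$-tuples $t \ne t'$ of $D$ with $d_R(t,t') < \epsilon + \eta$. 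Writing $E_\eta$ for the set of instances satisfying this (open) condition, we get $\DB \setminus C_R = \bigcap_{\eta \in \QQ_+} E_\eta$, so it remains to prove $E_\eta \in \DDB$ for each fixed $\eta$.

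Next I would detect $E_\eta$ using the countable family of rational balls $B_q(x)$ with $x \in \TT_R^*$ and $q \in \QQ_+$. I claim that $D \in E_\eta$ if and only if there exist $x, x' \in \TT_R^*$ and $q, q' \in \QQ_+$ with (a)~$B_q(x) \cap B_{q'}(x') = \emptyset$, (b)~$d_R(x,x') + q + q' < \epsilon + \eta$, and (c)~$\card{D}_{R(B_q(x))} \ge 1$ and $\card{D}_{R(B_{q'}(x'))} \ge 1$. For the \emph{if} direction, (c) yields $R$-tuples $t \in B_q(x)$ and $t' \in B_{q'}(x')$ of $D$; disjointness (a) forces $t \ne t'$, and the triangle inequality with (b) gives $d_R(t,t') \le d_R(t,x) + d_R(x,x') + d_R(x',t') < \epsilon + \eta$. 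For the \emph{only if} direction, given distinct $t, t'$ of $D$ with $\delta \coloneqq d_R(t,t') < \epsilon + \eta$ (so $\delta > 0$), I approximate $t, t'$ by dense points $x, x'$ and take sufficiently small rational radii $q, q'$; since $\delta > 0$ and $\delta < \epsilon + \eta$, all of (a)--(c) can be met. As (a) and (b) constrain only the parameters $x, x', q, q'$, they merely select a subset of the countable index set $\TT_R^* \times \TT_R^* \times \QQ_+ \times \QQ_+$, so $E_\eta$ is a countable union of the sets $\{ D : \card{D}_{R(B_q(x))} \ge 1\} \cap \{ D : \card{D}_{R(B_{q'}(x'))} \ge 1 \}$.

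Finally I collect the measurability bookkeeping. Each ball $B_q(x)$ is open, hence Borel in $(\TT_R, \TTT_R)$, so $R(B_q(x)) \in \FFF_R \subseteq \FFF$, and $\{ D : \card{D}_{R(B_q(x))} \ge 1 \} = \DB \setminus \counting{R(B_q(x))}{0}$ is the complement of a counting event and thus lies in $\DDB = \Count(\FF)$. Consequently $E_\eta \in \DDB$, so $\bigcap_{\eta \in \QQ_+} E_\eta \in \DDB$ and $C_R \in \DDB$; the finite intersection over $R \in \tau$ then gives $\DB\rvert_\epsilon \in \DDB$. I expect the main obstacle to be the boundary case $d_R(t,t') = \epsilon$ together with the fact that the attribute metrics are arbitrary Polish metrics (so no midpoints are available): this is exactly what forces the detour through the open approximations $E_\eta$ instead of a single-radius ball argument. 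A secondary subtlety is that counting events record multiplicities rather than distinct tuples, which is why distinctness must be enforced geometrically through the disjointness condition~(a).
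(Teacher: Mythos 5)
Your proof is correct and follows essentially the same route as the paper's: both detect a violating pair of distinct $R$-tuples by covering them with balls centered at points of the countable dense set $\TT_R^*$ with rational radii, express their presence via counting events, and use a countable limiting family to capture the non-strict inequality $d_R(t,t')\le\epsilon$. The differences are only bookkeeping --- you enforce distinctness of the two tuples by requiring the two balls to be disjoint and use the condition \enquote{$\ge 1$} in place of the paper's exact multiplicities $k_1,k_2$ and separation parameter $\epsilon_0$, and you isolate the closed-inequality issue in an outer intersection over $\eta$ rather than the paper's \enquote{for all $r$} clause with the bound $\epsilon+2r$, which yields a slightly cleaner $\bigcap_\eta\bigcup$ quantifier structure.
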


\begin{proof}
	An instance $D \in \DB$ is \emph{not} $\epsilon$-coarse, if for some $R \in 
	\tau$, there exist $t_1, t_2 \in \TTTT_R( D )$ such that $t_1 \neq t_2$ and
	$d_R( t_1, t_2 ) \leq \epsilon$. We claim that this is the case if and only
	if $D$ satisfies the following condition.

	\begin{cond}\label[cond]{cond:coarseequiv}
		There are $k_1, k_2 \in \NN_+$ and $\epsilon_0 > 0$ such that for all $r 
		\in \big( 0, \epsilon_0 \big)$ there are $t_{1,r}^*, t_{2,r}^* \in
		\TT_R^*$ with $\epsilon_0 < d_R( t_{1,r}^*, t_{2,r}^* ) < \epsilon + 2r$
		such that 
		\begin{enumerate}
			\item $\card{ D }_{ R( B_r(t_{1,r}^*) ) } = k_1$ and
			\item $\card{ D }_{ R( B_r(t_{2,r}^*) ) } = k_2$.
		\end{enumerate}
	\end{cond}

	The condition basically states that we can find approximations $t_{1,r}^*$
	and $t_{2,r}^*$ of two tuples $t_1$ and $t_2$ in $\TTTT_R(D)$ that witness
	that $d_R(t_1,t_2)$ is too small. In particular, the role of $\epsilon_0$ in 
	\cref{cond:coarseequiv} is to guarantee that $t_{1,r}^*$ and $t_{2,r}^*$ do
	not end up approximating the same tuple. We proceed to show the following
	equivalence:
	\[
		\text{there are } t_1,t_2 \in \TTTT_R(D), t_1\neq t_2 \text{ with }
		d_R(t_1,t_2) \leq \epsilon \quad\Leftrightarrow\quad 
		D \text{ satisfies \cref{cond:coarseequiv}}\text.
	\]
	To simplify notation, we let $d \coloneqq d_R$ denote our metric on $\TT_R$,
	and $d( D ) \coloneqq d_R( D )$.
	\begin{description}
		\item[$\Rightarrow$] 
			Let $t_1,t_2$ be two distinct tuples in $\TTTT_R(D)$ of minimal 
			distance, in particular satisfying $0 < d( t_1, t_2 ) \leq \epsilon$.
			Let $k_1 \coloneqq \card{ D }_{ R(t_1) }$ and $k_2 \coloneqq \card{ D
			}_{ R(t_2) }$. We fix some $\epsilon_0 > 0$ that satisfies $\epsilon_0 
			< \tfrac13 d(t_1,t_2)$. Since $\TT_R^*$ is dense in $\TT_R$, for every
			$r > 0$
			(in particular for $r < \epsilon_0$) there exist $t_{1,r}^*, t_{2,r}^*
			\in \TT_R^*$ with $d( t_{1,r}^*, t_1 ), d( t_{2,r}^*, t_2 ) < r$.
			Since $d( t_1, t_2 ) > 3\epsilon_0$ and $r < \epsilon_0$, $r > 0$,
			using the triangle inquality, it follows that 	
			\[
				d( t_{1,r}^*, t_2 ) 
				\geq d( t_1, t_2 ) - d( t_1, t_{1,r}^* ) 
				> 3\epsilon_0 - r
				> \epsilon_0 
				> r
			\]
			and similarly that $d( t_{2,r}^*, t_1 ) > \epsilon_0 > r$. Because
			$t_1$ and $t_2$ were chosen with minimal distance, this entails
			\(
				\card{ D }_{ R( B_r(t_{1,r}^*) ) } 
				= \card{ D }_{ R( t_1 ) } = k_1
			\) and \(
				\card{ D }_{ R( B_r(t_{2,r}^*) ) } 
				= \card{ D }_{ R( t_2 ) } = k_2
			\).
			Note that, again using the triangle inequality, it holds that
			\[
				d( t_{1,r}^*, t_{2,r}^* ) 
				\leq d( t_{1,r}^*, t_1 ) + d( t_1, t_2 ) + d( t_2, t_{2,r}^*)
				< \epsilon + 2r
			\]
			and, moreover, 
			\[
				d( t_{1,r}^*, t_{2,r}^* )
				\geq d( t_1, t_2 ) - d( t_1, t_1^* ) - d( t_2, t_{2,r}^* )
				> \epsilon_0
				\text.
			\]
			Together, $D$ satisfies \cref{cond:coarseequiv}.
		\item[$\Leftarrow$] Now suppose \cref{cond:coarseequiv} holds for some
			$k_1, k_2 \in \NN_{+}$, and some $\epsilon_0 > 0$. By 
			\cref{cond:coarseequiv}, for all positive $r < \epsilon_0$ there exist
			$t_{1,r}^*, t_{2,r}^* \in \TT_R^*$ with $\epsilon_0 < d( t_{1,r}^*,
			t_{2,r}^* ) < \epsilon + 2r$ such that
			\( 
				\card{ D }_{ R( B_r(t_{1,r}^*) ) } = k_1 > 0
			\) and \(
				\card{ D }_{ R( B_r(t_{2,r}^*) ) } = k_2 > 0
			\).
			Note that this implies $\epsilon_0 < \epsilon$.	

			Now for all $r < \tfrac13 \min\set[\big]{ d_R( D ), \epsilon_0 }$, the
			balls $B_r( t_{1,r}^* )$ and $B_r( t_{2,r}^* )$ are disjoint, and
			contain at exactly one tuple from $\TTTT_R( D )$ each, say $\TTTT_R(D)
			\cap B_r( t_{1,r}^* ) = \set{ t_1 }$ and $\TTTT_R(D) \cap B_r( t_{2,r}^*
			) = \set{ t_2 }$. Thus,
			\(
				\card{ D }_{ R(t_1) } = \card{ D }_{ R( B_r(t_{1,r}^*) ) } = k_1 > 0
			\) and \(
				\card{ D }_{ R(t_2) } = \card{ D }_{ R( B_r(t_{2,r}^*) ) } = k_2 > 0
			\).
			Moreover,
			\begin{equation}\label{eq:coarselimit}
				d( t_1, t_2 ) \leq d( t_1, t_{1,r}^* ) + d( t_{1,r}^*, t_{2,r}^* ) + 
				d( t_{2,r}^*, t_2 ) < \epsilon + 4r\text.
			\end{equation}
			In particular, \cref{cond:coarseequiv} implies that for all $r > 0$ 
			there exist distinct $t_1, t_2 \in \TTTT_R( D )$ with $d( t_1, t_2 ) < 
			\epsilon+4r$. As $D$ is finite, letting $r \to 0$ implies that there 
			exist distinct $t_1, t_2 \in \TTTT_{R}( D )$ with $d( t_1, t_2 ) \leq
			\epsilon$. 
	\end{description}
	Because $\QQ$ is dense in $\RR$, the equivalence still holds, if the numbers
	$\epsilon_0$ and $r$ are restricted to $\QQ$. Then,
	\[
		\big(\DB\rvert_{\epsilon}\big)^{ \compl } = 
		\bigcup_{ R } \bigcup_{ \epsilon_0 } \bigcap_{ r } 
		\bigcup_{ t_{1,r}^*, t_{2,r}^* } 
		\set[\big]{ 
			D \in \DB \with 
			\card{ D }_{ R(B_r(t_{1,r}^*)) } = k_1 \text{ and } 
			\card{ D }_{ R(B_r(t_{2,r}^*)) } = k_2
		}
		\in \DDB
	\]
	(with the indices ranging as in our equivalence, numbers being restricted to
	rationals). Thus, $\DB\rvert_\epsilon \in \DDB$.
\end{proof}

Note that for every database instance $D$, there exists an $\epsilon > 0$,
small enough, such that $D \in \DB\rvert_\epsilon$, because $D$ is finite. This
means that $\bigcup_{ \epsilon > 0 } \DB\rvert_{ \epsilon }$ covers $\DB$ (even
if the union is taken only over rational $\epsilon$).

\begin{cor}\label[cor]{cor:coarsepreimage}
	If for all $\epsilon > 0$, all $\FFFF \in \FFF_Q$, and all $n \in \NN$ it
	holds that
	\[
		\set[\big]{ 
			D \in \DB\rvert_\epsilon \with \card{ Q( D ) }_{ \FFFF } = n
		} \in \DDB\text,
	\]
	then $Q$ is measurable.
\end{cor}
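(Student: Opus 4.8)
The plan is to reduce the measurability of $Q$ to a countable combination of the events supplied by the hypothesis, indexed by rational coarseness thresholds. By \cref{fac:measurablebasics}\labelcref{itm:basics1}, together with the fact that the counting events generate $\DDB_Q$, it suffices to show that
$\set[\big]{ D \in \DB \with \card{ Q(D) }_{ \FFFF } = n } \in \DDB$
for every $\FFFF \in \FFF_Q$ and every $n \in \NN$. So I would fix such an $\FFFF$ and $n$ and aim to write this event as a countable union of events of exactly the form that the hypothesis guarantees to be $\DDB$-measurable.

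The key ingredient is the observation recorded immediately before the statement: since every instance $D \in \DB$ is a finite bag of facts, it is $\epsilon$-coarse for all sufficiently small $\epsilon > 0$, and in fact $\bigcup_{\epsilon \in \QQ_+} \DB\rvert_\epsilon = \DB$, where it is crucial that the rational thresholds already suffice to cover $\DB$. Intersecting the target event with this cover, and using that the sets $\DB\rvert_\epsilon$ grow as $\epsilon$ decreases (so that no instance is lost), I would obtain
\[
	\set[\big]{ D \in \DB \with \card{ Q(D) }_{ \FFFF } = n }
	= \bigcup_{ \epsilon \in \QQ_+ }
	\set[\big]{ D \in \DB\rvert_\epsilon \with \card{ Q(D) }_{ \FFFF } = n }\text.
\]

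By the hypothesis of the corollary, each set on the right-hand side lies in $\DDB$. Since $\QQ_+$ is countable, the right-hand side is a countable union of $\DDB$-measurable sets, and hence lies in $\DDB$ because $\sigma$-algebras are closed under countable unions. As $\FFFF$ and $n$ were arbitrary, this establishes the reduced measurability criterion and thus the measurability of $Q$. I expect there to be no genuine obstacle here; the only point requiring care is ensuring that the index set of the union is countable, which is precisely why the covering observation is phrased with rational $\epsilon$ rather than with all real $\epsilon > 0$. Note that \cref{lem:coarse} need not even be invoked separately, since the hypothesis already delivers membership of each piece in $\DDB$ directly.
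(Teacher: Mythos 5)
Your proposal is correct and follows essentially the same route as the paper's own proof, which consists precisely of the identity $\set{ D \in \DB \with \card{ Q(D) }_{ \FFFF } = n } = \bigcup_{ \epsilon \in \QQ_{+} } \set{ D \in \DB\rvert_\epsilon \with \card{ Q(D) }_{ \FFFF } = n }$ together with closure of $\DDB$ under countable unions. The extra detail you supply (reducing to counting events via \cref{fac:measurablebasics} and justifying the rational cover) is exactly what the paper leaves implicit.
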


\begin{proof}
	This follows directly from
	\[
	\set[\big]{ D \in \DB \with \card{ Q(D) }_{ \FFFF } = n }
		= \bigcup\nolimits_{ \epsilon \in \QQ_{+} } 
		\set[\big]{ D \in \DB\rvert_\epsilon \with \card{ Q(D) }_{ \FFFF } = n }.
		\qedhere
	\]
\end{proof}

\Cref{cor:coarsepreimage} can be used to leverage the finiteness of our 
instances. In an $\epsilon$-coarse instance, we can approximate sets of facts
by simpler sets of facts as long as these approximations are sufficiently fine.
For example, in the context of \cref{exa:runningcoarse}, in order to prove
measurability of a query, it suffices to prove the measurability with respect
to preimages where the temperature recordings are \enquote{far apart}.

\section{Relational Algebra}\label{sec:ra}
As motivated in \cref{ssec:pws}, we investigate the measurability of relational
algebra queries in our model. The concrete relational algebra for bags that we
use here is basically the (unnested version of the) algebra that was introduced
in~\cite{Dayal+1982} and investigated, respectively extended, and surveyed
in~\cite{Albert1991,GrumbachMilo1996,Grumbach+1996}. It is called
$\mathsf{BALG^1}$ (with superscript $1$) in~\cite{GrumbachMilo1996}. 

We do not introduce nesting as it would yield yet another layer of abstraction
and complexity to the spaces we investigate, although by the properties that
such spaces exhibit, we have strong reason to believe that there is no 
technical obstruction in allowing spaces of finite bags as attribute domains.
It is unclear however, whether this extends to PDBs with unbounded nesting
depth.

The operations we consider are shown in the \cref{tab:balg} below. As seen
in~\cite{Albert1991,GrumbachMilo1996,Grumbach+1996}, there is some redundancy
within this set of operations that will be addressed later. A particular
motivation for choosing this particular algebra is that possible worlds
semantics are usually built on top of set semantics and these operations
naturally extend the common behavior of relation algebra queries to bags. This
is quite similar to the original motivation of~\cite{Dayal+1982}
and~\cite{Albert1991} regarding their choice of operations.

\begin{table}[H]
	\caption{The operators of $\mathsf{BALG}^1$ considered in this paper.}\label{tab:balg}
	\begin{tabular}{lll}
		\toprule
		\textit{Base Queries}	
			& Constructors			& $Q = \bag{}$ and $Q = \bag{ R(t_0) }$\\
			& Extractors			& $Q = R$\\
			& Renaming				& $Q = \bagren{ A \to B }{ R }$\\
		\midrule
		\textit{Basic Bag Operations}	
			& Additive Union		& $Q = \bagadd{ R }{ S }$\\
			& Max Union				& $Q = \bagmax{ R }{ S }$\\
			& Intersection			& $Q = \bagmin{ R }{ S }$\\
			& Difference			& $Q = \bagdiff{ R }{ S }$\\
			& Deduplication		& $Q = \bagded{ R }$\\
		\midrule
		\textit{SPJ-Operations}	
			& Selection 			& $Q = \bagsel{ (A_1,\dots,A_k) \in \BBBB }{ R }$\\
			& Projection			& $Q = \bagproj{ A_1,\dots,A_k }{ R }$\\
			& Cross Product		& $Q = \bagprod{ R }{ S }$\\
		\bottomrule
	\end{tabular}
\end{table}

Since compositions of measurable functions are measurable, it suffices to show
the measurability of the operators from \cref{tab:balg}, and the measurability
of compound queries follows by structural induction. 

Therefore, by investigating the measurability of the operators from 
\cref{tab:balg} we will show the following main result of this section.

\begin{thm}\label{thm:balg}
	All queries expressible in the bag algebra $\mathsf{BALG^1}$ are measurable.
\end{thm}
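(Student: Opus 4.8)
The plan is to lean on the two reductions that are already in place. By \cref{fac:measurablebasics}\labelcref{itm:basics2} compositions of measurable maps are measurable, and every $\mathsf{BALG}^1$ expression is a composition of the operators in \cref{tab:balg}; hence it suffices to show that each single operator induces a measurable function $\DB \to \DB_Q$, and the theorem then follows by induction on the structure of the query. So I would go through the operators one at a time, grouping them by which of the earlier criteria applies.

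The base queries and the purely ``pointwise'' operators are immediate. The constructors $Q = \bag{}$ and $Q = \bag{R(t_0)}$ are constant functions, hence measurable (every preimage is $\emptyset$ or $\DB$). The extractor $Q = R$, the renaming $Q = \bagren{A\to B}{R}$, and the selection $Q = \bagsel{(A_1,\dots,A_k)\in\BBBB}{R}$ each arise by applying a single fact transformation independently to every fact of the instance, so they fall under the mapping theorem (\cref{thm:mapping}): for the extractor and the renaming $q$ is the continuous bijective relabeling of the relation symbol, respectively the attribute name, and for the selection $q$ is that relabeling restricted to $\FFFF_q = \set{R(t) : t[A_1,\dots,A_k]\in\BBBB}$, which lies in $\FFF_R$ because the coordinate projection is continuous and $\BBBB$ is Borel.

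The ``combining'' bag operators fit \cref{lem:central} directly: the $q_i$ are the injective, continuous identity-on-tuples relabelings $\FF_Q\to\FF_R$ and $\FF_Q\to\FF_S$, and the sets $N_Q(n)$ encode the multiplicity arithmetic, namely $N_Q(n)=\set{(n_1,n_2):n_1+n_2=n}$ for additive union, $\set{(n_1,n_2):\max(n_1,n_2)=n}$ for max union, $\set{(n_1,n_2):\min(n_1,n_2)=n}$ for intersection, $\set{(n_1,n_2):\max(0,n_1-n_2)=n}$ for difference (the motivating example), and, for deduplication, $k=1$ with $N_Q(1)=\NN_+$ and $N_Q(n)=\emptyset$ for $n\ge 2$.

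This leaves projection and cross product, which I expect to be the main obstacle, since in both cases the one-to-one decomposition demanded by \cref{lem:central} is unavailable: the multiplicity of an output fact is an \emph{aggregate} over several input facts — a sum over an entire (uncountable) fibre for $Q=\bagproj{A_1,\dots,A_k}{R}$, and a product $\card{D}_{R(t_R)}\cdot\card{D}_{S(t_S)}$ for $Q=\bagprod{R}{S}$, whose coordinate maps $R_Q(t_R t_S)\mapsto R(t_R)$ and $R_Q(t_R t_S)\mapsto S(t_S)$ fail to be injective. For these I would invoke \cref{cor:coarsepreimage} and argue only on $\epsilon$-coarse instances, where the finitely many input tuples per relation are $>\epsilon$ apart, so that (under a suitable product metric) distinct output facts are separated as well. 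The argument then mirrors the proofs of \cref{lem:central,lem:coarse}: approximate each output fact by points of the countable dense set $\FF_Q^*$, replace single-fact counts by counts over the measurable cylinder preimages $\set{R(t):t[A_1,\dots,A_k]\in B_r(\cdot)}$ (for projection) or over product balls (for cross product), use a rational separation parameter $\epsilon_0$ to keep distinct output facts apart, and let the radius tend to $0$, thereby writing $\set{D : \card{Q(D)}_{\FFFF}\ge n}$ as a countable combination of counting events. The delicate point is projection: because coarseness of the input is measured in the full tuple metric, it does not separate input tuples that agree after projection, so one must verify that the cylinder counts correctly recover the \emph{summed} output multiplicity while still isolating the finitely many distinct output values.
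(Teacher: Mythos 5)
Your overall architecture (reduce to single operators via closure under composition, then dispatch each operator to one of the earlier criteria) is exactly the paper's, and your treatment of the base queries, selection, and the multiplicity-combining operators via \cref{lem:central} matches the paper's proofs. But there are two problems in the part you yourself identify as the crux. First, projection is not an obstacle at all, and your plan to attack it with the coarseness machinery is a wrong turn: the ``delicate point'' you flag (coarseness in the full tuple metric does not separate tuples that agree after projection) is real and would sink that approach. The correct observation --- and the paper's one-line proof --- is that for $\FFFF = R_Q(\TTTT)$ the summed output multiplicity satisfies $\card{ Q(D) }_{\FFFF} = \card{ D }_{ R( \TTTT \times \AA_{k+1} \times \dots \times \AA_r ) }$, i.e.\ the preimage of a counting event is again a counting event on a cylinder. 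Equivalently, projection falls under \cref{thm:mapping} with $q(R(t)) = R_Q(t[A_1,\dots,A_k])$: the mapping theorem requires only measurability of $q$, \emph{not} injectivity, and its formula $\card{Q(D)}_f = \card{D}_{q^{-1}(f)}$ already performs the sum over the fibre. You appear to have transferred the injectivity requirement of \cref{lem:central} onto \cref{thm:mapping}, which is what sent you down the coarseness route.

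Second, for the cross product your sketch is in the right spirit (restrict to $\epsilon$-coarse instances, localize with product windows, approximate by the countable dense set) but omits the step that carries the entire difficulty: passing from measurable \emph{rectangles} $\TTTT_R \times \TTTT_S$ to arbitrary $\FFFF \in \FFF_Q$. The decomposition $\card{Q(D)}_{R_Q(\TTTT_R \times \TTTT_S)} = \card{D}_{R(\TTTT_R)} \cdot \card{D}_{S(\TTTT_S)}$ only applies to rectangles, and a general measurable set in the product fact space has non-measurable (merely analytic) projections --- this is precisely the phenomenon of \cref{ex:nonmeas} --- so no direct decomposition of $\FFFF$ is available. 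The paper closes this gap with a good-sets argument (\cref{lem:FFFeps}): it shows that the family of sets $\FFFF$ for which all window-localized, $\epsilon$-coarse preimages are measurable is a $\sigma$-algebra containing the rectangles, using the fact that on an $\epsilon$-coarse instance each window $\WWWW_r(t)$ contains at most one output tuple, so that complements and countable intersections commute with the local counts. Your proposal never explains how general $\FFFF$ are handled once the window decomposition is in place, so as written the cross-product case is only established for (countable unions of) rectangles, which does not exhaust $\FFF_Q$.
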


\subsection{Base Queries}
The base queries (\cref{tab:base}) are easily seen to be measurable.

\begin{table}[H]
	\centering
	\caption{Base Queries.}\label{tab:base}
	\begin{tabular}{ ll }
		\toprule%
		\textbf{Query} 
		& \textbf{Semantics} (for all $D \in \DB$, $t \in \TT_Q$)									
		\\%
		\midrule%
		$Q = \bag{}$				
		& $\card{ Q(D) }_{ R_Q(t) } \coloneqq 0$
		\\%
		$Q = \bag{ R_Q( t_0 ) }$
		& $\card{ Q(D) }_{ R_Q(t) } \coloneqq 1$ if $t = t_0$, and $0$ otherwise
		\\%
		$Q = R$
		& $\card{ Q(D) }_{ R_Q(t) } \coloneqq \card{ D }_{ R( t ) }$
		\\%
		\bottomrule%
	\end{tabular}
\end{table}

\begin{lem}\label[lem]{lem:base}
	The following queries are measurable:
	\begin{enumerate}
		\item $Q = \bag{}$,
		\item $Q = \bag{ f }$ for all $f \in \FF_Q$, and
		\item $Q = R$ for all $R \in \tau$.
	\end{enumerate}
\end{lem}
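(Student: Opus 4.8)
The plan is to use the standard reduction for query measurability: by \cref{fac:measurablebasics}\labelcref{itm:basics1}, and since the counting events generate $\DDB_Q$, it suffices to check that the preimage $\set{ D \in \DB \with \card{ Q(D) }_{ \FFFF } = n }$ lies in $\DDB$ for every $\FFFF \in \FFF_Q$ and every $n \in \NN$. I would treat the three queries in two groups: the two constructors $Q = \bag{}$ and $Q = \bag{f}$ are constant maps, whereas the extractor $Q = R$ is handled through the Mapping Theorem (\cref{thm:mapping}).

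For the constructors, the crucial observation is that the output $Q(D)$ does not depend on the input instance $D$ at all: $\bag{}$ always returns the empty instance, and $\bag{f}$ always returns the fixed singleton $\bag{f}$. A constant function is trivially measurable, since the preimage of any $\DDDD \in \DDB_Q$ is either $\emptyset$ (if the constant value is not in $\DDDD$) or all of $\DB$ (if it is), and both sets belong to $\DDB$. This disposes of the first two items with no further work.

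For the extractor $Q = R$, the semantics in \cref{tab:base} is $\card{ Q(D) }_{ R_Q(t) } = \card{ D }_{ R(t) }$, i.e.\ $Q$ restricts $D$ to its $R$-facts and relabels the relation symbol. The plan is to apply \cref{thm:mapping} to the natural identification $q \from \FF_R \to \FF_Q$ given by $R(t) \mapsto R_Q(t)$, taking $\FFFF_q \coloneqq \FF_R \in \FFF$. Since $\FF_R$ and $\FF_Q$ are two copies of the same tuple space carrying isomorphic standard Borel structures, $q$ is a continuous bijection---hence $(\FFF_R, \FFF_Q)$-measurable by \cref{fac:topomeas}---and in particular injective. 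The mapping theorem then produces exactly the query $\card{ Q(D) }_{ f } = \card{ D }_{ q^{-1}(f) }$, which matches the prescribed semantics and certifies its measurability.

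The only point that requires a moment's care---though it is not a genuine obstacle---is making the identification $\FF_R \cong \FF_Q$ precise and confirming that it is a Borel isomorphism. This follows immediately once one notes that both fact spaces arise from the common tuple space $\TT_R = \TT_Q$ by attaching a relation symbol, so that $q$ is a homeomorphism for the chosen compatible metrics and continuity and injectivity are automatic. Should one wish to bypass the mapping theorem, one can argue directly that $\set{ D \in \DB \with \card{ Q(D) }_{ \FFFF } = n } = \counting{ q(\FFFF) }{ n }$, a counting event lying in $\DDB$, where $q(\FFFF) \in \FFF$ by \cref{fac:SBSinjmeasimg} (as $q$ is continuous and injective on $\FF_R$). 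Either way, this lemma presents no substantive difficulty.
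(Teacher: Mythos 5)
Your proof is correct and follows essentially the same route as the paper: the two constructors are handled as constant maps exactly as in the paper's proof, and for $Q = R$ the paper's argument is precisely your ``direct alternative'' ($\card{ Q(D) }_{ R_Q(\TTTT) } = n \iff \card{ D }_{ R(\TTTT) } = n$, a counting event in $\DDB$), of which the appeal to \cref{thm:mapping} is just a repackaging. One cosmetic slip: in the direct argument you write $q(\FFFF)$ for $\FFFF \in \FFF_Q$, where $\FFFF$ lies in the codomain of $q$, so you mean $q^{-1}(\FFFF)$ (equivalently the image of $\FFFF$ under the inverse relabeling $R_Q(t) \mapsto R(t)$), and its measurability needs only the measurability of $q$, not \cref{fac:SBSinjmeasimg}.
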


\begin{proof}
	\begin{enumerate}
		\item Let $\DDDD \in \DDB_Q$. If $\bag{ } \in \DDDD$, then 
			$Q^{-1}( \DDDD ) = \DB \in \DDB$. Otherwise, if $\bag{ } \notin
			\DDDD$, then $Q^{-1}( \DDDD ) = \emptyset \in \DDB$.
		\item This can be shown the same way.
		\item Let $Q = R$, let $R_Q( \TTTT ) \in \FFF_Q$ and let $n \in \NN$. 
			Then $R( \TTTT ) \in \FFF_R \subseteq \FFF$. For all $D \in \DB$, it
			holds that
			\[
				\card{ Q( D ) }_{ R_Q( \TTTT ) } = n
				\iff
				\card{ D }_{ R( \TTTT ) } = n\text,
			\]
			and the latter is measurable in $\DDB$. Thus, $Q$ is measurable.
			\qedhere
	\end{enumerate}	
\end{proof}

Note that there is nothing to show for the renaming query because it leaves
all tuples themselves untouched. In the subsequent subsections, we deal with
the remainder of \cref{tab:balg}.

\subsection{Basic Bag Operations}

We now treat the basic bag operations (\cref{tab:basicbags}). Assume that $R, S
\in \tau$ are relation symbols of the same sort. From now on, we will only
consider the case where $R$ and $S$ are distinct, as for the case $R=S$,
measurability is trivial.

\begin{table}[H]
	\centering
	\caption{Basic Bag Operations.}\label{tab:basicbags}
	\begin{tabular}{ ll }
		\toprule%
		\textbf{Query}
		& \textbf{Semantics}	(for all $D \in \DB$, $t \in \TT_Q$)									
		\\%
		\midrule%
		$Q = \bagadd{R}{S}$				
		& $\card{ Q(D) }_{ R_Q(t) } \coloneqq 
			\card{ D }_{ R(t) } + \card{ D }_{ S(t) }$
		\\%
		$Q = \bagdiff{R}{S}$
		& $\card{ Q(D) }_{ R_Q(t) } \coloneqq 
			\max\set[\big]{ 0, \card{ D }_{ R(t) } - \card{ D }_{ S(t) } }$
		\\%
		$Q = \bagmax{R}{S}$
		& $\card{ Q(D) }_{ R_Q(t) } \coloneqq 
			\max\set[\big]{ \card{ D }_{ R(t) }, \card{ D }_{ S(t) } }$
		\\%
		$Q = \bagmin{R}{S}$
		& $\card{ Q(D) }_{ R_Q(t) } \coloneqq 
			\min\set[\big]{ \card{ D }_{ R(t) }, \card{ D }_{ S(t) } }$
		\\%
		\bottomrule%
	\end{tabular}
\end{table}

\begin{lem}\label{lem:basicbagops}
	The following queries are measurable:
	\begin{enumerate}
		\item $Q = \bagadd{R}{S}$.\label[itm]{itm:bagops1}
		\item $Q = \bagdiff{R}{S}$.\label[itm]{itm:bagops2}
		\item $Q = \bagmax{R}{S}$.\label[itm]{itm:bagops3}
		\item $Q = \bagmin{R}{S}$.\label[itm]{itm:bagops4}
	\end{enumerate}
\end{lem}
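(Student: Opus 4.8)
The plan is to obtain all four statements as direct instances of \cref{lem:central} with $k = 2$, $R_1 \coloneqq R$, and $R_2 \coloneqq S$. Since we have already reduced to the case where $R$ and $S$ are distinct, the relation symbols $R_1, R_2$ are pairwise distinct as the lemma requires. The output relation $R_Q$ shares the common sort of $R$ and $S$, so the three tuple spaces coincide, and I would take the two decomposition maps $q_1 \from \FF_Q \to \FF_R$ and $q_2 \from \FF_Q \to \FF_S$ to be the relabellings $q_1\big( R_Q(t) \big) \coloneqq R(t)$ and $q_2\big( R_Q(t) \big) \coloneqq S(t)$, each of which keeps the tuple $t$ fixed and merely rewrites the leading relation symbol.

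First I would discharge the topological hypothesis \labelcref{itm:central2}. Under our standing conventions the metric on a fact space is just the fixed Polish metric on the underlying tuple space, which is the same for $R$, $S$, and $R_Q$; hence $q_1$ and $q_2$ are isometric bijections onto $\FF_R$ and $\FF_S$ and are, in particular, injective and continuous. It then remains to exhibit, for each operator and each $n \in \NN_+$, a set $N_Q(n) \subseteq \NN^2 \setminus \set{(0,0)}$ witnessing \labelcref{itm:central1}. Writing $f = R_Q(t)$, we have $\big( \card{D}_{q_1(f)}, \card{D}_{q_2(f)} \big) = \big( \card{D}_{R(t)}, \card{D}_{S(t)} \big)$, so the required sets can be read straight off \cref{tab:basicbags}: for $\bagadd{R}{S}$ take $N_Q(n) = \set{(n_1,n_2) \with n_1 + n_2 = n}$; for $\bagdiff{R}{S}$ take $N_Q(n) = \set{(n_1,n_2) \with \max\set{0, n_1 - n_2} = n}$; for $\bagmax{R}{S}$ take $N_Q(n) = \set{(n_1,n_2) \with \max\set{n_1,n_2} = n}$; and for $\bagmin{R}{S}$ take $N_Q(n) = \set{(n_1,n_2) \with \min\set{n_1,n_2} = n}$. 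In each case the defining equation is precisely the count semantics from \cref{tab:basicbags}, so \labelcref{itm:central1} holds by construction.

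The only clause that needs a word of justification --- and it is the clause I would flag as the conceptual, rather than computational, heart of the matter --- is that $N_Q(n)$ must avoid the origin whenever $n \in \NN_+$; this is the part of \labelcref{itm:central1} encoding that an output fact must descend from some input fact. For all four operators it is immediate: $n \geq 1$ forces the output multiplicity to be positive, which forces at least one of $n_1, n_2$ to be positive, so $(0,0) \notin N_Q(n)$. With both hypotheses of \cref{lem:central} verified, each of the four queries is measurable. I expect no genuine obstacle here, since all the approximation- and topology-theoretic difficulty has been absorbed into the proof of \cref{lem:central}; the present lemma is essentially the bookkeeping of checking that each bag operation's count function factors through the pair $\big(\card{D}_{R(t)}, \card{D}_{S(t)}\big)$ in the prescribed way.
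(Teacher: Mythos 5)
Your proposal is correct and follows essentially the same route as the paper: the same two relabelling maps $q_1(R_Q(t)) = R(t)$, $q_2(R_Q(t)) = S(t)$ (injective and continuous for exactly the reason you give), and the same sets $N_Q(n)$ for additive union and difference, all fed into \cref{lem:central}. The only divergence is that you verify $\bagmax{R}{S}$ and $\bagmin{R}{S}$ directly with $N_Q(n) = \set{(n_1,n_2) \with \max\set{n_1,n_2} = n}$ and $N_Q(n) = \set{(n_1,n_2) \with \min\set{n_1,n_2} = n}$, whereas the paper dispatches these two by observing that $\cup$ and $\cap$ are expressible as compositions of $\uplus$ and $-$; both arguments are valid, and your direct check (including the observation that $(0,0) \notin N_Q(n)$ for $n \geq 1$ in all four cases) costs essentially nothing extra.
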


\begin{proof}
	As $\cup$ and $\cap$ can be expressed by compositions of $\uplus$ and $-$
	\cite{Albert1991}, it suffices to show \labelcref{itm:bagops1,itm:bagops2}.
	\begin{enumerate}
		\item Consider the functions $g \from \FF_Q \to \FF_R$ and $h \from \FF_Q 
			\to \FF_S$ with
			\[
				g\big( R_Q( t ) \big) = R( t )
				\centertext{and}
				h\big( R_Q( t ) \big) = S( t )
			\]
			for all $t \in \TT_Q = \TT_R = \TT_S$. Clearly, both functions are 
			injective and continuous. For all $k \in \NN$ we define 
			\[
				N_Q( n ) \coloneqq \set{ (n_1, n_2) \in \NN^2 \with n_1 + n_2 = n }
				\text.
			\]
			Then $\card{ Q(D) }_f = n$ if and only if $\big( \card{ D }_{ g(f) },
			\card{ D }_{ h(f) } \big) \in N_Q(n)$. Thus, $Q$ is measurable by 
			\cref{lem:central}.
		\item This works analogously to part \labelcref{itm:bagops1}, with 
			\[
				N_Q(n) \coloneqq \set{ (n_1,n_2) \in \NN^2 \with \max\set{ n_1-n_2, 0
				} = n}
				\text.\qedhere
			\]
	\end{enumerate}
\end{proof}

\subsection{Set Semantics}

The deduplication operator (\cref{tab:dedupe}) maps bag instances to their 
underlying set instances.

\begin{table}[H]
	\centering
	\caption{Deduplication.}\label{tab:dedupe}
	\begin{tabular}{ ll }
		\toprule%
		\textbf{Query}
		& \textbf{Semantics}	(for all $D \in \DB$, $t \in \TT_Q$)									
		\\%
		\midrule%
		$Q = \bagded{R}$				
		& $\card{ Q(D) }_{ R_Q(t) } \coloneqq 1$ if $\card{ D }_{ R(t) } > 0$,
			and $0$ otherwise
		\\%
		\bottomrule%
	\end{tabular}
\end{table}

\begin{lem}\label{lem:dedupe}
	The deduplication query $Q = \bagded{ R }$ is measurable for all $R \in
	\tau$.
\end{lem}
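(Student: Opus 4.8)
The plan is to derive measurability of $Q = \bagded{R}$ as a direct instance of \cref{lem:central} with $k = 1$. Deduplication preserves the sort, so the output relation $R_Q$ and the input relation $R$ share a sort; consequently $\TT_Q = \TT_R$, and the two fact spaces $\FF_Q$ and $\FF_R$ differ only in the relation symbol attached to each tuple. I would set $R_1 \coloneqq R$ and take $q_1 \from \FF_Q \to \FF_R$ to be the relabelling $q_1\big( R_Q(t) \big) \coloneqq R(t)$. Since $q_1$ acts as the identity on the underlying tuple $t \in \TT_Q = \TT_R$, it is an isometry between $(\FF_Q, d_Q)$ and $(\FF_R, d_R)$, and in particular injective and continuous; thus requirement \labelcref{itm:central2} holds immediately.

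It remains to exhibit the sets $N_Q(n)$ for requirement \labelcref{itm:central1}. The characteristic behaviour of deduplication is that the output fact $R_Q(t)$ has multiplicity exactly $1$ as soon as $R(t)$ is present in the input, and multiplicity $0$ otherwise, so no output count ever exceeds $1$. I would therefore put $N_Q(1) \coloneqq \NN_+$ and $N_Q(n) \coloneqq \emptyset$ for all $n \geq 2$. For $n = 1$ this gives $\card{Q(D)}_{R_Q(t)} = 1 \iff \card{D}_{R(t)} > 0 \iff \card{D}_{q_1(R_Q(t))} \in \NN_+ = N_Q(1)$, and for $n \geq 2$ both sides are identically false; hence the required equivalence holds for every $D \in \DB$ and every $f \in \FF_Q$. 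Note that $N_Q(1) = \NN_+ = \NN \setminus \set{0}$ indeed avoids $0$, as demanded, and that although $N_Q(1)$ is infinite this is harmless: in \cref{lem:central} the values of $N_Q(\cdot)$ are only ever enumerated inside a countable union. Applying \cref{lem:central} then yields that $Q = \bagded{R}$ is measurable.

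The single conceptual point --- and the reason deduplication escapes \cref{thm:mapping} --- is that the operator collapses multiplicities instead of transforming facts one-to-one; this collapse is precisely what is absorbed by choosing $N_Q(1) = \NN_+$ rather than a single value. I anticipate no real obstacle beyond confirming that $q_1$ is a homeomorphism onto $\FF_R$, which is immediate since relabelling the relation symbol leaves the tuple metric untouched. Should one prefer to avoid \cref{lem:central}, the same conclusion is reachable through \cref{cor:coarsepreimage}: on each $\DB\rvert_\epsilon$ the distinct $R$-tuples lie more than $\epsilon$ apart, so the number of distinct present tuples in any $\FFFF$ can be read off from counts over balls of radius $\epsilon/2$ around the dense set $\FF_R^*$; but the \cref{lem:central} route is shorter.
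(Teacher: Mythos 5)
Your proposal is correct and follows essentially the same route as the paper: both apply \cref{lem:central} with $k=1$, the relation-symbol relabelling $q_1(R_Q(t)) = R(t)$ as the injective continuous decomposition map, and $N_Q(1) = \NN\setminus\set{0}$. Your choice $N_Q(n) = \emptyset$ for $n \geq 2$ is in fact slightly more careful than the paper's (which sets $N_Q(n) = \NN\setminus\set{0}$ for all $n \geq 1$, even though output multiplicities above $1$ never occur), so there is nothing to object to.
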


\begin{proof}
	We apply \cref{lem:central} for $r = 1$ and the function $q_1 = q \from
	\FF_Q \to \FF_R$ defined by $q( R_Q(x) ) = R(x)$. Then for all $D
	\in \DB$ and all $f \in \FF_Q$, it holds that
	\[
		\card{ Q(D) }_{ f } = \card{ D }_{ q( f ) }\text.
	\]
	Thus, we let $N_Q( k ) = \set{ 0 }$ if $k = 0$ and $N_Q( k ) = \NN \setminus
	\set{ 0 }$ otherwise. Then $r$, $q$ and $N_Q$ satisfy the requirements of
	\cref{lem:central}, so $Q$ is measurable.
\end{proof}

Having the deduplication query measurable means that standard PDBs support set
semantics.

\begin{rem}
	The function associated with the deduplication query is countable-to-one
	(preimage of a single instance in the result is a countable collection of
	instances) and measurable by the lemma above. This can be used to infer that
	the space of set instances is standard Borel using~\cite[Theorem
	4.12.4]{Srivastava1998}. This means that we could also completely restrict
	our setting to set instances without introducing new measurability problems.
	In general however, it can still be mathematically more conventient to use
	the full measurable space that was defined in \cref{ssec:pdbfpp}, even in a
	set semantics setting. The point processes defining PDBs should then just be
	\enquote{simple} in the sense of \cref{def:fpp}, i.\,e.\ have the probability
	$0$ of containing duplicate facts.
\end{rem}

\subsection{Selection and Projection} 

In this section, we investigate the selection and projection operators
\cref{tab:selproj}. First, we note that reordering the attributes in the sort
of a relation yields a measurable query.

\begin{lem}\label[lem]{lem:reordering}
	Let $\sort( R ) = (A_1, \dots, A_r)$ and let $\beta$ be a permutation of 
	$\set{1, \dots , r}$. For $f = R( a_1, \dots, a_r )$ let $\beta( f ) =
	R_Q( a_{ \beta(1) }, \dots, a_{ \beta(r) } )$, and let $Q( D ) \coloneqq 
	\set{ \beta(f) \with f \in D }$. Then $Q$ is measurable.
\end{lem}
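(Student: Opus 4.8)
The plan is to recognize the reordering query as a pointwise transformation of facts and to invoke the Mapping Theorem (\cref{thm:mapping}). Reordering does not combine information from several facts; it merely permutes the coordinates within each individual fact, so the whole query arises by applying one fixed map to every fact of the input instance. This is exactly the situation \cref{thm:mapping} is designed for, with the domain of the transformation being the entire (measurable) fact space $\FF_R$. Concretely, I would define the fact-level map $q \from \FF_R \to \FF_Q$ by $q\big( R(a_1, \dots, a_r) \big) = R_Q\big( a_{\beta(1)}, \dots, a_{\beta(r)} \big)$, noting that $\FF_R \in \FFF$ since it is one of the components of the disjoint union $\FFF = \bigoplus_{R \in \tau} \FFF_R$.

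Next I would check that $q$ is measurable. On the tuple level, $q$ is the coordinate permutation $(a_1, \dots, a_r) \mapsto (a_{\beta(1)}, \dots, a_{\beta(r)})$ from $\TT_R = \AA_1 \times \dots \times \AA_r$ to $\TT_Q = \AA_{\beta(1)} \times \dots \times \AA_{\beta(r)}$. Each output coordinate is a projection of the input, so $q$ is continuous with respect to the product topologies, and by \cref{fac:topomeas} it is therefore $(\FFF_R, \FFF_Q)$-measurable. Applying \cref{thm:mapping} with $\FFFF_q = \FF_R$ and this $q$ then yields a measurable query. Since $\beta$ is a permutation, $q$ is a bijection, so for every output fact $g \in \FF_Q$ the preimage $q^{-1}(g)$ is a single input fact and is nonempty. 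The theorem thus produces $\card{Q(D)}_g = \card{D}_{q^{-1}(g)}$, which is precisely the multiplicity of the unique $f$ with $\beta(f) = g$, i.\,e.\ the bag $\set{ \beta(f) \with f \in D }$ with multiplicities preserved; this matches the stated semantics.

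There is no genuine obstacle here: the routine verifications, namely that a coordinate permutation is continuous and that $q$ is a bijection, are immediate. The only point that warrants a line of care is confirming that the $q^{-1}$-based formulation of \cref{thm:mapping} reproduces the intended bag semantics of $Q$, which is guaranteed by the injectivity of $\beta$. As an alternative one could invoke \cref{lem:central} with $k = 1$, $q_1 = q$, and $N_Q(n) = \set{n}$, of which \cref{thm:mapping} is the stated special case; the argument above is the more direct route.
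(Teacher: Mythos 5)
Your proposal is correct and follows essentially the same route as the paper, which likewise disposes of the reordering query by observing that the fact-level permutation map $\beta \from \FF_R \to \FF_Q$ is measurable and then invoking the Mapping Theorem (\cref{thm:mapping}); your additional verifications (continuity of the coordinate permutation, bijectivity, and the check that the $q^{-1}$-formulation reproduces the bag semantics) are just a more detailed write-up of the same one-line argument.
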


\begin{proof}
	This directly follows from the mapping theorem (\cref{thm:mapping}), because
	$\beta \from \FF_R \to \FF_Q$ is measurable.
\end{proof}

\Cref{lem:reordering} is helpful for restructuring relations into a more
convenient shape to work with later. Semantically, it is a special case of a
projection query.

\bigskip

Let $R \in \tau$ be a relation symbol with $\ar( R ) = r$, and let $A_1, \dots,
A_k$ be pairwise distinct attributes appearing in $\sort(R)$ where $0 < k \leq
r$. By \cref{lem:reordering}, wlog. we assume that $\sort( R ) = ( A_1, \dots,
A_r )$.

\begin{table}[H]
	\centering
	\caption{Selection and Projection.}\label{tab:selproj}
	\begin{tabular}{ ll }
		\toprule%
		\textbf{Query}
		& \textbf{Semantics}	(for all $D \in \DB$, $t \in \TT_Q$)									
		\\%
		\midrule%
		$Q = \bagsel{(A_1, \dots, A_k) \in \BBBB}{R}$				
		& $\card{ Q(D) }_{ R_Q(t) } \coloneqq 
			\card{ D }_{ R(t) }$ if $t[A_1,\dots,A_k] \in \BBBB$, and $0$ 
			otherwise
		\\%
		$Q = \bagproj{A_1,\dots,A_k}{R}$
		& $\card{ Q(D) }_{ R_Q(t) } \coloneqq 
		\sum_{t' \with t'[A_1,\dots,A_k] = t} \card{ D }_{ R(t') }$
		\\%
		\bottomrule%
	\end{tabular}
\end{table}

\begin{lem}\label[lem]{lem:selection}
	Let $\BBBB \in \bigotimes_{ i = 1 }^{ k } \AAA_i$. Then the query 
	$Q = \bagsel{ (A_1, \dots, A_k) \in \BBBB }{ R }$ is measurable.
\end{lem}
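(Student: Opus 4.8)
The plan is to realize the selection as a \emph{partial fact map} and invoke the mapping theorem (\cref{thm:mapping}). Selection neither splits nor merges facts: it simply deletes every $R$-fact whose projection to $(A_1,\dots,A_k)$ falls outside $\BBBB$ and keeps all remaining facts together with their multiplicities. This is exactly the kind of query that \cref{thm:mapping} produces from the (partial) identity defined on the surviving facts, so the whole argument reduces to identifying that map and checking that its domain is measurable.

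Concretely, I would first single out the set of facts that pass the selection,
\[
	\FFFF_q \coloneqq \set[\big]{ R(t) \in \FF_R \with t[A_1,\dots,A_k] \in \BBBB }\text,
\]
and verify that $\FFFF_q \in \FFF$. Recalling that by \cref{lem:reordering} we may assume $\sort(R) = (A_1,\dots,A_r)$, let $p \from \TT_R \to \AA_1 \times \dots \times \AA_k$ be the projection onto the first $k$ coordinates. Each coordinate projection is $\TTT_R$-measurable by the definition of the product $\sigma$-algebra, hence so is $p$, and therefore $p^{-1}(\BBBB) \in \TTT_R$ since $\BBBB \in \bigotimes_{i=1}^{k} \AAA_i$. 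Consequently $\FFFF_q = R\big( p^{-1}(\BBBB) \big) \in \FFF_R \subseteq \FFF$, as required by the hypothesis of \cref{thm:mapping}.

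Next I would set $q \from \FFFF_q \to \FF_Q$ to be $q\big( R(t) \big) \coloneqq R_Q(t)$. This is merely the identity on tuples with the relation symbol relabelled, so it is injective and continuous, and in particular $(\FFF,\FFF_Q)$-measurable. Applying \cref{thm:mapping} then yields a measurable query with output multiplicities $\card{ Q(D) }_{ R_Q(t) } = \card{ D }_{ q^{-1}(R_Q(t)) }$. Since $q$ is injective, $q^{-1}(R_Q(t)) = R(t)$ exactly when $t[A_1,\dots,A_k] \in \BBBB$ and is empty otherwise, so these counts equal $\card{ D }_{ R(t) }$ on the selected tuples and $0$ elsewhere — precisely the semantics listed in \cref{tab:selproj}. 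This identifies the map built by \cref{thm:mapping} with $Q$ and establishes measurability.

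The only step that is not entirely routine is the measurability of $\FFFF_q$, that is, $p^{-1}(\BBBB) \in \TTT_R$. This rests on regrouping the finite product $\sigma$-algebra $\TTT_R = \bigotimes_{i=1}^{r} \AAA_i$ as $\big(\bigotimes_{i=1}^{k} \AAA_i\big) \otimes \big(\bigotimes_{i=k+1}^{r} \AAA_i\big)$, under which $p^{-1}(\BBBB)$ becomes the measurable rectangle $\BBBB \times \AA_{k+1} \times \dots \times \AA_r$. This is a standard property of product $\sigma$-algebras and carries no real difficulty, so I expect the entire proof to be short once the fact map $q$ and its domain $\FFFF_q$ are set up correctly.
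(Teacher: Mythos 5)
Your proof is correct, but it takes a different route from the paper's own argument for \cref{lem:selection}: the paper gives a direct proof, fixing a counting event $\set{D_Q \with \card{D_Q}_{\FFFF} = n}$ with $\FFFF = R_Q(\TTTT)$ and observing that its preimage under $Q$ is the counting event on $\FFFF^{-1} = R(\TTTT) \cap R(\BBBB \times \AA_{k+1} \times \dots \times \AA_r)$, which is measurable, so measurability follows immediately from the fact that counting events generate $\DDB_Q$. Your version instead packages the selection as a partial fact map $q$ on the measurable domain $\FFFF_q = R(\BBBB \times \AA_{k+1} \times \dots \times \AA_r)$ and invokes \cref{thm:mapping}; the extra work this costs you (checking $(\FFF,\FFF_Q)$-measurability of $q$ and matching the mapping theorem's output formula to the semantics of \cref{tab:selproj}) is all carried out correctly, and the underlying computation is the same, since your $q^{-1}(\FFFF)$ is exactly the paper's $\FFFF^{-1}$. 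In fact the paper itself points out this alternative in the remark following \cref{lem:projection}, noting that both selection and projection follow from \cref{thm:mapping} with precisely the map $q(R(t)) = R_Q(t)$ restricted to the selected facts. What your route buys is uniformity — selection, projection, and renaming all become instances of one theorem — at the price of a slightly longer setup; the paper's direct proof is shorter but ad hoc. One cosmetic point: continuity of $q$ is not needed for \cref{thm:mapping}, only measurability, so that part of your argument is harmless but superfluous.
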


\begin{proof}
	Fix $\FFFF \in \FFF_Q$, say $\FFFF = R_Q( \TTTT )$ with $\TTTT \in
	\bigotimes_{ i = 1 }^{ k } \AAA_i$, and let $n \in \NN$. Define
	\begin{equation}\label{eq:selFFFF-1}
		\FFFF^{ -1 }
		\coloneqq R( \TTTT ) \cap R ( \BBBB \times 
			\AA_{ k + 1 } \times \dotsc \times \AA_{ r }
		\big)
	\end{equation}
	Then $\FFFF^{ -1 } \in \FFF_R \subseteq \FFF$. It holds that
	\[
		\card{ Q(D) }_{ \FFFF } = n
		\iff
		\card{ D }_{ \FFFF^{-1} } = n\text.
	\]
	Thus, $Q$ is measurable.
\end{proof}

\begin{exa}
	Suppose $A_1, A_2 \in \sort( R )$ with $\AA_1 = \AA_2 = \RR$. The sets
	\begin{align*}
		\TTTT_{ = } \coloneqq \set{ ( x, y ) \in \RR^2 \with x = y }
		&&\text{and}&&
		\TTTT_{ < } \coloneqq \set{ ( x, y ) \in \RR^2 \with x < y }
		\intertext{are Borel in $\RR^2$. Thus, the queries}
		\bagsel{ A_1 = A_2 }{ R } \coloneqq \bagsel{ (A_1, A_2) \in \TTTT_= }{ R }
		&&\text{and}&&
		\bagsel{ A_1 < A_2 }{ R } \coloneqq \bagsel{ (A_1, A_2) \in \TTTT_< }{ R }
	\end{align*}
	are measurable by \cref{lem:selection}. In particular, in the context of our 
	running example, we can do selections based on comparing temperatures.
\end{exa}

\begin{lem}\label[lem]{lem:projection}
	The query $Q = \bagproj{ A_1, \dots, A_k }{ R }$ is measurable.
\end{lem}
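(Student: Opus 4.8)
The plan is to reduce the claim directly to a counting event on the input instance, exactly mirroring the proof of \cref{lem:selection}, rather than invoking \cref{lem:central}. Indeed, \cref{lem:central} is of no help here: a single output fact $R_Q(t)$ arises from the entire (typically uncountable) fiber of input tuples $t'$ with $t'[A_1,\dots,A_k] = t$, so projection is many-to-one in the direction opposite to what \cref{lem:central} demands, and there is no injective $q_i \from \FF_Q \to \FF_R$ describing it. The crucial observation is instead that projection neither creates nor destroys facts --- it only \emph{relabels} them by forgetting the coordinates $A_{k+1},\dots,A_r$ --- so the total multiplicity of output facts landing in a measurable region equals the total multiplicity of the input facts whose projection lands there. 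This turns the output counting event back into an input counting event.

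Concretely, the standing assumption $\sort(R) = (A_1,\dots,A_r)$ (justified by \cref{lem:reordering}) lets me take the projected attributes to be the first $k$ coordinates. By \cref{fac:measurablebasics}\labelcref{itm:basics1} it suffices to show $\set{ D \in \DB \with \card{ Q(D) }_{ \FFFF } = n } \in \DDB$ for all $\FFFF \in \FFF_Q$ and all $n \in \NN$. Fix such an $\FFFF$; since the output schema has the single relation symbol $R_Q$, I write $\FFFF = R_Q( \TTTT )$ with $\TTTT \in \TTT_Q = \bigotimes_{ i = 1 }^{ k } \AAA_i$. Let $p \from \TT_R \to \TT_Q$, $p( a_1, \dots, a_r ) = ( a_1, \dots, a_k )$, be the coordinate projection; as a projection between product spaces it is $( \TTT_R, \TTT_Q )$-measurable, so $p^{-1}( \TTTT ) = \TTTT \times \AA_{ k + 1 } \times \dots \times \AA_r \in \TTT_R$, whence
\[
	\FFFF^{ -1 } \coloneqq R\big( p^{-1}( \TTTT ) \big) \in \FFF_R \subseteq \FFF\text.
\]

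It then remains to verify the counting identity $\card{ Q(D) }_{ \FFFF } = \card{ D }_{ \FFFF^{-1} }$ for every $D \in \DB$. Since $D$ is a finite bag, every sum below has only finitely many nonzero terms, so reindexing is unproblematic:
\[
	\card{ Q(D) }_{ \FFFF }
	= \sum_{ t \in \TTTT } \card{ Q(D) }_{ R_Q(t) }
	= \sum_{ t \in \TTTT } \sum_{ \substack{ t' \in \TT_R \\ t'[A_1,\dots,A_k] = t } } \card{ D }_{ R(t') }
	= \sum_{ \substack{ t' \in \TT_R \\ t'[A_1,\dots,A_k] \in \TTTT } } \card{ D }_{ R(t') }
	= \card{ D }_{ \FFFF^{-1} }\text.
\]
Hence $\set{ D \in \DB \with \card{ Q(D) }_{ \FFFF } = n } = \counting{ \FFFF^{-1} }{ n } \in \DDB$, and $Q$ is measurable. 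The only two points requiring care --- and the main, if modest, obstacle --- are confirming the measurability of $p^{-1}( \TTTT )$ and the legitimacy of the fiberwise rearrangement of the defining sum; these are secured, respectively, by the measurability of coordinate projections with respect to product $\sigma$-algebras and by the finiteness of database instances. I expect no deeper difficulty, since projection, unlike the bag difference operator, fits the simple \enquote{preimage of a counting event} pattern already exploited for selection.
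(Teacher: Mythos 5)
Your proof is correct and follows essentially the same route as the paper: both define the preimage set $\FFFF^{-1} = R(\TTTT \times \AA_{k+1} \times \dots \times \AA_r) \in \FFF_R$ and reduce the output counting event to the input counting event via $\card{Q(D)}_{\FFFF} = \card{D}_{\FFFF^{-1}}$. You merely spell out the fiberwise rearrangement of the sum and the measurability of the coordinate projection more explicitly than the paper does, which leaves these as immediate observations.
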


\begin{proof}
	Fix $\FFFF \in \FFF_Q$, say $\FFFF = R_Q( \TTTT )$ with $\TTTT \in
	\bigotimes_{ i = 1 }^{ k } \AAA_i$, and let $n \in \NN$. Again, define
	\begin{equation}\label{eq:projFFFF-1}
		\FFFF^{-1} \coloneqq 
		R\big( \TTTT \times \AA_{ k+1 } \times \dotsc \times \AA_r \big)
		\text.
	\end{equation}
	Then $\FFFF^{ -1 } \in \FFF_R$ and it holds that
	\[
		\card{ Q(D) }_{ \FFFF } = n 
		\iff
		\card{ D }_{ \FFFF^{-1} } = n
		\text,
	\]
	and, hence, $Q$ is measurable.
\end{proof}

\begin{rem}
	Above, we provided direct proofs of \cref{lem:selection} and 
	\cref{lem:projection}.  Alternatively, they follow from \cref{thm:mapping}
	using the function
	\begin{align*}
		q \from \set[\big]{ R(t) \in \FF_R \with t[A_1,\dots,A_k] \in \BBBB } \to 
		\FF_Q
		&&\text{with}&&
		q\big( R(t) \big) &= R_Q(t)
	\intertext{for selection and the function}
		q \from \FF_R \to \FF_Q
		&&\text{with}&&
		q\big( R(t) \big) &= R_Q\big( t[A_1, \dots, A_k] \big)
	\end{align*}
	for projection. A closer look reveals that the sets $\FFFF^{-1}$ of 
	\labelcref{eq:selFFFF-1,eq:projFFFF-1} are really the preimages of the 
	respective function $q$ that we know from \cref{thm:mapping}.
\end{rem}

\subsection{Products}
Let $R, S \in \tau$ be relation symbols. In this section, we treat the cross
product $Q = R \times S$ (\cref{tab:cross}). (For the discussions here, it
does not matter whether $R=S$.)

\begin{table}[H]
	\centering
	\caption{Cross Products.}\label{tab:cross}
	\begin{tabular}{ ll }
		\toprule%
		\textbf{Query}
		& \textbf{Semantics}	(for all $D \in \DB$, $t \in \TT_R$, $t' \in \TT_S$)
		\\%
		\midrule%
		$Q = \bagprod{R}{S}$				
		& $\card{ Q(D) }_{ R_Q(t,t') } \coloneqq 
			\card{ D }_{ R(t) } \cdot \card{ D }_{ S(t') }$
		\\%
		\bottomrule%
	\end{tabular}
\end{table}

\begin{lem}\label[lem]{lem:crossprodrectangles}
	Let $\FFFF \in \FFF_Q$ such that $\FFFF = R_Q( \TTTT_R \times \TTTT_S )$
	with $\TTTT_R \in \TTT_R$ and $\TTTT_S \in \TTT_S$. Then
	\[
		\set{ D \in \DB \with \card{ Q(D) }_{ \FFFF } = n } \in \DDB
	\]
	for all $n \in \NN$.
\end{lem}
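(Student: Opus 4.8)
The plan is to exploit the product (\enquote{rectangular}) shape of $\FFFF$ to factorize the count $\card{Q(D)}_{\FFFF}$ into a product of two counts, each of which is manifestly a counting event, and then to write the preimage $\set{D \with \card{Q(D)}_{\FFFF} = n}$ as a finite Boolean combination of such events. The entire argument reduces to this factorization, which is available precisely because $\FFFF$ is a rectangle.

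First I would compute $\card{Q(D)}_{\FFFF}$ explicitly. Since $\FFFF = R_Q(\TTTT_R \times \TTTT_S)$, a fact $R_Q(t,t')$ lies in $\FFFF$ exactly when $t \in \TTTT_R$ and $t' \in \TTTT_S$, and by the semantics in \cref{tab:cross} it has multiplicity $\card{D}_{R(t)} \cdot \card{D}_{S(t')}$ in $Q(D)$. As $D$ is finite, only finitely many of these multiplicities are nonzero, so by distributivity
\[
	\card{Q(D)}_{\FFFF}
	= \sum_{t \in \TTTT_R} \sum_{t' \in \TTTT_S} \card{D}_{R(t)} \cdot \card{D}_{S(t')}
	= \card{D}_{R(\TTTT_R)} \cdot \card{D}_{S(\TTTT_S)}.
\]
Here $R(\TTTT_R) \in \FFF_R \subseteq \FFF$ and $S(\TTTT_S) \in \FFF_S \subseteq \FFF$ (using $\TTTT_R \in \TTT_R$ and $\TTTT_S \in \TTT_S$), so both factors are counts of measurable fact sets; that is, $\set{D \with \card{D}_{R(\TTTT_R)} = a} = \counting{R(\TTTT_R)}{a} \in \DDB$ and likewise for $S$.

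It then remains to express the preimage in terms of these counting events. For $n \in \NN_+$ the equation $a \cdot b = n$ has only finitely many solutions $(a,b) \in \NN_+^2$ (one per divisor of $n$), whence
\[
	\set[\big]{D \in \DB \with \card{Q(D)}_{\FFFF} = n}
	= \bigcup_{\substack{a,b \in \NN_+ \\ ab = n}}
	\big( \counting{R(\TTTT_R)}{a} \cap \counting{S(\TTTT_S)}{b} \big),
\]
a finite union of intersections of counting events, hence in $\DDB$. For $n = 0$ the product vanishes iff one factor does, so the preimage equals $\counting{R(\TTTT_R)}{0} \cup \counting{S(\TTTT_S)}{0} \in \DDB$. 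Since counting events generate $\DDB = \Count(\FF)$, this settles the claim for all $n \in \NN$.

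I do not anticipate a genuine obstacle here: the only points needing minor care are the separate treatment of the case $n = 0$ and the observation that the case $R = S$ is harmless, as $\counting{R(\TTTT_R)}{a}$ and $\counting{S(\TTTT_S)}{b}$ remain individually measurable even when $R(\TTTT_R)$ and $S(\TTTT_S)$ overlap. The restriction to rectangular $\FFFF$ is essential, since for a general $\FFFF \in \FFF_Q$ the count no longer factorizes; I would expect the unrestricted cross-product to be reduced to the present lemma in a subsequent step (for instance via a generating-class argument over the product $\sigma$-algebra $\TTT_Q = \TTT_R \otimes \TTT_S$).
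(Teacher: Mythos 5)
Your proof is correct and follows essentially the same route as the paper's: both exploit the rectangular shape of $\FFFF$ to factorize $\card{Q(D)}_{\FFFF} = \card{D}_{R(\TTTT_R)} \cdot \card{D}_{S(\TTTT_S)}$ and then write the preimage as a union of intersections of counting events over the factorizations of $n$. Your separate handling of $n=0$ is a minor refinement (the paper simply takes the union over all $n_R \cdot n_S = n$, which is countable and hence still measurable in that case), and your closing remark correctly anticipates the generating-class argument the paper uses next in \cref{lem:FFFeps}.
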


\begin{proof}
Let $\FFFF \in \FFF_Q$, say $\FFFF = R_Q( \TTTT )$ with $\TTTT \in \TT_Q$. If
$\TTTT = \TTTT_R \times \TTTT_S$ for some $\TTTT_R \in \TT_R$ and $\TTTT_S \in
\TT_S$, then for all $D \in \DD$, it holds that 
\begin{equation}\label{eq:proddecomp}
	\card{ Q( D ) }_{ \FFFF } = 
	\card{ D }_{ \FFFF_R } \cdot \card{ D }_{ \FFFF_S }
	\text,
\end{equation}
where $\FFFF_R = R( \TTTT_R )$ and $\FFFF_S = S( \TTTT_S )$. Thus,
\[
	\set{ D \in \DB \with \card{ Q(D) }_{ \FFFF } = n }
	= \bigcup_{ n_R \cdot n_S = n } 
		\set[\big]{ D \in \DB \with \card{ D }_{ \FFFF_R } = n_R \text{ and } 
		\card{ D }_{ \FFFF_S } = n_S } \in \DDB\text.\qedhere
\]
\end{proof}

\begin{rem}
	Note that we proceeded similar to \cref{lem:central}. Unfortunately, not
	every set $\TTTT \in \TT_Q$ can be decomposed into a product of measurable
	sets $\TTTT_R$ and $\TTTT_S$ as used above. The sets $\TTTT_R$ and $\TTTT_S$
	are the projections of $\TT_Q$ to $\TT_R$ and $\TT_S$. Recall that we 
	already encountered a similar situation in \cref{ex:nonmeas}. In general,
	such projections of measurable sets from products of standard Borel spaces
	to their factors yield the \emph{analytic} sets
	(cf.~\cite[Exercise~14.3]{Kechris1995}). A fundamental theorem in
	descriptive set theory states that every uncountable standard Borel space
	has analytic, non-(Borel-)measurable subsets
	\cite[Theorem~14.2]{Kechris1995}. Thus, with suitably chosen attribute
	domains, there are sets $\TTTT \in \TTT_Q$ such that, for instance, 
	the projection of $\TTTT$ to $\TT_R$ is not in $\TTT_R$. Therefore, the
	approach from \labelcref{eq:proddecomp} doesn't help us resolve the
	measurability of $Q$, even though it appeared promisingly similar to the
	arguments from the previous section. In particular, for the given query we
	cannot establish criterion \labelcref{itm:central2} from \cref{lem:central}.
\end{rem}

For $t \in \TT_Q$, we let $t_R \in \TT_R$ and $t_S \in \TT_S$ denote the 
projections of $t$ to its $R$- and $S$-part. Then for all $r > 0$, $\WWWW_r( t )
\coloneqq B_r(t_R) \times B_r(t_S) \subseteq \TT_Q$ is a \emph{measurable
rectangle} containing $t$ with $B_r( t_R ) \subseteq \TT_R$ and $B_r( t_S )
\subseteq \TT_S$. Note that $B_r( t_R )$ is a ball with respect to the (Polish)
metric on $\TT_R$, and $B_r( t_S )$ is a ball with respect to the (Polish)
metric on $\TT_S$.

\begin{rem}
	Let us briefly comment on the intuition of the setup. The sets $\WWWW_r( t
	)$ should be thought of as a small \emph{windows} that we can use to make
	our considerations local around $t = (t_R,t_S)$. We use these windows to
	show query measurability by exploiting the finiteness of our database
	instances: since every $D$ is finite, $D$ is $\epsilon$-coarse for $\epsilon
	> 0$ small enough (see \cref{ssec:coarse}). Then for small enough radius
	$r$, the balls $B_r( t_R )$ and $B_r( t_S )$ both contain at most one $R$-
	or $S$-tuple from $D$, respectively. Thus, the image of $D$ under the cross
	product query also contains at most one tuple in $\WWWW_r(t) = B_r( t_R )
	\times B_r( t_S )$. Then, as $\WWWW_r(t)$ has the appropriate shape,
	\cref{lem:crossprodrectangles} can be used again.
\end{rem}

Recall that for all $\epsilon > 0$, $\DB\rvert_\epsilon$ is the set of 
$\epsilon$-coarse instances in $\DB$. Let $\FFFF \in
\FFF_Q$ and $n \in \NN$. For all $\epsilon > 0$, we let
\[
	\DDDD_{ \FFFF, n, \epsilon } \coloneqq
	\set{ D \in \DB\rvert_\epsilon \with \card{ Q( D ) }_{ \FFFF } = n }
	\text.
\]
Then $\DDDD_{ \FFFF, n, \epsilon }$ is the $\epsilon$-coarse preimage of the
event $\set{ D_Q \in \DB_Q \with \card{ D }_{ \FFFF } = n }$ from $\DB_Q$. 

\begin{lem}\label[lem]{lem:FFFeps}
	For all $\FFFF \in \FFF_Q$, $n \in \NN$, $t \in \TT_{ Q }$ and all
	$\epsilon,r > 0$, with $r < \tfrac{\epsilon}3$ it holds that $\DDDD_{ \FFFF
	\cap R_Q(\WWWW_r( t )), n, \epsilon } \in \DDB$.
\end{lem}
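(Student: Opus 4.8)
The plan is to exploit the window $\WWWW_r( t ) = B_r( t_R ) \times B_r( t_S )$ together with $\epsilon$-coarseness to reduce the \emph{arbitrary} measurable set $\TTTT$ (where $\FFFF = R_Q( \TTTT )$) to the rectangle case already settled in \cref{lem:crossprodrectangles}. First I would record the geometric consequence of coarseness: since $r < \tfrac{\epsilon}{3}$, each ball $B_r( t_R )$ and $B_r( t_S )$ has diameter below $\epsilon$, so for every $D \in \DB\rvert_\epsilon$ the ball $B_r( t_R )$ contains at most one $R$-tuple $u$ of $D$ and $B_r( t_S )$ at most one $S$-tuple $v$ of $D$. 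Hence $Q(D)$ has at most one tuple in the whole window, namely $R_Q(u,v)$, and by the semantics of $\times$ its multiplicity is $m(D) \coloneqq \card{ D }_{ R(u) } \cdot \card{ D }_{ S(v) }$. The upshot is that for every measurable $\TTTT' \subseteq \WWWW_r( t )$ one has $\card{ Q(D) }_{ R_Q( \TTTT' ) } = m(D)$ if $R_Q(u,v) \in R_Q( \TTTT' )$ and $0$ otherwise; in particular $\TTTT' \mapsto \card{ Q(D) }_{ R_Q( \TTTT' ) }$ is additive over disjoint subsets of the window, and $\card{ Q(D) }_{ R_Q( \WWWW_r(t) ) } = m(D)$.

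This additivity is exactly what lets me run a Dynkin ($\pi$--$\lambda$) argument over the window. Writing $\FFFF \cap R_Q( \WWWW_r(t) ) = R_Q\big( \TTTT \cap \WWWW_r(t) \big)$ with $\TTTT \cap \WWWW_r(t) \in \TTT_Q\rvert_{\WWWW_r(t)}$, it suffices to prove that the family
\[
	\mathcal{H} \coloneqq \set[\big]{ \TTTT' \in \TTT_Q\rvert_{\WWWW_r(t)} \with \set[\big]{ D \in \DB\rvert_\epsilon \with \card{ Q(D) }_{ R_Q(\TTTT') } = n } \in \DDB \text{ for all } n \in \NN }
\]
equals all of $\TTT_Q\rvert_{\WWWW_r(t)}$. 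The measurable sub-rectangles $\TTTT_R \times \TTTT_S \subseteq \WWWW_r(t)$ form a $\pi$-system generating $\TTT_Q\rvert_{\WWWW_r(t)}$, and they lie in $\mathcal{H}$ directly by \cref{lem:crossprodrectangles} (intersected with $\DB\rvert_\epsilon$, measurable by \cref{lem:coarse}). It then remains to verify that $\mathcal{H}$ is a $\lambda$-system. The whole window lies in $\mathcal{H}$ (again \cref{lem:crossprodrectangles}); closure under relative complements follows from $\card{ Q(D) }_{ R_Q(\WWWW_r(t) \setminus \TTTT') } = m(D) - \card{ Q(D) }_{ R_Q(\TTTT') }$, so that $\set[\big]{ D \in \DB\rvert_\epsilon \with \card{ Q(D) }_{ R_Q(\WWWW_r(t) \setminus \TTTT') } = n }$ is the countable union over $j \in \NN$ of $\set[\big]{ D \in \DB\rvert_\epsilon \with \card{ Q(D) }_{ R_Q(\WWWW_r(t)) } = n + j }$ intersected with $\set[\big]{ D \in \DB\rvert_\epsilon \with \card{ Q(D) }_{ R_Q(\TTTT') } = j }$; and closure under countable disjoint unions follows because, by the \enquote{at most one tuple} property, at most one summand $\card{ Q(D) }_{ R_Q(\TTTT_i') }$ is nonzero, so the count for the union is a countable Boolean combination of the (measurable) counts for the pieces.

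Applying Dynkin's theorem then gives $\mathcal{H} = \TTT_Q\rvert_{\WWWW_r(t)}$, and taking $\TTTT' = \TTTT \cap \WWWW_r(t)$ yields $\DDDD_{ \FFFF \cap R_Q(\WWWW_r(t)), n, \epsilon } \in \DDB$ for every $n$, as desired.

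The main obstacle is conceptual rather than computational, and it is precisely the one flagged in the remark following \cref{lem:crossprodrectangles}: an arbitrary $\TTTT \in \TTT_Q$ need not be a product of measurable sets, and its projections can be merely analytic, hence non-measurable, so the naive decomposition \labelcref{eq:proddecomp} is unavailable. The whole point of localizing to a \emph{coarse} window is that there the count degenerates to \enquote{a fixed multiplicity times the indicator of a single point}, which \emph{is} genuinely additive in $\TTTT'$; this additivity replaces the missing product structure and powers the $\pi$--$\lambda$ bootstrap from rectangles to all measurable window sets. I would take care to establish the geometric \enquote{at most one tuple} claim cleanly (it in fact only needs $r < \tfrac{\epsilon}{2}$, and $r < \tfrac{\epsilon}{3}$ is given), since every step of the Dynkin argument silently depends on it.
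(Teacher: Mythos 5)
Your proof is correct and is essentially the paper's own argument: the paper runs the same bootstrap (it calls it the \emph{good sets principle}), taking \cref{lem:crossprodrectangles} as the base case on rectangles and using the fact that an $\epsilon$-coarse instance contributes at most one $R_Q$-tuple to the window $\WWWW_r(t)$ to push measurability through complements and countable set operations. The only cosmetic difference is that the paper verifies directly that its family of good sets is a $\sigma$-algebra on all of $\FF_Q$ (with the window intersection built into the defining property), whereas you localize to the trace $\sigma$-algebra on the window and invoke Dynkin's $\pi$--$\lambda$ theorem; the substance is the same.
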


\begin{proof}
	We prove the lemma as follows. Let
	\begin{equation}\label{eq:FFFeps}
		\FFF_{ \epsilon } \coloneqq
		\set[\big]{ \FFFF \in \FFF_Q \with
		\DDDD_{ \FFFF \cap R_Q( \WWWW_r( t ) ), n, \epsilon } \in \DDB \text{ for all }
			t \in \TT_Q\text,~n \in \NN\text{, and } r < \tfrac{\epsilon}3 }\text.
	\end{equation}
	Intuitively, this is the set of measurable sets $\FFFF$ with the property
	that all sufficiently fine \enquote{window approximations} of the
	$\epsilon$-coarse preimages of $\set{ D_Q \in \DB_Q \with \card{ D_Q
	}_{\FFFF} = n }$ are measurable in $\DDB$.

	It then suffices to show for all $\epsilon > 0$ that
	\begin{enumerate}
		\item \label[itm]{itm:goodsets1}
			for every set $\FFFF = R_Q( \TTTT )$ with $\TTTT = \TTTT_R \times 
			\TTTT_S \in \TTT_R \times \TTT_S \subseteq \TTT_Q$ it holds that 
			$\FFFF \in \FFF_{\epsilon}$, and
		\item \label[itm]{itm:goodsets2}
			the family $\FFF_{ \epsilon }$ is a $\sigma$-algebra on $\FF_Q$.
	\end{enumerate}
	From these two it follows that $\FFF_\epsilon = \FFF_Q$. That is, indeed 
	every measurable set $\FFFF \in \FFF_Q$ satisfies the property from 
	\labelcref{eq:FFFeps}. This line of argument is occasionally referred to as
	the \emph{good sets principle}~\cite[p.~5]{Ash1972}.

	We fix $\epsilon > 0$ arbitrary and show
	\labelcref{itm:goodsets1,itm:goodsets2}.
	\begin{enumerate}
		\item 
			Suppose $\FFFF = R_Q( \TTTT )$ for some $\TTTT \in \TTT_R \times 
			\TTT_S$, say $\TTTT = \TTTT_R \times \TTTT_S$. Let $t \in \TT_Q$,
			$n \in \NN$, and $r < \frac{\epsilon}3$ be arbitary. With $t_R \in
			\TT_R$ and $t_S \in \TT_S$, we denote the $R$- and the $S$-part of
			$t$, respectively. Then
			\begin{align*}
				\FFFF \cap R_Q\big( \WWWW_r( t ) \big)
				&= R_Q( \TTTT_R \times \TTTT_S ) \cap 
					R_Q( B_r( t_R ) \times B_r( t_S ) )\\
				&= R_Q\big( ( \TTTT_R \cap B_r( t_R ) ) \times 
					( \TTTT_S \cap B_r( t_S ) ) \big)
				\text.
			\end{align*}
			By \cref{lem:crossprodrectangles}, it follows that $\set{ D \in \DB 
			\with \card{ Q(D) }_{ \FFFF \cap R_Q( \WWWW_r(t) ) } = n } \in \DDB$.
			With \cref{lem:coarse} this entails that
			\[
				\DDDD_{ \FFFF \cap R_Q( \WWWW_r(t) ), n, \epsilon } = 
				\DB\rvert_{\epsilon} \cap \set{ D \in \DB \with 
				\card{ Q(D) }_{ \FFFF \cap R_Q( \WWWW_r( t ) ) } = n } \in \DDB
				\text.
			\]
			As $t$, $n$ and $r$ were arbitrary, it follows that $\FFFF \in
			\FFF_\epsilon$.
		\item 
			We show that $\FFF_{ \epsilon }$ is a $\sigma$-algebra on $\FF_Q$
			by showing $\FF_Q \in \FFF_{ \epsilon }$, and that it is closed under
			complements and countable intersections.

			First note that $\FF_Q \in \FFF_\epsilon$ follows from
			\cref{lem:crossprodrectangles} because $\FF_Q = R_Q( \TT_R \times
			\TT_S )$. Now let $t \in \TT_Q$, $n \in \NN$, and $r <
			\tfrac{\epsilon}3$ be arbitrary but fixed, and let $D \in
			\DB\rvert_\epsilon$.  \Cref{fig:FFFepsillustration} provides
			visualizations for the remaining cases.

\begin{figure}[H]
	\begin{subfigure}[b]{.5\linewidth}\centering%
		\begin{tikzpicture}
		\begin{scope}
			\clip (-.1, -.1) rectangle (3.1,3.1);
			\def\Sr{ (.5, .5) rectangle (2.5,2.5) }
			\def\origF{ plot coordinates {(-1.25,1.4) (-.2,1.2) (1.3,1.4)
				(1.5,-1.0) (-0.8,-1.3)} }
			\begin{scope}
				\fill[shade2] \Sr;
				\path[clip,smooth cycle] \origF;
				\fill[shade0] \Sr;
			\end{scope}
			\draw \Sr;
			\draw[smooth cycle] \origF;
		\end{scope}
		\node[anchor=west] at (-.5,.75) {$\FFFF$};
		\node[anchor=west] at (-.5,1.75) {$\FFFF^{\compl}$};
		\draw[stealth'-stealth',shade3] (0.5,2.7) to
			node[above] { $B_r(t_R)$ } (2.5, 2.7);
		\draw[stealth'-stealth',shade3] (2.7,2.5) to 
			node[right] { $B_r(t_S)$ } (2.7,0.5);
		\node[point,shade3,label={[shade3]above right:$t$}] at (1.5,1.5) {};
		\node[anchor=south west,rectangle] at (2.5,2.5) {$R_Q\big(\WWWW_r(t)\big)$};
		\end{tikzpicture}
		\caption{Depiction of $\FFFF^{\compl}\cap R_Q\big( \WWWW_r(t) \big)$.}
		\label{fig:windowcompl}
	\end{subfigure}%
	\begin{subfigure}[b]{.5\linewidth}\centering%
		\begin{tikzpicture}
		\begin{scope}
			\clip (-.1, -.1) rectangle (3.1,3.1);
			\def\Sr{ (.5, .5) rectangle (2.5,2.5) }
			\def\origFi{ plot coordinates {(-.4,2.1) (1.8,1.5)
				(1.1,-1.0) (-0.8,-1.3)} }
			\def\origFii{ plot coordinates {(-1.25,1.4) (-.2,1.4) 
			(2.2, .8) (2.7,-.5) (-0.8,-1.3)} }
			\begin{scope}
				\fill[shade0] \Sr;
				\path[clip,smooth cycle] \origFi;
				\path[clip,smooth cycle] \origFii;
				\fill[shade2] \Sr;
			\end{scope}
			\draw \Sr;
			\draw[smooth cycle] \origFi;
			\draw[smooth cycle] \origFii;
		\end{scope}
		\node[anchor=west] at (1.8,.2) {$\FFFF_2$};
		\node[anchor=west] at (-.7,1.8) {$\FFFF_1$};
		\draw[stealth'-stealth',shade3] (0.5,2.7) to
			node[above] { $B_r(t_R)$ } (2.5, 2.7);
		\draw[stealth'-stealth',shade3] (2.7,2.5) to 
			node[right] { $B_r(t_S)$ } (2.7,0.5);
		\node[point,shade3,label={[shade3]left:$t$}] at (1.5,1.5) {};
		\node[anchor=south west,rectangle] at (2.5,2.5) {$R_Q\big(\WWWW_r(t)\big)$};
		\end{tikzpicture}
		\caption{Depiction of $\FFFF_1 \cap \FFFF_2 \cap R_Q\big(\WWWW_r( t )\big)$.}
	\end{subfigure}
	\caption{Illustrations for the proof of \cref{lem:FFFeps}. The indications
	of $B_r(t_R)$ and $B_r(t_S)$ are only to convey the
	intuition behind the construction of
	$\WWWW_r(t)$.}\label{fig:FFFepsillustration}
\end{figure}

			\begin{enumerate}
				\item\label{itm:goodsetscompl} Let $\FFFF \in \FFF_\epsilon$. Then
					it holds that
					\begin{align*}
						&D \in \DDDD_{ \FFFF^{ \compl } \cap R_Q(\WWWW_r( t )), n, \epsilon }\\
						&\iff \card{ Q(D) }_{ \FFFF^{ \compl } \cap R_Q(\WWWW_r( t )) } = n\\
						&\iff \card{ Q(D) }_{ R_Q(\WWWW_r( t )) } - \card{ Q(D) }_{ \FFFF
						\cap R_Q(\WWWW_r( t )) } = n
					\end{align*}
				
					Recall that $D$ is $\epsilon$-coarse. In particular, as $r <
					\frac{\epsilon}3$, the ball $B_r( t_R )$ contains at most one
					$R$-tuple from $D$ and the ball $B_r( t_S )$ contains at most
					one $S$-tuple from $D$. Hence, $\WWWW_r( t ) = B_r( t_R )
					\times B_r( t_S )$ contains at most one $R_Q$-tuple from $Q(D)$.
					Thus, $D \in \DDDD_{ \FFFF^{ \compl } \cap R_Q(\WWWW_r( t )), n,
					\epsilon }$ if and only if 
						\[
							\card{ Q(D) }_{ R_Q(\WWWW_r( t )) } = n \text{ and }
							\card{ Q(D) }_{ \FFFF \cap R_Q(\WWWW_r( t )) } = 0\text.
						\]
					From \labelcref{itm:goodsets1} and the assumption $\FFFF \in
					\FFF_{\epsilon}$, it thus follows that $\FFFF^{ \compl } \in
					\FFF_\epsilon$.
				\item Now let $\big( \FFFF_i \big)_{ i = 1, 2, \dots }$ be a 
					sequence of sets $\FFFF_i \in \FFF_\epsilon$ and let $\FFFF
					= \bigcap_{ i = 1 }^{ \infty } \FFFF_i$. Then it holds that
					\begin{align*}
						&D \in \DDDD_{ \FFFF \cap R_Q(\WWWW_r( t )), n, \epsilon }\\
						&\iff \card{ Q(D) }_{ \FFFF \cap R_Q(\WWWW_r( t )) } = n\\
						&\iff \card{ Q(D) }_{ \bigcap_{ i = 1 }^{ \infty } 
						( \FFFF_i \cap R_Q(\WWWW_r( t )) ) } = n\text.
					\end{align*}
					As in the previous case, because $D$ is $\epsilon$-coarse, the
					set $\WWWW_r(t)$ contains at most one $R_Q$-tuple from $Q(D)$.
					Thus, $D \in \DDDD_{ \FFFF \cap R_Q(\WWWW_r( t )), n, \epsilon
					}$ if and only if
					\[
						\card{ Q(D) }_{ \FFFF_i \cap R_Q(\WWWW_r( t )) } = n
						\text{ for all } i = 1, 2, \dots\text.
					\]
					Thus, $\bigcap_{ i = 1 }^{ \infty } \FFFF_i \in \FFF_\epsilon$
					using the assumption.
			\end{enumerate}
			Together, $\FFF_\epsilon$ is indeed a $\sigma$-algebra on $\FF_Q$.
	\end{enumerate}
	From \labelcref{itm:goodsets1,itm:goodsets2} it follows that $\FFF_\epsilon 
	= \FFF_Q$ for all $\epsilon > 0$.
\end{proof}

\begin{lem}\label[lem]{lem:Fnepsilon}
	For all $\FFFF \in \FFF_Q$, all $n \in \NN$, and all $\epsilon > 0$, it 
	holds that $\DDDD_{ \FFFF, n, \epsilon } \in \DDB$.
\end{lem}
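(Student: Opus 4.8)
The plan is to assemble the global count $\card{ Q(D) }_{ \FFFF }$ out of the local window counts supplied by \cref{lem:FFFeps}. Since the counting events generate $\DDB$, and since $\set[\big]{ D \in \DB\rvert_\epsilon \with \card{ Q(D) }_{ \FFFF } = n }$ is the difference of the \enquote{$\geq n$} and \enquote{$\geq n+1$} sets inside $\DB\rvert_\epsilon$, it suffices to prove that $\set[\big]{ D \in \DB\rvert_\epsilon \with \card{ Q(D) }_{ \FFFF } \geq n } \in \DDB$ for every $n \in \NN_+$. I would fix $\epsilon > 0$, $\FFFF \in \FFF_Q$ and $n$, and use the key consequence of $\epsilon$-coarseness: for any rational $r < \tfrac{\epsilon}{4}$ and any $t^* \in \TT_Q^*$, the window $\WWWW_r( t^* )$ meets at most one tuple of $Q(D)$, because two distinct output tuples of an $\epsilon$-coarse $D$ differ by more than $\epsilon$ in their $R$- or $S$-part, while the $R$- and $S$-balls making up a window have diameter below $2r < \epsilon$.

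The heart of the proof is the following equivalence, valid for $D \in \DB\rvert_\epsilon$: one has $\card{ Q(D) }_{ \FFFF } \geq n$ if and only if there are a rational $r < \tfrac{\epsilon}{4}$, an $\ell \in \NN_+$, centers $t_1^*, \dots, t_\ell^* \in \TT_Q^*$ whose windows $\WWWW_r( t_1^* ), \dots, \WWWW_r( t_\ell^* )$ are pairwise disjoint, and positive integers $m_1, \dots, m_\ell$ with $m_1 + \dots + m_\ell \geq n$, such that $\card{ Q(D) }_{ \FFFF \cap R_Q( \WWWW_r( t_j^* ) ) } = m_j$ for all $j$. The direction \enquote{$\Leftarrow$} is immediate, since disjoint windows contain disjoint sets of output tuples, so the window counts add up to a lower bound $\sum_j m_j \geq n$ for $\card{ Q(D) }_{ \FFFF }$. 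For \enquote{$\Rightarrow$}, I would pick distinct output tuples $t_1, \dots, t_\ell \in \FFFF$ of total multiplicity at least $n$; by density of $\TT_Q^*$ in the open rectangle $\WWWW_r( t_j )$ I choose centers $t_j^*$ with $t_j \in \WWWW_r( t_j^* )$, and a triangle-inequality estimate shows that $r < \tfrac{\epsilon}{4}$ together with the separation of the $t_j$ by more than $\epsilon$ makes the windows pairwise disjoint. Coarseness then guarantees that $\WWWW_r( t_j^* )$ isolates exactly $t_j$ among the output tuples, so its count in $\FFFF$ is precisely the multiplicity $m_j$ of $t_j$.

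The decisive structural point is that the disjointness requirement on the windows depends only on the centers and on $r$, not on $D$; it therefore merely prescribes which index tuples are admitted into the union. Spelling out the equivalence yields
\[
	\set[\big]{ D \in \DB\rvert_\epsilon \with \card{ Q(D) }_{ \FFFF } \geq n }
	= \bigcup_{r} \bigcup_{\ell} \bigcup_{(t_j^*)} \bigcup_{(m_j)}
	\bigcap_{j=1}^{\ell} \DDDD_{ \FFFF \cap R_Q( \WWWW_r( t_j^* ) ), m_j, \epsilon }
	\text,
\]
where $r$ ranges over the rationals below $\tfrac{\epsilon}{4}$, the tuples $(t_1^*, \dots, t_\ell^*)$ over the countable set $\TT_Q^*$ subject to pairwise window-disjointness, and $(m_1, \dots, m_\ell)$ over positive integers summing to at least $n$. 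Each inner set lies in $\DDB$ by \cref{lem:FFFeps}, all index sets are countable, and $\DB\rvert_\epsilon \in \DDB$ by \cref{lem:coarse}; hence the displayed set is in $\DDB$, and so is the difference that gives $\DDDD_{ \FFFF, n, \epsilon }$.

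I expect the genuine obstacle to be the avoidance of double counting across overlapping windows. Phrasing the target as the lower bound \enquote{$\geq n$} is what makes this manageable: it removes any need to certify that the chosen windows capture \emph{all} output facts in $\FFFF$ (a property that cannot be expressed through windows alone), while imposing pairwise disjointness of the windows makes their contributions additive. Confirming that such disjoint, tuple-isolating windows can always be produced for a coarse $D$ is precisely where $\epsilon$-coarseness, the density of $\TT_Q^*$, and the calibration $r < \tfrac{\epsilon}{4}$ must be combined.
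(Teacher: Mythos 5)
Your proof is correct and follows essentially the same route as the paper: reduce to the ``$\geq n$'' events, cover the output tuples of an $\epsilon$-coarse instance by small windows $\WWWW_r(t^*)$ centered at points of the countable dense set, use coarseness to ensure each window isolates a single output tuple, and invoke \cref{lem:FFFeps} for the measurability of the window-restricted counts. The one genuine structural difference is that the paper's \cref{cond:FFFeps} quantifies \emph{universally} over all rational $r < \tfrac{\epsilon}{3}$ (with $r$-dependent centers satisfying a separation condition of $\tfrac\epsilon3$ on the centers, which only forces window disjointness once $r < \tfrac\epsilon6$), yielding an expression of the form $\bigcup\bigcap_r\bigcup_{(t^*_{j,r})}\cdots$, whereas you quantify \emph{existentially} over a single rational $r < \tfrac{\epsilon}{4}$ and impose pairwise disjointness of the windows directly as a $D$-independent restriction on the index set, yielding a plain countable union of finite intersections. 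Your variant is valid --- the $\Leftarrow$ direction needs only additivity over disjoint windows, and the $\Rightarrow$ direction produces such a disjoint family for any rational $r < \tfrac\epsilon4$ by the triangle inequality --- and it is arguably the cleaner of the two set-algebraic expressions, at the cost of nothing.
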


\begin{proof}
	Let $\FFFF \in \FFF_Q$, $n \in \NN$, and $\epsilon > 0$. It suffices to show
	that
	\[
		\DDDD_{ \FFFF, \geq n, \epsilon } \coloneqq
		\bigcup_{ m \geq n } \DDDD_{ \FFFF, m, \epsilon } 
		\in \DDB
	\]
	where we may assume $n > 0$. We show that $D \in \DDDD_{ \FFFF, \geq n, 
	\epsilon }$ is equivalent to $D$ satisfying the following condition.
	
	\begin{cond}\label[cond]{cond:FFFeps}
		The instance $D$ is $\epsilon$-coarse and for some $\ell \in \NN_+$ there
		are $k_1, \dots, k_\ell \in \NN_+$ with $k_1 + \dots + k_\ell \geq n$
		such that for all $r \in \big( 0, \frac\epsilon3 \big)$ there are
		$t_{1,r}^*, \dots, t_{\ell,r}^* \in \TT_R^* \times \TT_S^*$
		such that 
		\begin{enumerate}
			\item\label[itm]{itm:FFFepsi} 
				for all $i \neq j$ it holds that $d_R( t_{i,r,R}^*,
				t_{j,r,R}^* ) > \frac\epsilon3$ or $d_S( t_{i,r,S}^*,
				t_{j,r,S}^* ) > \frac\epsilon3$ and
			\item\label[itm]{itm:FFFepsii} 
				for all $i = 1, \dots, \ell$ it holds that $D \in \DDDD_{ \FFFF 
				\cap R_Q( \WWWW_r( t_{i,r}^* ) ) , k_i, \epsilon }$.
		\end{enumerate}
	\end{cond}
	
	We start with the easy direction ($\Leftarrow$).
	\begin{description}
		\item[$\Leftarrow$] Suppose $D$ satisfies \cref{cond:FFFeps}. Then $D \in 
			\DB\rvert_\epsilon$ and it remains to show $\card{ Q(D) }_{ \FFFF } 
			\geq n$. Note that it suffices to show that if $r$ is small enough in
			\cref{cond:FFFeps}, then the sets $\WWWW_r( t_{i,r}^* )$ are
			pairwise disjoint. In this case, the claim follows from
			\labelcref{itm:FFFepsii}.

			By \labelcref{itm:FFFepsi}, for all $i,j = 1, \dots, n$ with $i \neq 
			j$ it holds that at least one of $d_R( t_{i,R}^*, t_{j,R}^* )$ or
			$d_R( t_{i,S}^*, t_{j,S}^* )$ is larger than $\frac\epsilon3$. Thus,
			for $r < \frac\epsilon6$ it follows that at least one of 
			$d_R( t_{i,R}^*, t_{j,R}^* )$ or $d_S( t_{i,S}^*, t_{j,S}^* )$ is
			larger than $2r$. Therefore, $\WWWW_r( t_{i,r}^*) \cap \WWWW_r(
			t_{j,r}^* ) = \emptyset$.
		\item[$\Rightarrow$] Suppose that $D \in \DDDD_{ \FFFF, \geq n, \epsilon 
			}$. Then $D$ is $\epsilon$-coarse and $\card{ Q(D) }_{ \FFFF } \geq 
			n$. Thus, there exist pairwise distinct $t_1, \dots, t_\ell \in Q(D)$ 
			with $R_Q(t_i) \in \FFFF$ for all $i = 1, \dots, n$ such that
			\[
				\card{ Q(D) }_{ R_Q( t_1 ) } +
				\dots +
				\card{ Q(D) }_{ R_Q( t_\ell ) }
				\geq n\text.
			\]
			Let $r \in \big( 0, \frac\epsilon3 \big)$. Since $\TT_R^*$ is dense
			in $\TT_R$ and $\TT_S^*$ is dense in $\TT_S$, for all $i = 1,\dots,n$
			we can choose $t_{i,r}^* = ( t_{i,r,R}^*, t_{i,r,S}^* ) \in \TT_R^*
			\times \TT_S^*$ such that
			\[
				d_R( t_{i,r,R}^*, t_{i,R} ) < r
				\centertext{and}
				d_S( t_{i,r,S}^*, t_{i,S} ) < r
				\text.
			\]
			Thus, $t_i \in \WWWW_r( t_{i,r}^* )$. Now let $t \in Q(D)$ with $t
			\neq t_i$. Because $D$ is $\epsilon$-coarse, it holds that
			\[
				d_R( t_{i,R}, t_R ) > \epsilon 
				\centertext{or}
				d_S( t_{i,S}, t_S ) > \epsilon 
				\text.
			\]
			For the rest of the proof we assume $d_R( t_{i,R}, t_R ) > \epsilon$
			(the other case is completely symmetric). Recall that $r <
			\frac\epsilon3$. Then it follows that
			\[
				d_R( t_{i,r,R}^*, t_R )
				\geq d_R( t_{i,R}, t_R ) - d_R( t_{i,R}, t_{i,r,R}^* )
				> \epsilon - r > r\text.
			\]
			Thus, $t \notin \WWWW_r( t_{i,r}^* )$, so every $\WWWW_r( t_{i,r}^* )$ 
			contains no tuple from $Q(D)$ other than $t_i$.	Thus, $\card{ Q(D) }_{ 
			R_Q(t_i) } = \card{ Q(D) }_{ \FFFF \cap R_Q(\WWWW_r( t_{i,r}^* )) }$,
			and \labelcref{itm:FFFepsii} follows.

			Also for all $j \neq i$, we have that
			\[
				d_R( t_{i,r,R}^*, t_{j,r,R}^* ) 
				\geq d_R( t_{i,R}, t_{j,R} ) 
					- d_R( t_{i,R}, t_{i,r,R}^* ) 
					- d_R( t_{j,R}, t_{j,r,R}^* )
				> \epsilon - 2r > \tfrac\epsilon3,
			\]
			establishing \labelcref{itm:FFFepsi}.
	\end{description}
	The equivalence still holds, if $r$ is additionally required to
	be rational. That is,
	\[
		\DDDD_{ \FFFF, \geq n, \epsilon } =
		\bigcup_{ \ell } \bigcup_{ k_1, \dots, k_\ell }
		\bigcap_{ r } \bigcup_{ t_{1,r}^*, \dots, t_{\ell,r}^* }
		\DDDD_{ \FFFF\cap R_Q(\WWWW_r( t_{i,r}^* )), k_i, \epsilon } \in \DDB
	\]
	using \cref{lem:FFFeps} (with the indices ranging as in our equivalence, 
	and numbers being restricted to rationals).
\end{proof}

Finally, the measurability of $Q = R \times S$ is a direct consequence of
\cref{lem:Fnepsilon,cor:coarsepreimage}.

\begin{lem}\label[lem]{lem:crossprod}
	The query $Q = R \times S$ is measurable.
	\qed
\end{lem}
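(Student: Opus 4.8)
The plan is to obtain \cref{lem:crossprod} as an immediate corollary of the two results that precede it, since all of the genuine work has already been carried out. The crucial observation is purely definitional: the set $\DDDD_{\FFFF,n,\epsilon}$ was introduced as
\[
	\DDDD_{\FFFF,n,\epsilon}
	= \set{ D \in \DB\rvert_\epsilon \with \card{ Q(D) }_{\FFFF} = n }\text,
\]
which is precisely the set whose measurability is demanded in the hypothesis of \cref{cor:coarsepreimage}. So the final step is entirely a matter of matching the statement of \cref{lem:Fnepsilon} against that hypothesis.

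Concretely, I would first invoke \cref{lem:Fnepsilon}, which asserts that $\DDDD_{\FFFF,n,\epsilon} \in \DDB$ for every $\FFFF \in \FFF_Q$, every $n \in \NN$, and every $\epsilon > 0$. Unfolding the definition of $\DDDD_{\FFFF,n,\epsilon}$, this says exactly that $\set{ D \in \DB\rvert_\epsilon \with \card{ Q(D) }_{\FFFF} = n } \in \DDB$ for all such $\FFFF$, $n$, and $\epsilon$. Since this is verbatim the premise of \cref{cor:coarsepreimage}, applying that corollary yields measurability of $Q = R \times S$, completing the argument. There is no subtlety in aligning the two: both are phrased with the condition ``$=n$'' for $n \in \NN$, so no reformulation (e.g.\ passing to ``$\geq n$'') is needed here.

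I do not expect any obstacle in this last step; the difficulty of the whole product-query argument was absorbed earlier in the chain. The hard part was \cref{lem:Fnepsilon}, where the event $\card{Q(D)}_{\FFFF} \geq n$ on $\epsilon$-coarse instances had to be rewritten as a countable Boolean combination of window-localized events $\DDDD_{\FFFF \cap R_Q(\WWWW_r(t_{i,r}^*)),k_i,\epsilon}$, each of which is measurable by \cref{lem:FFFeps}. That lemma in turn leaned on the good-sets principle (reducing arbitrary $\FFFF \in \FFF_Q$ to measurable rectangles via \cref{lem:crossprodrectangles}) and on $\epsilon$-coarseness to guarantee that each small window $\WWWW_r(t)$ captures at most one output tuple, sidestepping the non-measurable-projection phenomenon from \cref{ex:nonmeas}. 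Given all of this, \cref{lem:crossprod} is genuinely a one-line deduction, and the \qed already placed on its statement in the excerpt reflects exactly that.
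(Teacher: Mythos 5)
Your proposal is correct and matches the paper exactly: the paper itself derives \cref{lem:crossprod} as an immediate consequence of \cref{lem:Fnepsilon} and \cref{cor:coarsepreimage}, which is why the lemma is stated with a \qed and no separate proof. Your identification of $\DDDD_{\FFFF,n,\epsilon}$ with the set in the hypothesis of \cref{cor:coarsepreimage} is precisely the intended one-line deduction.
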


\bigskip

Altogether,
\cref{lem:base,lem:basicbagops,lem:dedupe,lem:selection,lem:projection,lem:crossprod}
now prove the main result---the measurability of $\mathsf{BALG}^1$-queries---of
this section (\cref{thm:balg}).

As a consequence, we also obtain the measurability of all kinds of derived 
operators, including the typical join operators. Note that it also follows from 
\cref{thm:balg}, that we can use finite Boolean combinations of predicates for
selection queries.

\section{Aggregation}\label{sec:agg}

There are practically relevant queries that are not already covered by our
treatment in the previous section. For example, in our running example of
temperature recordings, we might be interested in returning the \emph{average}
(or \emph{minimum} or \emph{maximum}) temperature per room, taken over all
temperature records for this particular room.

In this section, we formalize aggregate operators and aggregate queries,
possibly with grouping in the standard PDB framework in a possible worlds 
semantics style (cf. \cref{ssec:pws}). In particular, we show that these
queries are measurable in the standard PDB framework.

\begin{rem}
	Often, when a separate treatment of aggregate queries over purely algebraic
	ones is motivated, it is mentioned that the correspondence of relational
	algebra and relational calculus limits expressive power to that of 
	first-order logic. However, bag query languages based on relational algebra
	\emph{do} allow expressing various kinds of aggregation based on exploiting
	the presence of multiplicities. For example, counting in a unary fashion is
	possible in $\mathsf{BALG^1}$~\cite{GrumbachMilo1996}. Here, we follow a 
	more general approach in allowing basically any measurable function over 
	finite bags to be used for aggregation. This goes beyond the integer
	aggregation of $\mathsf{BALG}$~\cite{Grumbach+1996}.
\end{rem}

Let $(\AA, \AAA)$ and $(\BB,\BBB)$ be standard Borel spaces. An \emph{aggregate
operator} (or \emph{aggregator}) from $\AA$ to $\BB$ is a function $\Phi \from
\powerbag_{ \fin }( \AA ) \to \BB$, that is, a function mapping (finite) bags
of elements of $\AA$ to elements of $\BB$. If $\TT_R = \AA$, then every such
aggregator $\Phi$ induces an \emph{aggregation query} $Q = \bagagg{ \Phi }{ R
}$ with $\TT_Q = \BB$ as follows:
\[
	\card{ Q(D) }_{ R_Q( t ) } \coloneqq 
	\begin{cases}
		1 & \text{if } t = \Phi\big( R( D ) \big)\text{, and}\\
		0 & \text{otherwise,}
	\end{cases}
\]
for all $D \in \DB$.\footnote{Formally, $R(D)$ has been defined as the bag of
\emph{$R$-facts} in $D$ wheras $\Phi$ should take bags of \emph{$R$-tuples}.
This small type mismatch is of no significance whatsoever.}
In essence, its output on a single instance $D$ is the instance containing only
the single \enquote{tuple} corresponding to the value of the aggregation over
the relation $R$ in $D$. Examples of common attribute operators $\Phi$ are
shown in \cref{tab:commonagg}.

\begin{prop}\label[prop]{prop:aggregation}
	If $\Phi$ is a $\big( \Count( \TT_R ), \TTT_Q \big)$-measurable aggregator,
	then $\bagagg{ \Phi }{ R }$ is a measurable query.
\end{prop}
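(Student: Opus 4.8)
The plan is to exhibit $Q = \bagagg{\Phi}{R}$ as a composition of three measurable maps and then invoke closure of measurability under composition (\cref{fac:measurablebasics}\labelcref{itm:basics2}). The guiding observation is that for every $D \in \DB$ the output $Q(D)$ is a \emph{singleton} instance, namely $Q(D) = \bag{ R_Q(\Phi(R(D))) }$. Hence $Q = \iota \after \Phi \after \rho$, where $\rho \from \DB \to \powerbag_{\fin}(\TT_R)$ sends an instance to the bag of its $R$-tuples (identifying $R$-facts with $R$-tuples as in the footnote), $\Phi \from \powerbag_{\fin}(\TT_R) \to \TT_Q$ is the given aggregator, and $\iota \from \TT_Q \to \DB_Q$ wraps a value $t$ into the singleton output instance $\bag{ R_Q(t) }$. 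It then suffices to prove that $\rho$ is $(\DDB, \Count(\TT_R))$-measurable and that $\iota$ is $(\TTT_Q, \DDB_Q)$-measurable, since $\Phi$ is $(\Count(\TT_R), \TTT_Q)$-measurable by hypothesis; composing the three then gives that $Q$ is $(\DDB, \DDB_Q)$-measurable, which is what we want.

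For the wrapping map $\iota$, by \cref{fac:measurablebasics}\labelcref{itm:basics1} it is enough to check preimages of the generating counting events $\counting{\FFFF}{n}$ of $\DDB_Q = \Count(\FF_Q)$. Because $\iota(t)$ is a singleton, $\card{\iota(t)}_{\FFFF} \in \set{0,1}$. Writing $\FFFF = R_Q(\TTTT)$ with $\TTTT \in \TTT_Q$ (every set in $\FFF_Q = \set{\emptyset,\set{R_Q}} \otimes \TTT_Q$ has this form), injectivity of $t \mapsto R_Q(t)$ gives $R_Q(t) \in \FFFF \iff t \in \TTTT$. Thus $\iota^{-1}(\counting{\FFFF}{1}) = \TTTT$, $\iota^{-1}(\counting{\FFFF}{0}) = \TT_Q \setminus \TTTT$, and $\iota^{-1}(\counting{\FFFF}{n}) = \emptyset$ for $n \geq 2$; all lie in $\TTT_Q$, so $\iota$ is measurable.

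The core of the argument is the measurability of the restriction map $\rho$. Again by \cref{fac:measurablebasics}\labelcref{itm:basics1} it suffices to check preimages of the generating counting events $\counting{\TTTT'}{m}$ of $\Count(\TT_R)$, for $\TTTT' \in \TTT_R$ and $m \in \NN$. The key identity is $\card{\rho(D)}_{\TTTT'} = \card{D}_{R(\TTTT')}$: restricting $D$ to its $R$-tuples leaves unchanged how many of them (with multiplicity) fall into $\TTTT'$. Consequently $\rho^{-1}(\counting{\TTTT'}{m}) = \counting{R(\TTTT')}{m}$, and since $R(\TTTT') \in \FFF_R \subseteq \FFF$ this is a counting event in $\DDB = \Count(\FF)$. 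Hence $\rho$ is measurable.

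I expect the only genuine subtlety to be the bookkeeping around identifying $R$-facts with $R$-tuples, i.e.\ checking that the canonical bijection $t \mapsto R(t)$ carries $(\powerbag_{\fin}(\TT_R), \Count(\TT_R))$ onto the corresponding space of bags of $R$-facts with its counting $\sigma$-algebra, so that the hypothesis on $\Phi$ applies verbatim. This is exactly the harmless type mismatch flagged in the footnote, and it holds because $\FFF_R = \set{\emptyset,\set{R}} \otimes \TTT_R$ matches $\TTT_R$ under this bijection, taking counting events to counting events. Once this is in place, the three component maps are measurable and the composition \cref{fac:measurablebasics}\labelcref{itm:basics2} finishes the proof.
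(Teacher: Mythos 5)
Your proof is correct and follows essentially the same route as the paper's: both reduce the claim to checking preimages of the generating counting events and to the single identity that $\card{ Q(D) }_{ R_Q( \TTTT ) } = 1$ if and only if the bag of $R$-tuples of $D$ lies in $\Phi^{-1}( \TTTT )$, which is measurable by hypothesis. The only difference is presentational: you factor $Q = \iota \after \Phi \after \rho$ and verify each piece separately, whereas the paper assumes without loss of generality that $R$ is the only relation and computes the preimage directly; your explicit measurability check for the restriction map $\rho$ (via $\rho^{-1}(\counting{\TTTT'}{m}) = \counting{R(\TTTT')}{m}$) makes that reduction precise.
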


\begin{proof}
	Let $\FFFF = R_Q( \TTTT )$ where $\TTTT \in \TTT_Q$. Without loss of
	generality, we assume that $R$ is the only relation in $\tau$. Then for all
	$D \in \DB$, it holds that
	\[
		\card{ Q(D) }_{ \FFFF } = 1
		\iff
		R( D ) \in \Phi^{-1}( \TTTT )
		\iff
		D \in R\big( \Phi^{-1}( \TTTT ) \big)
	\]
	(and $\card{ Q(D) }_{ \FFFF } = 0$ otherwise). Since $\Phi$ is $\big( 
	\Count( \TT_R ), \TTT_Q \big)$-measurable, the claim follows.
\end{proof}

\begin{table}[H]
	\centering
	\caption{Common aggregate operators. (We assume that $\MIN$, $\MAX$ and 
	$\AVG$ have a suitable definition on empty bags, so that their semantics are
	well-defined.)}\label{tab:commonagg}
	\begin{tabular}{ll}
		\toprule
		\textbf{Name}	& \textbf{Definition}\\ \midrule
		Count				& $\CNT( \bag{ a_1, \dots, a_m } ) 
								\coloneqq m$\\
		Distinct Count	& $\CNTd( \bag{ a_1, \dots, a_m } )
								\coloneqq \card{ \set{ a_1, \dots, a_m } }$\\
		Sum				& $\SUM( \bag{ a_1, \dots, a_m } ) 
								\coloneqq a_1 + \dotsc + a_m$\\
		Minimum			& $\MIN( \bag{ a_1, \dots, a_m } )
								\coloneqq \min\set{ a_1, \dots, a_m }$\\
		Maximum			& $\MAX( \bag{ a_1, \dots, a_m } )
								\coloneqq \max\set{ a_1, \dots, a_m }$\\
		Average			& $\AVG( \bag{ a_1, \dots, a_m } ) 
								\coloneqq \frac{ a_1 + \dots + a_m }{ m }$\\
		\bottomrule
	\end{tabular}
	\label{tab:common_aggregators}
\end{table}

For $m \in \NN$, we call a function $\phi \from \TT_R^m \to \TT_Q$
\emph{symmetric} if $\phi( t ) = \phi( t' )$ for all $t \in \TT_R^m$ and all
permutations $t'$ of $t$.

\begin{lem}
	For all $m \in \NN$, let $\phi_m \from \TT_R^m \to \TT_Q$ be a
	symmetric, measurable function. Then the aggregator $\Phi \from \powerbag_{
	\fin }( \TTT_R ) \to \TT_Q$ with
	\[
		\Phi\big( \bag{ a_1, \dots, a_m } \big) 
		\coloneqq \phi_m( a_1, \dots, a_m )
	\]
	is $\big( \Count( \TT_R ), \TTT_Q )$-measurable.
\end{lem}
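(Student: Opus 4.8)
The plan is to reduce the claim to the measurability of the individual layer functions $\phi_m$ by passing through the symmetrization map. Recall from \cref{fac:symcounting} that $\Count(\TT_R)$ coincides with the $\sigma$-algebra of \labelcref{eq:symsigmaalg}, so a set $\BBBB \subseteq \powerbag_{\fin}(\TT_R)$ lies in $\Count(\TT_R)$ if and only if $\sym^{-1}(\BBBB) \in \bigoplus_{k=0}^{\infty} \TTT_R^{\otimes k}$. Before invoking this, I would note that the symmetry of each $\phi_m$ is exactly what makes $\Phi$ well defined in the first place: the value $\Phi(\bag{a_1,\dots,a_m})$ must not depend on the ordering in which the bag's elements are listed, and this is guaranteed precisely because $\phi_m(a_1,\dots,a_m)=\phi_m(a_{\pi(1)},\dots,a_{\pi(m)})$ for every permutation $\pi$.

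To establish measurability, I would fix an arbitrary $\TTTT \in \TTT_Q$ and show that $\Phi^{-1}(\TTTT) \in \Count(\TT_R)$, which by the criterion above is equivalent to $\sym^{-1}\big(\Phi^{-1}(\TTTT)\big) = (\Phi \after \sym)^{-1}(\TTTT) \in \bigoplus_{k=0}^{\infty}\TTT_R^{\otimes k}$. The key observation is that $\Phi \after \sym$ splits along the layers of the disjoint union: for $(a_1,\dots,a_m) \in \TT_R^m$ we have $(\Phi\after\sym)(a_1,\dots,a_m)=\Phi(\bag{a_1,\dots,a_m})=\phi_m(a_1,\dots,a_m)$, so the restriction of $\Phi\after\sym$ to the $m$-th component $\TT_R^m$ is exactly $\phi_m$.

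By the defining property of the disjoint-union $\sigma$-algebra, a set $\YYYY$ belongs to $\bigoplus_{k=0}^{\infty}\TTT_R^{\otimes k}$ if and only if $\YYYY \cap \TT_R^k \in \TTT_R^{\otimes k}$ for every $k$. Applying this to $\YYYY=(\Phi\after\sym)^{-1}(\TTTT)$ yields $\YYYY \cap \TT_R^m = \phi_m^{-1}(\TTTT)$, which lies in $\TTT_R^{\otimes m}$ because $\phi_m$ is measurable by hypothesis. Hence $(\Phi\after\sym)^{-1}(\TTTT)$ meets each layer in a measurable set, so it lies in $\bigoplus_{k=0}^{\infty}\TTT_R^{\otimes k}$, and therefore $\Phi^{-1}(\TTTT)\in\Count(\TT_R)$. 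Since $\TTTT$ was arbitrary, $\Phi$ is $\big(\Count(\TT_R),\TTT_Q\big)$-measurable.

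I do not expect a serious obstacle here: once the symmetrization characterization of $\Count(\TT_R)$ is invoked, the argument is largely bookkeeping, the real content being the identification of $\Phi\after\sym$ with the disjoint union of the $\phi_m$ together with the layer-wise membership criterion for $\bigoplus_{k=0}^{\infty}\TTT_R^{\otimes k}$. The one point that warrants care is keeping the two $\sigma$-algebras straight---measurability is tested against $\TTT_Q$ on the output and against $\Count(\TT_R)$ on the input---and observing that the symmetry hypothesis enters only through the well-definedness of $\Phi$, not through the measurability computation itself.
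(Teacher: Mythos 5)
Your proof is correct and follows essentially the same route as the paper's: decompose by cardinality layers, identify the restriction of $\Phi \after \sym$ to $\TT_R^m$ with $\phi_m$, and conclude via the symmetrization characterization of $\Count( \TT_R )$ from \cref{fac:symcounting}. The only cosmetic difference is that you test membership in $\Count( \TT_R )$ by taking $\sym$-preimages, i.\,e.\ directly through the definition in \labelcref{eq:symsigmaalg}, whereas the paper pushes the symmetric set $\phi_m^{-1}( \TTTT )$ forward along $\sym$ using \cref{rem:symborel}; your variant has the minor advantage of making explicit that the symmetry of $\phi_m$ is needed only for $\Phi$ to be well defined, not for the measurability computation.
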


\begin{proof}
	It suffices to show that the restriction $\Phi_m$ of $\Phi$ to 
	$\powerbag_m( \TT_R )$, the bags of cardinality $m$ over $\TT_R$, is
	measurable for all $m \in \NN$. Note that $\powerbag_0( \TT_R )$ only
	contains the empty bag, and $\TT_R^0$ only contains the empty tuple. That
	is, the statement is trivial for $m=0$. Thus, let $m \in \NN_+$. Since
	$\phi_m$ is $( \TTT_R^{\otimes m}, \TTT_Q )$-measurable and symmetric, for
	all $\TTTT \in \TTT_Q$ it holds that $\phi_m^{-1}( \TTTT)$ is a symmetric
	set in $\TTT_R^{\otimes m} \subseteq \bigoplus_{ m = 0 }^{ \infty } \TTT_R^{
	\otimes m }$. From \cref{rem:symborel,fac:symcounting} it follows that
	$\Phi_m^{-1}( \TTTT ) \in \Count( \TT_R )$.
\end{proof}

\begin{exa}\label[exa]{ex:common_aggregators}
	All the aggregators in \cref{tab:common_aggregators} yield measurable
	aggregation queries. The associated functions $\phi_m$ are all continuous
	under suitable choices of attribute domains. That is, for example, if $+$ in 
	the definition of $\SUM$ is the addition of real numbers.
\end{exa}

What we have introduced so far is only sufficient to express the aggregation
over all tuples of a relation at once. Usually, we want to perform aggregation
separately for parts of the data, as in our motivating example of returning the
average temperature per room. For this, we need to \emph{group} tuples before
aggregating values. Suppose we want to group a relation $R$ by attributes $A_1,
\dots, A_k$ and perform the aggregation over attribute $A$, separately for
every occurring value of the attributes $A_1, \dots, A_k$. Without loss of 
generality, we assume that $\sort( R ) = ( A_1, \dots, A_k, A )$. Then what we
described is an \emph{group-by aggregate query} $\bagagg{ A_1, \dots, A_k,
\Phi( A ) }{ R }$ and is defined by
\begin{equation}\label{eq:groupagg}
	\card{ Q(D) }_{ R_Q( a_1, \dots, a_k,b ) } 
	\coloneqq
	\begin{cases}
		1 & \text{if } b = \Phi( \AAAA_{a_1,\dots,a_k}(D) )\text{ and}\\
		0 & \text{otherwise.}
	\end{cases}
\end{equation}
for all $D \in \DB$ where $\AAAA(D)$ is the bag of values $a$ such that
$R(a_1,\dots,a_k,a) \in D$ with
\[
	\card{ \AAAA_{a_1,\dots,a_k}(D) }_a = \card{ D }_{ R( a_1,\dots,a_k,a ) }\text.
\]
Essentially, $\AAAA_{a_1,\dots,a_k}(D)$ is obtained by selecting those tuples
where the first $k$ attributes have values $a_1,\dots,a_k$, and then projecting
to the last attribute. Both kinds of aggregate queries (without, and with
grouping) are shown in \cref{tab:aggregation}.

\begin{table}[H]
	\centering
	\caption{Aggregate Queries.}\label{tab:aggregation}
	\begin{tabular}{ ll }
		\toprule%
		\textbf{Query}
		& \textbf{Semantics}	(for all $D \in \DB$, $t, (a_1,\dots,a_k,b) \in 
		\TT_Q$)									
		\\%
		\midrule%
		$Q = \bagagg{ \Phi }{ R }$				
		& $\card{ Q(D) }_{ R_Q(t) } \coloneqq 1$ if $t = \Phi\big( R(D) \big)$,
		and $0$ otherwise
		\\%
		$Q = \bagagg{(A_1,\dots,A_k,\Phi(A))}{R}$
		& $\card{ Q(D) }_{ R_Q(a_1,\dots,a_k,b) } \coloneqq 1 \text{ if } b =
		\Phi( \AAAA_{a_1,\dots,a_k}(D) ) \text{ and } 0 \text{ otherwise}$
		\\%
		\bottomrule%
	\end{tabular}
\end{table}

\begin{thm}\label[thm]{thm:groupagg}
	If $\Phi \from \powerbag_{ \fin }( \AA ) \to \BB$ is a $( \Count( \AA ), 
	\BBB )$-measurable aggregator, then the query $\bagagg{ A_1, \dots, A_k,
	\Phi( A ) }{ R }$ is a measurable query.
\end{thm}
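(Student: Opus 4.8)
The plan is to adapt the three-step pattern used for the cross product (\cref{lem:coarse,lem:FFFeps,lem:Fnepsilon}), with the within-group evaluation of $\Phi$---measurable by \cref{prop:aggregation}---taking over the role played there by products of counts. Write $\GG \coloneqq \AA_1 \times \dots \times \AA_k$ for the space of group keys, fix a compatible Polish metric on the (Polish) space $\GG$ and a countable dense set $\GG^* \subseteq \GG$. Then $\TT_Q = \GG \times \BB$, and, since the output multiplicities in \eqref{eq:groupagg} are always $0$ or $1$, for $\FFFF = R_Q(\TTTT)$ with $\TTTT \in \TTT_Q$ the quantity $\card{Q(D)}_{\FFFF}$ simply counts the keys $g$ occurring in $D$ for which $\big( g, \Phi(\AAAA_g(D)) \big) \in \TTTT$, where $\AAAA_g(D)$ is the bag of $A$-values of the $R$-tuples of $D$ with key $g$. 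The core difficulty is that this value $\Phi(\AAAA_g(D))$ depends on the whole bag of a group while the keys $g$ range over the possibly uncountable $\GG$; the remedy is to isolate one group at a time inside a small ball of $\GG$ and evaluate the ungrouped aggregate locally.

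First I would reduce to well-separated keys. Call $D$ \emph{group-$\epsilon$-coarse} if any two distinct keys occurring in $D$ are at distance $> \epsilon$ in $\GG$. Each $D$ is finite, hence group-$\epsilon$-coarse for some $\epsilon \in \QQ_+$, so these sets cover $\DB$. That the set of group-$\epsilon$-coarse instances lies in $\DDB$ is proved exactly as in \cref{lem:coarse}, using the counts $\card{D}_{R(B_r(g^*) \times \AA)}$ of $R$-tuples whose key lies in a ball $B_r(g^*)$ (with $g^* \in \GG^*$) in place of the single-tuple counts there; these are measurable because $B_r(g^*) \times \AA \in \TTT_R$. Exactly as in \cref{cor:coarsepreimage}, it then suffices to prove, for each fixed $\epsilon$, that $\set{ D \with \card{Q(D)}_{\FFFF} = n }$ intersected with the group-$\epsilon$-coarse instances lies in $\DDB$.

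Second, fix $\epsilon$ and a group-$\epsilon$-coarse $D$. For $g^* \in \GG^*$ and rational $r < \tfrac\epsilon3$ the ball $B_r(g^*)$ contains at most one key of $D$, so it carries at most one output fact. Its local aggregate value $\Phi_{g^*,r}(D)$, obtained by applying $\Phi$ to the bag of $A$-values of those $R$-tuples of $D$ whose key lies in $B_r(g^*)$, is computed by the composite query $\bagagg{\Phi}{\bagproj{A}{\bagsel{(A_1,\dots,A_k)\in B_r(g^*)}{R}}}$; since $B_r(g^*)$ is Borel in $\GG$, this is measurable by \cref{lem:selection,lem:projection,prop:aggregation} together with \cref{fac:measurablebasics}, so $\set{ D \with \Phi_{g^*,r}(D) \in \CCCC } \in \DDB$ for all $\CCCC \in \BBB$. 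Now I would run the good-sets principle as in \cref{lem:FFFeps}: let $\FFF_\epsilon$ consist of those $\TTTT \in \TTT_Q$ for which, for all $g^* \in \GG^*$ and rational $r < \tfrac\epsilon3$, the event that $B_r(g^*)$ contains a key $g$ whose output fact $R_Q\big(g,\Phi_{g^*,r}(D)\big)$ lies in $R_Q(\TTTT)$ is in $\DDB$. Rectangles $\TTTT = \GGGG \times \CCCC$ are good sets, since ``$g \in \GGGG$'' is witnessed by $\card{D}_{R((B_r(g^*)\cap\GGGG)\times\AA)} \geq 1$ and ``$b \in \CCCC$'' by $\Phi_{g^*,r}(D) \in \CCCC$. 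Because each window carries at most one output fact, $\FFF_\epsilon$ is closed under complementation and countable intersection (relative to the measurable event that the window is nonempty), hence is a $\sigma$-algebra; containing the rectangles that generate $\TTT_Q$, it equals $\TTT_Q$.

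Finally I would assemble as in \cref{lem:Fnepsilon}. For group-$\epsilon$-coarse $D$, $\card{Q(D)}_{\FFFF} \geq n$ holds if and only if there are $\ell \geq n$ distinct keys with dense approximations $g^*_1, \dots, g^*_\ell \in \GG^*$ that are pairwise at distance $> \tfrac\epsilon3$ (so that for $r < \tfrac\epsilon6$ the balls $B_r(g^*_j)$ are pairwise disjoint and isolate distinct keys), each ball carrying an output fact in $\FFFF$; density and group-coarseness give this equivalence, and restricting the radii and separation thresholds to $\QQ$ writes the set as a countable union of intersections of the per-window events of the previous step, hence an element of $\DDB$. Together with \cref{cor:coarsepreimage} this establishes measurability of $\bagagg{A_1,\dots,A_k,\Phi(A)}{R}$. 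I expect the $\sigma$-algebra closure of $\FFF_\epsilon$ to be the main obstacle: it is precisely the restriction to balls meeting a single group that forces the relevant count to be $0$ or $1$, so that complementation and countable intersection collapse to Boolean bookkeeping about membership of one output fact---exactly the point at which an uncountable, non-rectangular $\TTTT$ would otherwise break the argument, as \cref{ex:nonmeas} warns.
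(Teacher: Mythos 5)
Your proof is correct and follows essentially the same route as the paper's: reduce to instances whose group keys are $\epsilon$-coarse (measurable via the projection query and \cref{lem:coarse}), isolate one group at a time inside a small ball around a point of the countable dense set of the key space, evaluate the aggregate locally by composing selection, projection and the ungrouped aggregation of \cref{prop:aggregation}, and reassemble the event $\card{Q(D)}_{\FFFF}\geq n$ as countable unions and intersections over rational radii and dense approximations. The only difference is that the paper packages the local key/value extraction as the single $\mathsf{BALG}^1$ query
$Q_{t,r} = \bagproj{ A_1, \dots, A_k }{ \bagsel{ (A_1,\dots,A_k) \in B_r(t) }{ R } } \times \bagagg{\Phi}{ \bagproj{ A }{ \bagsel{ (A_1,\dots,A_k) \in B_r(t) }{ R } } }$,
whose measurability for \emph{arbitrary} $\FFFF\in\FFF_Q$ is immediate from \cref{thm:balg}, so your hand-rolled good-sets argument for non-rectangular $\TTTT$ is correct but redundant---it re-proves the one-tuple-per-window special case of \cref{lem:FFFeps} already covered by the cross-product lemma.
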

\def\grp{\mathsf{grp}}

\begin{proof}
	Let $\TT_{\grp} = \prod_{ i = 1 }^{ k } \AA_i$. In the following, we fix a
	compatible Polish metric $d_{\grp}$ on $\TT_{\grp}$, and a countable dense
	set $\TT_{\grp}^*$ in $\TT_{\grp}$. For all $t \in \TT_{\grp}$ and all $r >
	0$ define 
	\[
		Q_{t,r} =
			\bagproj[\big]{ A_1, \dots, A_k }{%
				\bagsel{ (A_1, \dots, A_k) \in B_r( t ) }{ R }
			}
			\times
			\bagagg[\big]{\Phi}{
				\bagproj[\big]{ A }{
					\bagsel{ (A_1, \dots, A_k) \in B_r( t ) }{
						R
					}
				}
			}
		\text.
	\]
	Note that $Q_{t,r}$ is measurable by \cref{thm:balg} for all particular
	choices of $t$ and $r$. Let $\widetilde Q = \bagproj{ A_1,\dots,A_k }{R}$
	and
	\[
		\widetilde{\DB}\rvert_{\epsilon} 
		\coloneqq 
		\set{ D \in \DB \with \widetilde Q( D ) \text{ is $\epsilon$-coarse}}
	\]
	Then $\widetilde{ \DB }\rvert_\epsilon \in \DDB$, as
	$\widetilde{\DB}\rvert_\epsilon = \widetilde{Q}^{-1}\big(
	\DB'\rvert_{\epsilon}\big)$ using \cref{lem:projection,lem:coarse} where
	$\DB'$ is the output instance space of $\widetilde Q$. Intuitively,
	$\widetilde{\DB}\rvert_\epsilon$ are the instances that are
	$\epsilon$-coarse in the $(A_1,\dots,A_k)$ attributes of $R$. Similar to
	\cref{cor:coarsepreimage} it suffices to show that $\set{ D \in \widetilde{
	\DB }\rvert_\epsilon \with \card{Q(D)}_{\FFFF} \geq n }$ is measurable for
	all positive $\epsilon$, $\FFFF \in \FFF_Q$ and $n \in \NN_+$.

	We show that for all $D \in \widetilde{\DB}\rvert_\epsilon$, all $\FFFF \in
	\FFF_Q$, and all $n \in \NN_+$ it holds that $\card{ Q(D) }_{ \FFFF } \geq n$
	is equivalent to $D$ satisfying the following condition.

	\begin{cond}\label[cond]{cond:groupaggcond}
		For all positive $r < \tfrac\epsilon3$ there exist $t_{1,r}^*, \dots,
		t_{n,r}^* \in \TT_{\grp}^*$ with $d_{\grp}( t_{i,r}^*, t_{j,r}^* ) >
		\tfrac\epsilon3$ such that $\card{ Q_{t_{i,r}^*,r}(D) }_{\FFFF} \geq 1$
		for all $i =1, \dots, n$.
	\end{cond}

	\begin{description}
		\item[$\Rightarrow$] Let $D \in \widetilde{\DB}\rvert_\epsilon$ such that 
			$\card{ Q(D) }_{ \FFFF } \geq n$. Then there exist $R_Q( t_1, b_1 ),
			\dots, R_Q( t_n, b_n ) \in \FFFF$ with $\card{ Q(D) }_{ R_Q( t_i, b_i
			) } = 1$ for all $i = 1, \dots, n$ such that $d_{\grp}( t_i,t_j ) >
			\epsilon$ for all $i\neq j$. Since $\TT_{\grp}^*$ is dense in
			$\TT_{\grp}$, for all positive $r$ (in particular $r <
			\tfrac\epsilon3$) there exist $t_{i,r}^*$ with $d_{\grp}( t_i,
			t_{i,r}^* ) < r$ for all $i = 1, \dots, n$. Since $d_{\grp}( t_i, t_j)
			> \epsilon$, it follows that $d_{\grp}( t_{i,r}^*, t_{j,r}^* ) >
			\tfrac\epsilon3$.  In particular, every $B_r(t_{i,r}^*)$ contains no
			tuple among $t_1, \dots, t_n$ other than $t_i$. Thus,
			\[
				\card{ Q_{t_{i,r}^*,r}(D) }_{ R_Q( t_i,b_i ) } = 1 \text,
			\]
			i.\,e.\ $\card{ Q_{ t_{i,r}^*, r }(D) }_{ \FFFF } \geq 1$ for all $i =
			1, \dots, n$.
		\item[$\Leftarrow$] Suppose \cref{cond:groupaggcond} holds. Since $D$ is
			finite, the tuples $t_{i,r}^*$ in \cref{cond:groupaggcond} converge to 
			tuples $t_i$ with
			\begin{equation}\label{eq:condrtl}
				\card{ Q_{t_{i,r}^*,r}(D) }_{ R_Q( t_i,b_i ) } \geq 1
			\end{equation}
			for some $b_i \in \AA$ where $R_Q(t_i,b_i) \in \FFFF$. Because $D \in 
			\widetilde{\DB}\rvert_\epsilon$, and $d_{\grp}( t_{i,r}^*, t_{j,r}^* )
			> \tfrac\epsilon3$, the tuples $t_1, \dots, t_n$ are pairwise
			distinct. Thus, \labelcref{eq:condrtl} implies $\card{Q(D)}_{ \FFFF }
			\geq \sum_{ i = 1 }^{ n } \card{ Q_{t_{i,r}^*,r}(D) }_{ \FFFF } \geq
			n$.
	\end{description}

	\bigskip

	The equivalence still holds, when $r$ is additionally required to be 
	rational. Thus,
	\[
		\set{ 
			D \in \widetilde{\DB}\rvert_\epsilon \with 
			\card{ Q(D) }_{ \FFFF } \geq n 
		} 
		= 
		\bigcap_r \bigcup_{t_{1,r}^*,\dots,t_{n,r}^*} \bigcap_{i} 
		\set{ 
			D \in \widetilde{\DB}\rvert_\epsilon \with 
			\card{ Q_{t_{i,r}^*,r}(D) }_{ \FFFF } \geq 1
		}\text.
	\]
	with the indices ranging as in \cref{cond:groupaggcond} (and $r$ rational).
\end{proof}

By the observation of \cref{ex:common_aggregators}, \cref{thm:groupagg} applies 
to the operators from \cref{tab:common_aggregators}.

\begin{cor}
	The query $\bagagg{ A_1,\dots,A_k,\Phi( A ) }{ R }$ is measurable for all 
	the aggregate operators $\Phi \in \set{ \CNT, \CNTd, \SUM, \MIN, \MAX, \AVG
	}$.
	\qed
\end{cor}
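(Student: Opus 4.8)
The plan is to obtain the corollary as an immediate specialization of \cref{thm:groupagg}. That theorem already reduces measurability of the group-by query $\bagagg{ A_1, \dots, A_k, \Phi( A ) }{ R }$ to the single hypothesis that $\Phi$ is a $\big( \Count( \AA ), \BBB \big)$-measurable aggregator on the domain $\AA$ of the aggregation attribute $A$. So it suffices to verify this hypothesis separately for each of the six operators $\CNT, \CNTd, \SUM, \MIN, \MAX, \AVG$ listed in \cref{tab:common_aggregators}.

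For that verification I would invoke the lemma immediately preceding \cref{ex:common_aggregators}, which says that an aggregator $\Phi$ is $\big( \Count( \TT_R ), \TTT_Q \big)$-measurable as soon as, for every $m$, the induced map $\phi_m \from \TT_R^m \to \TT_Q$ computing $\Phi$ on bags of cardinality $m$ is symmetric and measurable. Each of the six operators is manifestly symmetric in its arguments, so the only thing left to check is measurability of the $\phi_m$. For $\CNT$ the map $\phi_m \equiv m$ is constant; for $\SUM$, $\MIN$, $\MAX$, and $\AVG$ the maps $a_1 + \dots + a_m$, $\min$, $\max$, and $(a_1 + \dots + a_m)/m$ are continuous on the real domain and hence measurable by \cref{fac:topomeas}. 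This is exactly the observation recorded in \cref{ex:common_aggregators}, so for these five operators the hypothesis of \cref{thm:groupagg} is met.

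The one delicate case, and the only point I expect to require genuine care, is $\CNTd$: its map $\phi_m( a_1, \dots, a_m ) = \card{ \set{ a_1, \dots, a_m } }$ is \emph{not} continuous on a domain with accumulation points such as $\RR$, so the continuity argument does not apply verbatim. Here I would establish measurability directly: $\phi_m$ takes only finitely many values, and each of its level sets is a finite Boolean combination of the diagonal sets $\set{ (a_1, \dots, a_m) \with a_i = a_j }$, which are closed, hence measurable, in the standard Borel space $\TT_R^m$; together with symmetry this gives what the preceding lemma needs. (On the discrete attribute domains on which distinct counting is naturally used, $\phi_m$ is in fact continuous, matching the ``suitable choice of attribute domains'' alluded to in \cref{ex:common_aggregators}.) Once all six $\phi_m$ are seen to be symmetric and measurable, the preceding lemma makes every such $\Phi$ a measurable aggregator, and \cref{thm:groupagg} then delivers measurability of the corresponding group-by query, which is the assertion of the corollary.
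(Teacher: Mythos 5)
Your proposal is correct and follows exactly the paper's route: the paper derives the corollary in one line by combining \cref{thm:groupagg} with the observation of \cref{ex:common_aggregators} that the associated $\phi_m$ are symmetric and (under suitable domains) continuous, hence measurable via the preceding lemma. Your explicit treatment of $\CNTd$ via level sets built from the closed diagonals $\set{(a_1,\dots,a_m) \with a_i = a_j}$ is a welcome refinement of a point the paper only gestures at with the phrase \enquote{under suitable choices of attribute domains}.
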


\section{Datalog}\label{sec:dl}

The measurability results of the previous sections also allow us to say
something about fixpoint queries. The key observation is the following lemma,
which follows from \cref{fac:topomeas}\labelcref{itm:topomeas2}.

\begin{lem}
	Let $( Q_i )_{ i \in \NN }$ be a family of measurable queries such that
	$Q( D ) \coloneqq \bigcup_{ i = 0 }^{ \infty } Q_i( D )$ is finite for all
	$D \in \DD$. Then $Q$ is a measurable query.
\end{lem}

\begin{proof}
	For all $n \in \NN$, let $Q^{ ( n ) } \coloneqq \bigcup_{ i = 0 }^{ n } 
	Q_i$. As a finite (maximum-)union of measurable queries, $Q^{ ( n ) }$ is
	measurable. As $Q(D)$ is finite for all $D$, it holds that $Q(D) = \lim_{n
	\to \infty} Q^{(n)}( D )$, and $Q$ is the pointwise limit of the functions
	$Q^{ (n) }$. Thus, $Q = \lim_{ n \to \infty } Q^{ ( n ) }$ is measurable as
	well.
\end{proof}

We omit the definition of Datalog and related query languages. For simplicity,
we consider set PDBs, and Datalog with sets semantics. Recall that if $Q$ is a
Datalog query, then $Q$ can be written as a countable union of conjunctive
queries~\cite{Abiteboul+1995}.\footnote{We note that a similar statement can be
  made for the bounded fixpoints semantics over bags featured
  in~\cite{ColbyLibkin1997}.} Thus, combining our measurability results with the
above lemma, we obtain the following.

\begin{cor}\label[lem]{lem:countable-union}
	Every Datalog query is measurable.  \qed
\end{cor}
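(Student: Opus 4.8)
The plan is to reduce the statement to the two ingredients already assembled just above it: first, that every conjunctive query is measurable (a direct consequence of \cref{thm:balg}), and second, that a countable union of measurable queries is itself measurable whenever that union is guaranteed to be finite on every input instance (the preceding lemma). So the proof is essentially a bookkeeping exercise verifying that Datalog fits this template, with one genuine point to check.

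First I would invoke the standard normal form for Datalog under set semantics: every Datalog query $Q$ is equivalent to a countable union $Q = \bigcup_{i=0}^{\infty} Q_i$ of conjunctive queries, obtained by unfolding the recursive rules to all finite depths~\cite{Abiteboul+1995}. Each stage $Q_i$ is a conjunctive query and hence expressible in the relational algebra using only selection with equality predicates, projection, cross product, and---since we work under set semantics---deduplication. By \cref{thm:balg} (together with the fact that standard PDBs support set semantics via the measurable deduplication operator), each $Q_i$ is therefore a measurable query.

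Next I would verify the finiteness hypothesis demanded by the union lemma, which is really the only substantive step. Here I use the \emph{active-domain} property of Datalog: on a fixed finite input instance $D$, no rule can ever derive a value outside the finite set of constants occurring in $D$ together with those appearing in $Q$. Consequently every output fact of $Q(D)$ is built from this finite active domain and has the fixed output arity, so $Q(D)$ is contained in a finite set and is itself finite; equivalently, the ascending chain $\bigcup_{i=0}^{n} Q_i(D)$ stabilizes after finitely many stages. This is exactly the premise ``$Q(D)$ is finite for all $D$'' required by the preceding lemma, and applying that lemma to the family $(Q_i)_{i\in\NN}$ yields that $Q$ is measurable.

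I expect the subtle point to be the finiteness check rather than the measurability of the individual stages, since the latter is immediate from \cref{thm:balg}. The finiteness is purely a semantic property of Datalog (closure under the active domain) and does not interact with the measure-theoretic machinery at all; it is precisely what keeps us inside the realm of \emph{finite} point processes and legitimizes the pointwise-limit argument---based on \cref{fac:topomeas}\labelcref{itm:topomeas2}---that underlies the union lemma. A careful write-up would only need to make explicit that the chosen enumeration $(Q_i)$ is monotone in its union, so that $Q^{(n)} = \bigcup_{i=0}^{n} Q_i$ converges pointwise to $Q$.
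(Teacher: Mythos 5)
Your proposal is correct and follows essentially the same route as the paper: decompose the Datalog query into a countable union of conjunctive queries, invoke \cref{thm:balg} for the measurability of each stage, and apply the preceding lemma on countable unions. The only difference is that you explicitly verify the finiteness hypothesis via the active-domain property, a step the paper leaves implicit; this is a welcome addition rather than a deviation.
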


In fact, our argumentation can be applied to all types of queries with
operators that are based on countable iterative (or inductive, inflationary, or
fixed-point) processes. All we need is that the iterative mechanism forms a 
converging sequence of measurable queries. For partial Datalog / fixed-point
logic, we cannot directly use \cref{lem:countable-union}, but a slightly more
complicated argument still based on countable limits works there as well.

\section{Beyond Possible Worlds Semantics}\label{sec:beyond}

In \cref{ssec:pws}, we introduced the notion of queries or views on 
probabilistic databases solely based on the existing notion of queries and
views for traditional databases, which we referred to as the \emph{possible
worlds semantics} of queries or views. As the title of this section suggests,
we explicitly leave this setup.\footnote{Note that we \emph{do not} abandon our
definition of PDBs as probability spaces over possible worlds. It is only that
we broaden the notion of views to also incorporate mappings of PDBs that are
not defined on a \enquote{per possible world} basis.} Before, we have 
introduced views as functions mapping database instances to database instances
and adopted a semantics based on possible worlds. Now, we want to discuss
\emph{PDB views} as functions that map probabilistic databases to probabilistic
databases, for which no such semantics (to be precise, a definition as in in 
the shape of \cref{ssec:pws}) exists. Such \enquote{views} naturally
arise in a variety of computational problems in probabilistic databases. For
example, consider the following problems or \enquote{queries}:
\begin{itemize}
	\item \emph{probabilistic threshold queries} that intuitively return
		a deterministic table containing only those facts which have a 
		marginal probability over some specified threshold~\cite{Qi+2010};
	\item \emph{probabilistic top-$k$ queries} that intuitively return a
		deterministic table containing the $k$ most probable facts~\cite{Re+2007}; 
	\item \emph{probabilistic skyline queries}~\cite{Pei+2007} that consider
		how different instances compare to each other with respect to
		some notion of \emph{dominance}; and
	\item \emph{conditioning}~\cite{KochOlteanu2008} the
		probabilistic database to some event.
\end{itemize}
Note that the way we informally explained the first two queries above is only
sensible if the space of facts is discrete. In a continuous setting, we
interpret these queries with respect to a suitable countable partition of the
fact space into measurable sets.

More such \enquote{queries} as the above can be found
in~\cite{Aggarwal2009,Wang+2013}.  These queries (or views) still take as input a
PDB and produce some output, but differ from the ones we have seen so far in
that they have no reasonable semantics on \emph{single} instances (i.\,e.\ per
possible world). This can be, for example, because they explicitly take
probabilities into account. The goal of this section is to interpret other
kinds of problems on PDBs abstractly as functions on probability spaces in
order fit them into a unified framework.  Developing such an understanding has
already been motivated in~\cite{WandersVanKeulen2015} as yielding potential
insight into common properties of the corresponding problems.

We now present our formal classification of views that are directly defined on
the probability space level of a PDB (as opposed to the instance level as in
\cref{ssec:pws}). Let $\PDBs_{\tau}$ denote the class of probabilistic
databases of schema $\tau$. Note that all PDBs in $\PDBs_{\tau}$ have the same
instance measurable space $(\DB,\DDB)$. Queries and, more generally, views of
input schema $\tau$ and output schema $\tau'$ are now mappings
$V\colon\PDBs_{\tau}\to\PDBs_{\tau'}$.

\pagebreak 
\newcommand*{\VT}[1]{\bm{\mathsf{V}}\sb{\text{\normalfont\uppercase\expandafter{\romannumeral #1}}}}

\begin{defi}[View Types]\leavevmode\label[defi]{def:vt}
	\begin{enumerate}
		\item Every view is \emph{type 1}.
		\item A view $V$ is \emph{type 2}, or \emph{pointwise local}, if (and 
			only if) for every fixed input PDB $\D = ( \DB, \DDB, P )$ there
			exists a measurable function $q_{ \D } \from \DB \to \DB_V$ such that
			$P_V = P \after q_{ \D }^{ -1 }$.\label[itm]{itm:vt2}
		\item A view $V$ is \emph{type 3}, or \emph{uniformly local}, if there
			exists a measurable function $q \from \DB \to \DB_V$ such that for
			every input PDB $\D = ( \DB, \DDB, P )$, it holds that $P_V = P
			\after q^{-1}$.\label[itm]{itm:vt3}
		\item A view $V$ is \emph{type 4}, or \emph{pointwise}, if (and only if)
			there exists measurable function $q \from \FFFF_q \to \FF_Q$ such
			that $V$ is composed of functions of the shape of $Q$ as in 
			\cref{thm:mapping} (where $Q$ depends on $q$).
	\end{enumerate}
	We let $\VT1$, $\VT2$, $\VT3$ and $\VT4$ denote the classes of views of type
	1 throughout 4.
\end{defi}

\begin{rem}
	Let us shed some more light on these classes and their names:
	\begin{enumerate}
		\item Class $\VT1$ does not require further explanation, as it contains
			\emph{every} view.

		\item The class $\VT2$ is described via functions that may depend on the
			measurable structure of the input PDB. Specifically, these views are 
			functions that directly transform input to output PDBs. We dub this
			\enquote{pointwise local}, because this function is applied per
			instance (hence, local) but the function itself depends on the
			concrete PDB (hence, is only pointwise local with respect to
			probability spaces). In general, type 2 views may take the probability
			space level into account. For example, probabilistic threshold or
			probabilistic top-$k$ queries can be viewed as views of this class:
			they transform any input PDB to a single database instance (that is, a
			PDB with only one possible world of probability $1$) containing the
			respective output tuples along with their probability in a separate
			attribute.  Another example is conditioning a PDB, as the probability
			measure of a conditioned PDB involves a normalization term that
			depends on the probability mass of an event.

		\item The class $\VT3$ captures the lifting of typical database queries
			to PDBs under the possible worlds semantics. Hence, there is a
			\emph{single} (measurable) function that is applied \enquote{locally}
			on every database instance. The term \enquote{uniformly} expresses 
			that it does not depend on the concrete PDB. In general, type 3 views
			only take the instance level into account. All the views that we
			investigated in \cref{sec:ra,sec:agg,sec:dl} fall into this category.

		\item The class $\VT4$ is the class of views corresponding to the
			measurability criterion of the Mapping Theorem. That is, there is a
			(measurable) function that is applied \enquote{pointwise}%
			\footnote{Note that the \enquote{pointwise} in the term
				\enquote{pointwise local} from $\VT2$ refers to probability spaces
				as \enquote{points}, whereas the term \enquote{pointwise} alone, as
				here in the definition of $\VT4$ refers to facts as
			\enquote{points}}
			on every fact. This transformation naturally lifts to instances and,
			thus, to PDBs. In general, type 4 views only take the fact level into
			account.  We have seen an example of such a view in
			\cref{exa:running3}.
	\end{enumerate}
\end{rem}

\begin{exa}\label[exa]{exa:running4}
	Recall our running example of temperature measurements. We use this to 
	introduce an example of a view that performs \enquote{out-of-world
	aggregation}~\cite{WandersVanKeulen2015} (and that, in particular, is of
	type $2$, but not of type $3$). The relation $\REL{TempRec}$ stores triples
	of room numbers ($\ATT{RoomNo}$), recording dates ($\ATT{Date}$) and
	recorded temperatures ($\ATT{Temp}$). Assume that the pair
	$(\ATT{RoomNo},\ATT{Date})$ acts as a key, so that with probability $1$
	there is at most one temperature recording per pair. Moreover, assume that
	the PDB is modelled with independent tuples $(r,d,\theta)$ where $\theta$ is
	Normally distributed per pair $(r,d)$, but such that the existence of an
	record belonging to $(r,d)$ is subject to uncertainty. That is, we have a mix
	of attribute- and tuple-level uncertainty. A possible representation of such
	a PDB is shown in \cref{fig:temprecexp}. Therein, we have two possible
	tuples (one recording for room \STR{4108} and one recording for room
	\STR{4108a}), but the value of the temperature recording is specified as a
	Normally distributed random variable, parameterized with its mean and
	variance. We assume the existence of tuples with room \STR{4108} or
	\STR{4108a} to be independent from one another, and that these events carry
	probability $0.4$ and $0.8$, respectively.

	\begin{figure}[H]
		\begin{tcbraster}[raster columns=2, raster force size=false, raster
			column skip=1cm, raster halign=center, nobeforeafter]
		\begin{tcolorbox}[hbox,fancy table, tabular={c c c | c},
			title={\strut$\REL{TempRec}$},before upper app={\rowcolor{llgray}}]
			\ATT{RoomNo}	& \ATT{Date}				& \ATT{Temp [\textdegree{}C]} & \ATT{Prob}\\\hline\hline
			\STR{4108}		& \STR{2021-07-12}		& $\N(20.5,0.1)$ & $0.4$\\\hline
			\STR{4108a}		& \STR{2021-07-12}		& $\N(21.0,0.1)$ & $0.8$\\
		\end{tcolorbox}
		\begin{tcolorbox}[hbox,fancy table, tabular={c c c | c},
			title={\strut$\REL{View}$},before upper app={\rowcolor{llgray}}]
			\ATT{RoomNo}	& \ATT{Date}				& \ATT{Temp [\textdegree{}C]} & \ATT{Prob}\\\hline\hline
			\STR{4108}		& \STR{2021-07-12}		& \STR{20.5}								& $0.4$\\\hline
			\STR{4108a}		& \STR{2021-07-12}		& \STR{21.0}								& $0.8$
		\end{tcolorbox}
		\end{tcbraster}
		\caption{Representation of a PDB along with the representation of a view
		that goes beyond the instance level.}\label{fig:temprecexp}
	\end{figure}

	A possible view could ask to return, per room and date, the \emph{expected}
	temperature, under preserving the tuple-level uncertainty. The output of
	this view is shown on the right-hand side of \cref{fig:temprec}. (It is
	here trivially to obtain, because the expected value was already part of
	the parametrization. The idea however also applies to more complicated
	setups.) The possible worlds of the input PDB $\D = (\DB, \DDB, P)$ are
	partitioned into four cases, depending on the presence of a tuple with room
	\STR{4108} and \STR{4108a}, respectively. Our function $q_{\D}$ (which
	depends on $\D$) maps instances $D \in \DB$ as follows: the tuple
	$(\STR{4108},\STR{2021-07-12},\STR{20.5})$ is present in the view result if and
	only if $D$ contains a tuple with room \STR{4108}, and the tuple
	$(\STR{4108},\STR{2021-07-12},\STR{21.0})$ is present in the view result if and
	only if $D$ contains a tuple with room \STR{4108a}. Then the output PDB
	has probability measure $P \after q_{\D}^{-1}$ and, in particular, four
	possible worlds. This is a so-called \emph{tuple-independent} PDB with 
	given marginal probabilities ($0.4$ and $0.8$), hence it can be represented as
	shown on the right-hand side of \cref{fig:temprec}.\footnote{The input PDB
	in this case is not tuple-independent, as any two facts with the same room
	number are mutually exclusive. Instead, this is a so-called (uncountable)
	\emph{block-independent disjoint} PDB with two independent blocks, both
	representing a single random tuple, such that each block specifies the
	(probability distribution over) possible manifestations for this tuple.}
\end{exa}

Let us come back to the relationships between the classes of views we defined
in \cref{def:vt}. Clearly, $\VT1 \supseteq \VT2 \supseteq \VT3 \supseteq
\VT4$. We already informally argued in \cref{sec:mapping} that queries that
depend on multiple tuples per instance are not captured by $\VT3$, i.\,e.\ $\VT3
\not\subseteq \VT4$. We provide two examples to expose that also the remaining
inclusions are strict.

\begin{prop}\label[prop]{pro:1not2}
	There exists a view that is not type 2, i.\,e.\ $\VT2 \subsetneq \VT1$.
\end{prop}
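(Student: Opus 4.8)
The plan is to exploit the rigidity of push-forward measures under point masses: the image of a Dirac measure under \emph{any} measurable map is again a Dirac measure. Indeed, if $P = \delta_{D_0}$ is the point mass at some instance $D_0 \in \DB$ and $q \from \DB \to \DB_V$ is measurable, then for every $\DDDD \in \DDB_V$ we have
\[
	\big(\delta_{D_0} \after q^{-1}\big)(\DDDD)
	= \delta_{D_0}\big(q^{-1}(\DDDD)\big)
	= \mathbf 1\big[\,q(D_0) \in \DDDD\,\big]
	= \delta_{q(D_0)}(\DDDD)\text,
\]
so $\delta_{D_0} \after q^{-1} = \delta_{q(D_0)}$ is itself a point mass. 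Consequently, a type 2 view must map every \emph{deterministic} input PDB (one whose measure is a point mass) to a deterministic output PDB. To refute type 2 it therefore suffices to build a view that turns some deterministic input into a genuinely non-deterministic output.

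First I would fix the input side. I pick any instance $D_0 \in \DB$ and set $\D_0 \coloneqq (\DB, \DDB, \delta_{D_0})$; this is a legitimate member of $\PDBs_\tau$, since a Dirac measure is a probability measure on $(\DB, \DDB)$ regardless of whether $\{D_0\}$ is measurable. Next I would fix the output side, choosing $\tau'$ to be any schema admitting at least one fact, so that $\DB_V = \powerbag_{\fin}(\FF_V)$ contains both the empty instance $\emptyset$ and some nonempty instance $D_2$. I set $P^* \coloneqq \tfrac12 \delta_{\emptyset} + \tfrac12 \delta_{D_2}$, a well-defined probability measure on $\DDB_V$. Crucially, $P^*$ is \emph{not} a point mass: the counting event $\DDDD \coloneqq \counting{\FF_V}{0} = \set{\emptyset} \in \DDB_V$ satisfies $P^*(\DDDD) = \tfrac12 \in (0,1)$.

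With these ingredients I would take $V$ to be the constant view $V(\D) \coloneqq (\DB_V, \DDB_V, P^*)$ for all $\D \in \PDBs_\tau$, which is plainly a well-defined view (hence type 1). Finally I would argue $V \notin \VT2$. Instantiating the definition of type 2 at the particular input $\D_0$, a witnessing measurable map $q_{\D_0}$ would have to satisfy $P^* = \delta_{D_0} \after q_{\D_0}^{-1}$; but by the opening paragraph the right-hand side equals $\delta_{q_{\D_0}(D_0)}$, a point mass, whereas $P^*$ is not a point mass. This contradiction shows no such $q_{\D_0}$ exists, so $V$ is not type 2, establishing $\VT2 \subsetneq \VT1$.

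There is no genuine technical obstacle here; once the point-mass rigidity is isolated, the argument is entirely elementary. The only steps demanding minor care are confirming that the deterministic input $\D_0$ is admissible (which holds for every $D_0$, as Dirac measures are always well-defined) and that the output measure $P^*$ genuinely fails to be a Dirac measure (guaranteed by the existence of the measurable event $\counting{\FF_V}{0}$ of probability strictly between $0$ and $1$, available because the output instance space is nontrivial).
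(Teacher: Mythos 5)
Your argument is correct, and it rests on the same underlying obstruction as the paper's proof: the push-forward measure $P \after q^{-1}$ can only take values that $P$ itself attains on measurable sets, so no type-2 witness can exist once the prescribed output measure attains a value outside that range. Where you differ is in the choice of witnesses. You push the obstruction to its degenerate extreme: a Dirac input (whose push-forward under \emph{any} measurable map is again a Dirac) against a constant view whose fixed output is a genuine two-point mixture, which makes the contradiction as elementary as possible; the only points needing care — that $\delta_{D_0}$ is an admissible input measure and that $\counting{\FF_V}{0}=\set{\emptyset}$ is a measurable event of probability $\tfrac12$ under $P^*$ — you address correctly. The paper instead takes a three-world input with probabilities $\tfrac16,\tfrac12,\tfrac13$ and the view that conditions on $\set{D_1,D_2}$, observing that the conditioned probability $\tfrac14$ is not attained by any event of the input PDB. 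What the paper's choice buys is that the separating view is one of the natural PDB operations motivating \cref{sec:beyond}, so the proposition simultaneously shows that \emph{conditioning} itself fails to be pointwise local; your constant view establishes the set-theoretic separation $\VT2 \subsetneq \VT1$ just as cleanly but carries no such additional information about a practically relevant operation. Either argument is complete.
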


We demonstrate this using a view that \emph{conditions} its input PDB on an 
event.

\begin{proof}
	Let $\D$ be a PDB with three possible worlds $D_1$, $D_2$, and $D_3$, such
	that 
	\[
		P( \set{D_1} ) = \tfrac16\text,\quad
		P( \set{D_2} ) = \tfrac12\text{, and}\quad
		P( \set{D_3} ) = \tfrac13
		\text.
	\]
	Consider the view $V$ that conditions a PDB on the event $\set{ D_1, D_2 }$.
	Note that
	\[
		P_{ V(\D) }( \set{ D_1 } ) = \tfrac16 / \big( \tfrac16 + \tfrac12 \big) 
		= \tfrac14
		\centertext{and}
		P_{ V(\D) }( \set{ D_2 } ) = \tfrac12 / \big( \tfrac16 + \tfrac12 \big) 
		= \tfrac34
		\text.
	\]
	Yet, there is no event $\DDDD$ in $\D$ with the property that $P( \DDDD ) =
	\tfrac14$. Thus, a function $q_{ \D }$ as required in 
	\cref{def:vt}\labelcref{itm:vt2} does not exist and, hence, $V$ is not type 
	2.
\end{proof}

\begin{prop}\label[prop]{pro:2not3}
	There exists a type 2 view that is not type 3, so $\VT3 \subsetneq \VT2$.
\end{prop}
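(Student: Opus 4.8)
The plan is to exploit the single structural difference between $\VT2$ and $\VT3$: a type-2 view may use a per-instance transformation $q_\D$ that is allowed to depend on the input measure $P$, whereas a type-3 view must be realized by one fixed $q$ that works for \emph{every} $P$. So I would look for a view whose output genuinely reads off a quantity of the probability measure---an \enquote{out-of-world} aggregate in the spirit of \cref{exa:running4}---but whose output is a single deterministic instance. Deterministic (point-mass) outputs make the type-2 side automatic, and the measure-dependence of the output is precisely what will block type 3.

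Concretely, I would take an input schema $\tau$ with a single unary relation $R$ over $\RR$ and an output schema $\tau'$ with a single unary relation $R_Q$ over $\RR$. For an input PDB $\D=(\DB,\DDB,P)$ set $p_\D \coloneqq P\big(\set{D\in\DB \with \card{D}_\FF \geq 1}\big)$, the marginal probability that the relation is nonempty, and define $V(\D)$ to be the deterministic PDB placing probability $1$ on the single instance $\bag{R_Q(p_\D)}$. This is well-defined on all of $\PDBs_\tau$: the set $\set{D \with \card{D}_\FF \geq 1}$ is the complement of the counting event $\counting{\FF}{0}$, hence lies in $\DDB$, so $p_\D$ always exists and lies in $[0,1]\subseteq\RR$.

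For the type-2 claim I would simply take $q_\D$ to be the constant map $q_\D(D)=\bag{R_Q(p_\D)}$. Constant maps are measurable, and the push-forward $P\after q_\D^{-1}$ is the Dirac measure at $\bag{R_Q(p_\D)}$, which is exactly $P_V$; thus $V\in\VT2$. For the failure of type 3 I would argue by contradiction: assume a single measurable $q\from\DB\to\DB_V$ satisfies $P\after q^{-1}=P_V$ for every input, and test it on two PDBs sharing the same support. Let $\D_1$ assign $P_1(\bag{})=P_1(\bag{R(0)})=\tfrac12$ and let $\D_2$ assign $P_2(\bag{})=\tfrac14$, $P_2(\bag{R(0)})=\tfrac34$, so that $p_1\coloneqq p_{\D_1}=\tfrac12$ differs from $p_2\coloneqq p_{\D_2}=\tfrac34$. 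Since each $P_i\after q^{-1}$ must be the point mass at $\bag{R_Q(p_i)}$, the map $q$ has to be $P_i$-almost-everywhere equal to the constant $\bag{R_Q(p_i)}$; as both $P_1$ and $P_2$ give positive mass to each of $\bag{}$ and $\bag{R(0)}$, this forces $q(\bag{})=\bag{R_Q(p_1)}$ and $q(\bag{})=\bag{R_Q(p_2)}$ simultaneously, contradicting $p_1\neq p_2$. Hence no uniform $q$ exists and $V\notin\VT3$.

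The only step requiring genuine care is the last one: turning \enquote{$P_i\after q^{-1}$ is a point mass} into \enquote{$q$ equals that constant $P_i$-almost everywhere}, and then choosing the two PDBs so that their supports overlap on a set of positive measure under \emph{both}. A shared single atom is not enough---a fixed $q$ could still send distinct instances to distinct outputs---so I would deliberately use two finitely supported measures on the same two atoms, which makes the almost-everywhere statement literal pointwise equality on those atoms and yields the clean contradiction. Everything else (measurability of $p_\D$, measurability of the constant $q_\D$, and the identification of $P_V$ with a Dirac measure) is routine.
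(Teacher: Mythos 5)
Your proof is correct and follows essentially the same strategy as the paper's: a measure-dependent view with a deterministic single-world output (the paper uses a probabilistic threshold query $V_\alpha$ rather than your probability-reporting view), type 2 witnessed by a constant map, and type 3 refuted by testing a hypothetical uniform $q$ on two input PDBs that share a positive-probability atom, which forces $q$ to take two distinct values at that atom. The care you take in converting \enquote{$P_i\circ q^{-1}$ is a point mass} into pointwise equality at the shared atom is exactly the step the paper's argument also hinges on.
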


We have already discussed such a view in \cref{exa:running4}, but have not
actually shown that it is not of type $3$. We demonstrate the propoosition in
considering another example, namely that of a probabilistic threshold query.

\begin{proof}
	Let $\alpha \in (0, 1]$. We consider a function $V_{\alpha}$ that maps an
	input PDB $\D = ( \DB, \DDB, P )$ to an output PDB $V_{\alpha}( \D )$ such
	that 
	\[
		D_{\D,\alpha} \coloneqq 
		\set[\Big]{ 
			f \in \FF \with 
			P\big( \set{ D \in \DB \with \card{ D }_f > 0 }\big) \geq \alpha
		}
	\]
	has probability $1$ in $V_{\alpha}( \D )$. Note that $D_{\D,\alpha}$ is 
	finite for all $\alpha\in( 0, 1]$, so $V_{\alpha}$ is well-defined. The view
	$V_{\alpha}$ is a probabilistic threshold query with threshold
	$\alpha$. Consider the function $q_{ \D }$ with $D \mapsto D_{\D,\alpha}$
	for all $D \in \DB$. Then $q_{ \D }$ is measurable and witnesses that
	$V_{\alpha}$ is type 2 for all $\alpha \in (0, 1]$.

	Now let $\alpha > \frac12$ and consider the following two PDBs $\D_1$ and 
	$\D_2$, with probability measures $P_1$ and $P_2$, respectively, over 
	distinct facts $f$ and $g$:

	\begin{figure}[H]
		\begin{tabular}{lcc}\toprule
			$D$			& $\set{ f }$	& $\set{ g }$	\\\midrule
			$P_1( D )$	& $\alpha$ 		& $1-\alpha$ \\\bottomrule
		\end{tabular}
		\qquad	
		\begin{tabular}{lcc}\toprule
			$D$			& $\set{ f }$	& $\set{ g }$	\\\midrule
			$P_2( D )$	& $1-\alpha$ 	& $\alpha$ \\\bottomrule
		\end{tabular}
		\caption{Definition of $\D_1$ and $\D_2$ (given as an explicit list of
		the instances together with their probabilities).}
	\end{figure}

 	Then $D_{ \D_1 ,\alpha } = \set{ f }$ and $D_{ \D_2, \alpha } = \set{g}$. 
	Suppose $q$ exists such that $P_V = P \after q$ for every input PDB $\D = (
	\DB, \DDB, P )$. Let $\DDDD_{ f } \coloneqq \set{ D \in \DB_V \with f \in D 
	}$. Then $P_{ \D_1 } \big( q^{-1}( \DDDD_{f} ) \big) = 1$ implies $\set{ f } 
	\in q^{-1}( \DDDD_{ f } )$. On the contrary, 
	$P_{ \D_2 }\big( q^{-1}( \DDDD_{f} ) \big) = 0$, implies $\set{ f } \notin 
	q^{-1} ( \DDDD_{f} )$, a contradiction. Thus, $V_{\alpha}$ is not of type 3.
\end{proof}

Together, we have that $\VT1 \supsetneq \VT2 \supsetneq \VT3 \supsetneq
\VT4$. Before closing this section, let us highlight two key insights of the
arguments used in the examples for \cref{pro:1not2,pro:2not3}. In essence, we
separated type $2$ from type $3$ by arguing about the structure of possible 
worlds without really taking probabilities into account. For the separation of
type $1$ from type $2$, we argued about the structure of the probability 
measure instead. This highlights again the different levels within the
hierarchical structure of a PDB (fact level---instance level---probability
space level) that the views of the different classes operate on.

\section{Conclusions}\label{sec:conclusion}
We introduce the notion of standard PDBs, for which we rigorously describe how 
to construct suitable measurable spaces for infinite probabilistic databases,
completing the picture of~\cite{GroheLindner2019}. The viability of this model
as a general and unifying framework for finite \emph{and infinite} databases is
supported by the well-definedness and compositionality of (typical) query 
semantics. Other kinds of PDB queries embed into the framework as well.

It is currently open, whether, and if so, how more in-depth results on point
processes can be used for probabilistic databases, for example, to perform
open-world query answering. Also, while we focused on relational algebra and
aggregation, the queries of \cref{sec:beyond} deserve a systematic treatment 
in their own right in infinite PDBs.

\section*{Acknowledgments}\label{sec:acknowledgments}
We thank Sam Staton, who initially brought the correspondence of probabilistic
databases and finite point processes to our attention in personal 
correspondence.

This work is funded by \href{https://www.dfg.de}{\emph{Deutsche
Forschungsgemeinschaft} (DFG, German Research Foundation)} under grants
\href{https://gepris.dfg.de/gepris/projekt/412400621}{GR 1492/16-1} and
\href{https://gepris.dfg.de/gepris/projekt/282652900}{GRK 2236 (UnRAVeL)}.

\bibliographystyle{alphaurl}
\newcommand{\etalchar}[1]{$^{#1}$}

\appendix
\section{Topologies and Metric Spaces}\label{app:topo}

The brief account of topological notions is based upon~\cite{Willard2004}.

Let $\XX$ be some set. A \emph{topology} on $\XX$ is a family $\OOO$ of subsets
of $\XX$ that contains both $\emptyset$ and $\XX$, and is closed under
finite intersections and (arbitrary, i.\,e.\ possibly uncountable) unions. The
pair $(\XX,\OOO)$ is called a \emph{topological space}. The sets in $\OOO$ are
called \emph{open} sets, and their complements (with respect to $\XX$) are
called \emph{closed} sets.

For $\XXXX \subseteq \XX$, the \emph{closure} of $\XXXX$ is the intersection of
all closed sets that contain $\XXXX$. A set $\XXXX$ is called \emph{dense} in
$(\XX,\OOO)$ if its closure is $\XX$. The topological space $(\XX,\OOO)$ is
called \emph{separable} if there exists a countable subset of $\XX$ that is 
dense in $(\XX,\OOO)$.

Let $\XX$ be a non-empty set and let $d \from \XX \times \XX \to \RR$ be
a function satisfying
\begin{itemize}
	\item $d(X,Y) \geq 0$ with $d(X,Y) = 0$ if and only if $X = Y$;
	\item $d(X,Y) = d(Y,X)$; and
	\item $d(X,Z) \leq d(X,Y) + d(Y,Z)$ (the \emph{triangle inequality})
\end{itemize}
for all $X,Y,Z \in \XX$. Then $d$ is called a \emph{metric} on $\XX$ and the
pair $(\XX,d)$ is called a \emph{metric space}. For $\epsilon > 0$, the set
\[
	B_{\epsilon}( X ) \coloneqq \set[\big]{ Y \in \XX \with d(X,Y) < \epsilon }
\]
is the \emph{open ball of radius $\epsilon$ around $X$}. The \emph{metric
topology} on $\XX$ with respect to $d$ is the topology $\OOO$ on $\XX$ that is
defined by
\[
	\OOO \coloneqq \set[\big]{ \XXXX \subseteq \XX \with \text{ for all } X \in
	\XXXX \text{ there exists some } \epsilon > 0 \text{ such that }
	B_{\epsilon}(X) \subseteq \XXXX }\text.
\]
A sequence $(X_1, X_2, \dots)$ of elements in a metric space $(\XX,d)$ is
called a \emph{Cauchy sequence} if $\lim_{i \to \infty} d(X_i,X_{i+1}) = 0$.
A sequence $(X_1, X_2, \dots)$ in $(\XX,d)$ \emph{converges} to an element
$X \in \XX$ in $(\XX,d)$ if for all $\epsilon \in (0,\infty)$ there exists 
some $i$, large enough, such that $d( X,X_i ) < \epsilon$. A metric $d$ on
$\XX$ is called \emph{complete}, if all Cauchy sequences over $(\XX,d)$ 
converge to elements of $\XX$. A metric space is called \emph{complete}, if
its metric is complete. In a metric space, a set $\XXXX$ is dense (with respect
to the metric topology) if and only if for all $X \in \XX$ and all $\epsilon >
0$ there exists some $Y \in \XX$ such that $d(X,Y) < \epsilon$.

A topological space $(\XX,\OOO)$ is called \emph{metrizable} if there exists a
metric $d$ on $\XX$ such that $\OOO$ is the corresponding metric topology. If
such $d$ exists that is additionally complete, the space is called
\emph{completely metrizable}. A \emph{Polish space} is a completely metrizable,
separable topological space.

\end{document}